\definecolor{hypercolor}{rgb}{0,0.2,0.7}
\newtheorem{theorem}{Theorem}[section]
\newtheorem{proposition}[theorem]{Proposition}
\newtheorem{lemma}[theorem]{Lemma}
\newtheorem{corollary}[theorem]{Corollary}
\newtheorem{definition}[theorem]{Definition}
\numberwithin{equation}{section}
\newlength{\dinwidth}
\newlength{\dinmargin}
\newcommand{\email}[1]{E-mail: \href{mailto:#1}{#1}}
\DeclareMathOperator{\supp}{supp}
\begin{document}


\bigskip
\LARGE
\noindent
\textbf{Hadamard States for the Vector Potential on Asymptotically Flat Spacetimes}

\bigskip\bigskip

\large
\noindent
\textbf{Claudio Dappiaggi\rlap{$^1$},\, Daniel Siemssen$^2$}

\medskip

\small
\hangindent 1em\noindent\hbox to 1em{\smash{$^1$}}\unskip
Dipartimento di Fisica, Universit\`a di Pavia \& INFN, sezione di Pavia -- Via Bassi 6, I-27100 Pavia, Italy. \email{claudio.dappiaggi@unipv.it}

\hangindent 1em\noindent\hbox to 1em{\smash{$^2$}}\unskip
Dipartimento di Matematica, Universit\`a di Genova, Via Dodecaneso 35, I-16146 Genova, Italy. \email{siemssen@dima.unige.it}

\bigskip

\paragraph{Abstract.}
We develop a quantization scheme for the vector potential on globally hyperbolic spacetimes which realizes it as a locally covariant conformal quantum field theory.
This result allows us to employ on a large class of backgrounds, which are asymptotically flat at null infinity, a bulk-to-boundary correspondence procedure in order to identify for the underlying field algebra a distinguished ground state which is of Hadamard form.

\bigskip



\section{Introduction}

A key feature of all free quantum field theories on Minkowski spacetime is covariance under the action of the underlying isometries, namely the Poincar\'e group.
At a quantum level, Poincar\'e invariance further leads to the identification of a unique quasi-free, pure, ground state \cite{haag:1996zi}.
Yet, as soon as we consider a non-trivial but fixed background, all these peculiarities disappear -- no matter which free quantum field theory we consider -- since the underlying isometry group is, in general, much smaller than the Poincar\'e group.
As a result of this, it is not only impossible to select a preferred ground state on a curved background but one also has to be more careful when deciding whether a certain state is physically sensible or not.
Nevertheless, in the past few years it has become universally accepted that an answer to this question lies in the requirement that a physical state should be of Hadamard form, {\it i.e.}, it should satisfy a specific condition on the singular structure of its two-point function \cite{radzikowski:1996ug,radzikowski:1996ul}.
The main disadvantage of this paradigm is that, while it is extremely appealing from a mere mathematical point of view, it is rather hard to come up with a scheme to construct any such state, unless the underlying background is rather special, such as for example a globally hyperbolic and static spacetime \cite{sahlmann:2000kl}.

A potential way to circumvent this problem has been put forth in the past few years and it calls for employing a so-called bulk-to-boundary reconstruction technique.
According to this method, if a spacetime possesses a distinguished codimension $1$ null submanifold, it is possible to associate to the latter an intrinsic $*$-algebra on which one can try to encode the information of the bulk theory.
More precisely, one has to construct an injective $*$-homomorphism between the algebra of observables of the chosen bulk free field theory and the boundary theory.
This has a twofold advantage: Every state for the boundary algebra induces automatically via the injection map a bulk counterpart whose properties can be studied and it is much easier to explicitly construct states on a three-dimensional null manifold than to construct them directly in the bulk spacetime.
As a matter of fact, this idea has already been successfully applied to construct distinguished Hadamard states for the massless scalar field on asymptotically flat spacetimes \cite{dappiaggi:2006cg}, and for both scalar and Dirac fields on a large class of cosmological backgrounds \cite{dappiaggi:2011ya,dappiaggi:2009th,dappiaggi:2009fk}.

In this paper we will focus instead on the electromagnetic field, which is usually described by a one-form, the vector potential.
Our chief goal will be to show that it is possible to explicitly construct Hadamard states employing the bulk-to-boundary procedure just outlined if the underlying background is asymptotically flat and globally hyperbolic.
It is worth mentioning that, up to now, there are no explicit examples of Hadamard states for the vector potential on a non-trivial background and even their existence has only recently been proven in \cite{fewster:2003or} although under some additional restrictive hypotheses on the topology of the underlying background.
The reason for such deficiency cannot only be traced to the intrinsic difficulty of constructing Hadamard states, but also to the peculiarity of the vector potential.
Indeed, even in Minkowski spacetime and contrary to what happens with the scalar or with the Dirac field, there are problems arising with the standard quantization procedure of the vector potential \cite{strocchi:1967fk,strocchi:1970kx} which are at least partially overcome by employing the so-called Gupta-Bleuer formalism \cite{bleuler:1950fk,gupta:1950fk}.
We will not enter into a detailed analysis of this method, but we stress that, since it relies heavily on Fourier transforms, it is difficult to apply it when the underlying background is curved.
In this paper we will show that using a complete gauge fixing together with the bulk-to-boundary procedure provides an alternative method which is applicable also to a wider class of curved backgrounds.

The paper will be organized as follows: In section \ref{sec:preliminaries}, we will introduce our main notations and conventions and we will recollect the definition of an asymptotically flat spacetime with future timelike infinity.
In particular, we will summarize the main geometric properties of the conformal boundary emphasizing mostly the role of the BMS group and the properties of intrinsicness and universality of null infinity in section \ref{sub:geometry}.
Section \ref{sec:vecpot} focuses on the vector potential on curved backgrounds and section \ref{sub:dynamics} is fully devoted to the analysis of its classical dynamics following mostly the earlier works of \cite{dappiaggi:2011rt,dimock:1992ea,furlani:1995kx,pfenning:2009ti}.
We present our first novel result in section \ref{sub:quant}.
Namely, we will show that the vector potential can be described as a locally covariant conformal quantum field theory,\footnote{The quantization of the field strength tensor in the framework of general local covariance has been discussed in \cite{dappiaggi:2011yq}.} thus extending the result of \cite{dappiaggi:2011rt}.
Furthermore we will define the field algebra of this physical system.
We stress that a full-fledged quantization scheme for the vector potential has been already discussed in \cite{hollands:2008zr} where the BRST approach has been translated in the algebraic language.
We shall not work with this framework in this paper and we will stick to a more traditional approach.
Section \ref{sec:hadamard} is fully focused on the construction of quasi-free Hadamard states for the vector potential on a large class of asymptotically flat spacetimes.
In particular, we will introduce the bulk-to-boundary projection technique in section \ref{sub:bulk2boundary}.
We will prove that it can be applied to the vector potential although the gauge freedom has to be used in order to match the bulk algebra to a subalgebra of the boundary counterpart.
Hence in section \ref{sub:state} we construct explicitly a distinguished state for the boundary algebra proving that it induces a bulk state which is both Hadamard and invariant under bulk isometries.
We continue by showing that the boundary state can be slightly modified in order to define a one-parameter family of states on null infinity which fulfil an exact KMS condition.
The bulk counterpart turns out to enjoy the Hadamard property too.
In section \ref{sec:conclusions} we draw our conclusions.

\section{Preliminaries}\label{sec:preliminaries}

\subsection{Notations and Conventions}\label{sub:notations}

In the following paragraphs we shall introduce the main technical tools that we use throughout the paper.

We shall call \emph{spacetime} $(M, g)$ a four dimensional, Hausdorff, second countable, arcwise connected, smooth manifold $M$ endowed with a smooth Lorentzian metric $g$ of signature $(-,+,+,+)$.
On top of the geometric structure we shall consider $\Omega^p(M, \mathbb{K})$ and $\Omega^p_0(M, \mathbb{K})$, respectively the set of smooth and of smooth and compactly supported $p$-forms on $M$ with values in the field $\mathbb{K} = \mathbb{R}$ or $\mathbb{C}$.
In the case $\mathbb{K} = \mathbb{R}$ we omit its explicit mention since no confusion can arise.

If we require $M$ to be orientable, on top of these spaces one can define two natural operators: the \emph{external derivative} $d$ and the \emph{Hodge dual} $*$. Notice that, whereas $d$ is completely independent from $g$, $*$ is the unique isomorphism $\Omega^p_0(M, \mathbb{K}) \to \Omega^{4-p}_0(M, \mathbb{K})$ built upon the metric such that $\omega \wedge *\, \eta = g^{-1}(\omega, \eta)\, \mu_g$ for all $\omega, \eta \in \Omega^p(M, \mathbb{K})$. Here $g^{-1}(\omega,\eta)$ and $\mu_g$ are natural pairing between $p$-forms and the volume form induced by the metric tensor $g$ respectively.
Furthermore we can introduce a third operator, known as the \emph{codifferential} $\delta \doteq *\, d\, * : \Omega^{p}(M, \mathbb{K}) \to \Omega^{p-1}(M, \mathbb{K})$, which is the formal adjoint of $d$ with respect to the metric $L^2$-pairing $\int (\cdot \wedge * \, \cdot)$.
We can then combine $d$ and $\delta$ to define the (formally self-adjoint) \emph{Laplace-de Rham operator} $\square \doteq d \circ \delta + \delta \circ d$.

In the main body of the paper we will often be interested in (compactly supported) smooth forms which are either closed or coclosed.
Therefore we introduce the following non-standard notation:
\begin{align*}
  \Omega^p_{(0),\delta}(M, \mathbb{K}) &\doteq \big\{ \omega \in \Omega^p_{(0)}(M, \mathbb{K}) \mid \delta \omega = 0 \big\},\\
  \Omega^p_{(0),d}(M, \mathbb{K}) &\doteq \big\{ \omega \in \Omega^p_{(0)}(M, \mathbb{K}) \mid d \omega = 0 \big\}.
\end{align*}

Another ingredient which we shall need in the forthcoming discussion is
$H^p(M)$, the \emph{$p$-th de Rham cohomology group} of $M$ -- see \textit{e.g.} \cite{bott:1982hz} for a definition and a recollection of the main properties.
It is noteworthy that, since these groups are built only out of the external derivative, they are homotopy invariants and, in particular, completely independent from a metric structure.

In the next section, we want to describe the dynamical behaviour of an electromagnetic field without sources on a spacetime $(M, g)$.
Therefore we have to make sure that it is possible to define an initial value problem for Maxwell's equations and, thus, we require $(M, g)$ to be \emph{globally hyperbolic}.
This implies both that $M$ is orientable and time orientable and that there exists $\Sigma$, a closed achronal subset of $M$ whose domain of dependence coincides with the whole manifold.
As proven in \cite{bernal:2003jf,bernal:2006oq}, this suffices to guarantee that $\Sigma$ can be chosen as a three dimensional smooth embedded hypersurface, called \emph{Cauchy surface}, and that $M$ is, moreover, isometric to the smooth product manifold $\mathbb{R}\times\Sigma$.

Since the Laplace-de Rham operator $\Box$ is normally hyperbolic, we can find unique retarded and advanced fundamental solutions $G^\pm : \Omega^p_0(M, \mathbb{K}) \to \Omega^p(M, \mathbb{K})$ such that $\Box \circ G^\pm = G^\pm \circ \Box = \mathrm{id}$ and $\supp G^\pm(f) \subseteq J^\mp(\supp f)$ whose domain of definition may be extended to distributions \cite[Prop. 3.4.2, Cor. 3.4.3]{bar:2007fi}.
Further, we define the \emph{causal propagator} $G \doteq G^+ - G^-$ as the difference between the retarded and advanced fundamental solution.

\subsection{Conformal Transformations}\label{sub:conformal}

Let us consider two arbitrary spacetimes $(M, g)$ and $(\widetilde M, \tilde g)$ of the same dimension.
Following partly \cite[Appendix C \& D]{wald:1984gp}, we say that a map $\psi : M \to \widetilde M$ is a \emph{conformal embedding} if it is a smooth diffeomorphism from $M$ to $\psi(M)$ and if there exists a strictly positive function $\Xi \in C^\infty(M)$ such that $\psi^* \tilde g = \Xi^2 g$.
If $\psi(M)$ coincides with $\widetilde M$, then it is referred to as \emph{conformal isometry}.

In order to understand the effect of such $\psi$ on classical fields, we recall that these are best understood as sections of a suitable vector bundle.
Let us therefore choose a vector bundle $\pi : E \to \widetilde{M}$.
A conformal isometry $\psi : M \to \widetilde{M}$ yields automatically on $M$ the pull-back bundle $\psi^* E$, so that, if we refer to $\Gamma(E)$ as the space of smooth sections, an element $\tilde s \in \Gamma(E)$ induces $s = \psi^* \tilde s \in \Gamma(\psi^* E)$.
Let us now introduce the \emph{conformally weighted pull-back} and \emph{push-forward} of weight $w$ by defining
\begin{subequations}\label{confpullback}
\begin{align}
  \psi^*_{(w)} \doteq \Xi^{-w} \psi^*\\
  \psi_{(w)\;*} \doteq \psi_* \Xi^w
\end{align} .
\end{subequations}
A section $s' \in \Gamma(\psi^* E)$ which is given by $s' = \psi^*_w \tilde s$ is said to be of \emph{conformal weight} $w$.

In the forthcoming discussion of the vector potential, an important role will be played by the behaviour of the operators $d$, $\delta$ and $\Box$ under conformal isometries:

\begin{proposition}\label{conformaltransf}
  Let $(M, g)$, $(\widetilde M, \tilde g)$ be two spacetimes such that $\psi : M \to \widetilde M$ is a conformal isometry with $\psi^* \tilde g = \Xi^2 g$.
  Further, let the codifferential on $(M,g)$ and $(\widetilde M,\tilde g)$ be called $\delta$ and $\tilde\delta$ respectively and let $\tilde\phi\in\Omega^1(\widetilde M)$.
  Then, it holds that
  \begin{subequations}
    \begin{align}
      \psi^* \tilde \delta \tilde \phi
      &= \Xi^{-2} \big( \delta \phi - 2\, g^{-1}(\Upsilon, \phi) \big), \label{delta}\\
      \psi^* \tilde \delta d\, \tilde \phi
      &= \Xi^{-2} \delta d \phi, \label{deltad}\\
      \psi^* \tilde \Box \tilde \phi
      &= \Xi^{-2} \big( \Box \phi - 2\, d\, g^{-1}(\Upsilon, \phi) + 2\, \Upsilon \wedge (2\, g^{-1}(\Upsilon, \phi) - \delta \phi) \big), \label{confbox}
    \end{align}
  \end{subequations}
  where $\phi \doteq \psi^* \tilde\phi$, $\Upsilon \doteq \Xi^{-1} d\Xi$ and $g^{-1}(\cdot\,,\cdot)$ is the metric pairing of $1$-forms.
\end{proposition}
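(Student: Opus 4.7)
The plan is to derive the three identities by a direct calculation exploiting the behaviour of the Hodge star under a conformal rescaling. The key observation is that, if $\hat*$ denotes the Hodge dual on $M$ built out of the rescaled metric $\Xi^2 g$, then on any $p$-form $\omega$ in dimension $n=4$ one has $\hat*\omega = \Xi^{4-2p}*\omega$, as follows at once from the coordinate expression of $*$ and from the fact that the volume element scales as $\Xi^n$. Since $\psi^*\tilde g = \Xi^2 g$, the exterior derivative is metric-independent, and the pull-back intertwines $\tilde *$ with $\hat *$, writing $\phi = \psi^*\tilde\phi$ one immediately obtains $\psi^*\tilde\delta\tilde\phi = \hat*\,d\,\hat*\,\phi$ and $\psi^*\tilde\delta d\tilde\phi = \hat*\,d\,\hat*\,d\phi$.

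For \eqref{delta}, I would expand step by step. The form $\hat*\phi = \Xi^2 *\phi$ is of degree $3$, its exterior derivative is the $4$-form $d(\Xi^2 *\phi) = 2\Xi\, d\Xi\wedge *\phi + \Xi^2\, d*\phi$, and applying $\hat* = \Xi^{-4}*$ on the resulting $4$-form yields
\begin{equation*}
\hat*\,d\,\hat*\,\phi = 2\Xi^{-3}*(d\Xi\wedge *\phi) + \Xi^{-2}*d*\phi.
\end{equation*}
Combining this with the identity $*(\alpha\wedge*\beta) = -g^{-1}(\alpha,\beta)$ valid for $1$-forms in Lorentzian signature (which in turn relies on $*\mu_g = -1$), together with $\Upsilon = \Xi^{-1}d\Xi$ and the paper's convention $\delta = *d*$, delivers \eqref{delta}. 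For \eqref{deltad} the analogous computation is even shorter: $d\phi$ is of degree $2$, and on $2$-forms in dimension $4$ one has $\hat* = *$, so no derivative of $\Xi$ is produced in the middle; only the final $\hat*$, acting on the $3$-form $d*d\phi$, contributes the factor $\Xi^{-2}$, and \eqref{deltad} follows at once.

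For \eqref{confbox} I would invoke $\tilde\Box = d\tilde\delta + \tilde\delta d$ and substitute the two identities just established, obtaining
\begin{equation*}
\psi^*\tilde\Box\tilde\phi = d\bigl(\Xi^{-2}(\delta\phi - 2g^{-1}(\Upsilon,\phi))\bigr) + \Xi^{-2}\delta d\phi.
\end{equation*}
Expanding the first term via the Leibniz rule and using $d\Xi^{-2} = -2\Xi^{-2}\Upsilon$, the derivative of the $\Xi^{-2}$ prefactor generates the extra terms $-2\Upsilon\,\delta\phi$ and $+4\Upsilon\, g^{-1}(\Upsilon,\phi)$, which together with $\Xi^{-2}d\bigl(\delta\phi - 2g^{-1}(\Upsilon,\phi)\bigr)$ and $\Xi^{-2}\delta d\phi$ recombine into exactly the right-hand side of \eqref{confbox}.

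The only real source of bookkeeping is keeping straight the power of $\Xi$ generated by $\hat*$ on forms of each degree and remembering the sign coming from $*\mu_g = -1$ in Lorentzian signature: this is what eventually produces the overall minus sign in front of $2g^{-1}(\Upsilon,\phi)$ in \eqref{delta} and \eqref{confbox}. Once these two points are under control, no substantive obstacle is expected and the derivation reduces to a routine exterior-algebra manipulation.
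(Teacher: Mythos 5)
Your proposal is correct and follows essentially the same route as the paper: both rest on the scaling law $\hat{*}\,\omega = \Xi^{4-2p} * \omega$ for the Hodge dual of the rescaled metric (the paper's identity $\psi^*\,\tilde{*}\,\omega = *\,\Xi^{4-2p}\psi^*\omega$), the metric-independence of $d$, the convention $\delta = *\,d\,*$, and the decomposition $\tilde\Box = d\tilde\delta + \tilde\delta d$ with a Leibniz-rule expansion of the $\Xi^{-2}$ prefactor. Your intermediate steps, including the sign from $*\mu_g = -1$ producing the $-2\,g^{-1}(\Upsilon,\phi)$ term, check out; you in fact supply more detail for \eqref{delta} than the paper, which declares it immediate.
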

\begin{proof}
  Since the exterior derivative depends only on the differentiable structure of the manifold, it holds that $\psi^* d \omega = d \psi^* \omega$ for any $\omega \in \Omega^p(\widetilde M)$.
  On the contrary, the Hodge operator is built out of the metric and, thus, on account of its definition,
  \begin{equation}\label{confhodge}
    \psi^*\, \tilde{*}\, \omega = *\; \Xi^{4-2p} \psi^* \omega
  \end{equation}
  for all $\omega \in \Omega^{p}(\widetilde M)$.
  If we put these ingredients together with the properties of the pull-back of forms and with $\delta = *\, d\, *$, \eqref{delta} follows immediately.

  To prove the third equation, we stress that the properties of $d$ and $*$ under a conformal isometry entail that $\psi^* (d \, \tilde{*} \, d \tilde \phi) = d * d \phi$.
  Since $d \, \tilde{*} \, d \tilde \phi \in \Omega^{3}(M)$, \eqref{deltad} follows immediately from \eqref{confhodge} with $p = 3$.

  In order to show equality in \eqref{confbox}, recall that $\tilde \square = \tilde \delta d + d \tilde{\delta}$.
  Hence we are left to evaluate
  $\psi^*(d \tilde{\delta} \tilde \phi)$.
  On account of the properties of the exterior derivative and of the codifferential, the following chain of identities holds true:
  \begin{align*}
    \psi^* ( d \tilde{\delta} \tilde \phi )
    & = d \psi^* ( \tilde{\delta} \tilde \phi )
      = d \big( \Xi^{-2} ( \delta \phi - 2\, g^{-1}(\Upsilon, \phi) ) \big) \\
    & = \Xi^{-2} d \delta \phi - 2\, \Xi^{-2} \Upsilon \wedge \delta \phi - 2\, d\, g^{-1}(\Upsilon, \phi),
  \end{align*}
  where, in the second equality, we employed \eqref{delta}.
  This result combined with \eqref{deltad} yields the sought \eqref{confbox}.
\end{proof}

\noindent The following lemma holds:
\begin{lemma}\label{new-lemma}
  Under the same hypotheses in proposition \ref{conformaltransf}, it holds that for all $\tilde\phi\in\Omega^1(\widetilde M)$
  \begin{equation}\label{delta2}
    \psi^*_{(-4)} \tilde \delta \tilde \phi = \delta \psi^*_{(-2)}\tilde\phi.
  \end{equation}
\end{lemma}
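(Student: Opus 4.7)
The plan is to reduce \eqref{delta2} to a direct comparison between \eqref{delta} of Proposition \ref{conformaltransf} and the Leibniz rule for $\delta$ acting on a function times a $1$-form. Writing $\phi \doteq \psi^{*}\tilde\phi$ and unpacking the definitions \eqref{confpullback}, the left-hand side equals $\Xi^{4}\,\psi^{*}(\tilde\delta\tilde\phi)$, which by \eqref{delta} becomes $\Xi^{2}\,\delta\phi - 2\,\Xi^{2}\,g^{-1}(\Upsilon,\phi)$. Since $\psi^{*}_{(-2)}\tilde\phi = \Xi^{2}\phi$, proving the lemma is then equivalent to establishing the pointwise identity
\begin{equation*}
  \delta(\Xi^{2}\phi) = \Xi^{2}\,\delta\phi - 2\,\Xi^{2}\,g^{-1}(\Upsilon,\phi).
\end{equation*}

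To obtain this, I would simply apply $\delta = *\, d\, *$ directly. The ordinary Leibniz rule for $d$ gives $d(\Xi^{2}\, *\phi) = d(\Xi^{2})\wedge *\phi + \Xi^{2}\, d *\phi$, so $\delta(\Xi^{2}\phi) = \Xi^{2}\,\delta\phi + *\bigl(d(\Xi^{2})\wedge *\phi\bigr)$. For any two $1$-forms $\alpha,\beta$ one has $\alpha\wedge *\beta = g^{-1}(\alpha,\beta)\,\mu_{g}$, and in the Lorentzian four-dimensional signature used here $*\mu_{g} = -1$; hence $*\bigl(d(\Xi^{2})\wedge *\phi\bigr) = -g^{-1}(d(\Xi^{2}),\phi)$. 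Substituting $d(\Xi^{2}) = 2\,\Xi\, d\Xi = 2\,\Xi^{2}\,\Upsilon$, which follows at once from the definition $\Upsilon = \Xi^{-1} d\Xi$, yields precisely the identity displayed above.

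Conceptually, the lemma just repackages \eqref{delta} in the language of \eqref{confpullback}: it asserts that on $1$-forms the codifferential intertwines the conformally weighted pull-back of weight $-2$ with the one of weight $-4$, mirroring the fact that the exterior derivative $d$ intertwines $\psi^{*}$ with itself. The only step that requires a little attention is the sign produced by $*\mu_{g}$ in Lorentzian signature; once this is tracked correctly, every subsequent manipulation is mechanical, and I do not anticipate any real obstacle in carrying out the argument.
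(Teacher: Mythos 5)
Your proof is correct and takes essentially the same route as the paper's: both reduce \eqref{delta2} to \eqref{delta} combined with the Leibniz identity $\delta(f\omega) = f\,\delta\omega - g^{-1}(df,\omega)$, the only cosmetic difference being that the paper applies it with $f=\Xi^{-2}$ to $\phi = \Xi^{-2}\psi^*_{(-2)}\tilde\phi$ while you apply it with $f=\Xi^{2}$ and derive the identity explicitly from $\delta = *\,d\,*$ (with the correct Lorentzian sign $*\mu_g=-1$).
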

\begin{proof}
  Writing $\phi = \Xi^{-2} \psi^*_{(-2)} \tilde\phi$ on the right hand side of \eqref{delta}, we obtain $\Xi^{-2} ( \delta \Xi^{-2} \psi^*_{(-2)} \tilde\phi - 2\, g^{-1}(\Upsilon, \Xi^{-2} \psi^*_{(-2)} \tilde\phi ) )$.
  An application of the identity $\delta (f \omega) = f \delta \omega - g^{-1}(d f, \omega)$ for all $f \in C^\infty(M)$ and $\omega \in \Omega^1(M)$ yields $\psi^* \tilde \delta \tilde \varphi = \Xi^{-4} \delta \psi^*_{(-2)} \tilde\varphi$ which can be rewritten into the form of \eqref{delta2} using definition \eqref{confpullback}.
\end{proof}

On account of these results, we say that the codifferential $\delta$ and the operator $\delta d$ acting on $1$-forms are \emph{conformally invariant} under the action of the conformally weighted pull-back (or push-forward) of weight $0$ and $-2$ respectively.

\subsection{Asymptotically Flat Spacetimes}\label{sub:asymp_flat}

In this paper we will be mostly interested in a subclass of globally hyperbolic spacetimes $(M, g)$ which are solutions of Einstein vacuum equations and whose behaviour at infinity along null geodesics mimics that of Minkowski spacetime.
Therefore let us consider an \emph{asymptotically flat spacetime with future timelike infinity $i^+$}, \emph{i.e.}, a time-oriented globally hyperbolic spacetime $(M, g)$, called \emph{physical spacetime}, such that there exists a second globally hyperbolic spacetime $(\widetilde M, \tilde g)$, called \emph{unphysical spacetime}, with a preferred point $i^+$, a diffeomorphism $\psi : M \to \psi(M) \subset \widetilde M$ and a function $\Xi : \psi(M) \to (0, \infty)$ such that $\tilde g = \Xi^2 \, \psi_* g$.
Moreover the following requirements are satisfied:\footnote{$\tilde \nabla$ is the Levi-Civita connection built out of $\tilde g$.}
\begin{itemize}
  \item[a)]
    If we call $J^-(i^+)$ the causal past of $i^+$, this is a closed set such that $\psi(M) = J^-(i^+) \setminus \partial J^-(i^+)$ and we have $\partial M = \partial J^-(i^+) = \mathscr{I}^+ \cup \{i^+\}$, where $\mathscr{I}^+$ is \emph{future null infinity}.
  \item[b)]
    $\Xi$ can be extended to a smooth function on the whole of $\widetilde M$ and it vanishes on $\mathscr{I}^+ \cup \{i^+\}$.
    Furthermore $d \Xi \neq 0$ on $\mathscr{I}^+$ while $d \Xi = 0$ on $i^+$ and $\tilde \nabla_\mu \tilde \nabla_\nu \, \Xi = -2 \, \tilde g_{\mu\nu}$ at $i^+$.
  \item[c)]
    If $n^\mu \doteq \tilde g^{\mu\nu} \partial_\nu \Xi$, there exists a smooth and positive function $\xi$ supported at least in a neighbourhood of $\mathscr{I}^+$ such that $\tilde \nabla_\mu (\xi^4 n^\mu) = 0$ on $\mathscr{I}^+$ and the integral curves of $\xi^{-1} n$ are complete on future null infinity.
\end{itemize}

It is worth remarking that this definition is different from the standard one (\textit{cf.}, \textit{e.g.} \cite[Chap. 11]{wald:1984gp}), where asymptotic flatness is defined with respect to $i_0$, spatial infinity, as distinguished point.
The underlying reason for adopting such a choice is related to our need to describe solutions of a second order hyperbolic partial differential operator on such backgrounds.
If smooth and compactly supported initial data are assigned, the associated solutions will be supported in the causal future and past of the initial data and thus it is important that either $i^+$ or $i^-$ are included in order to avoid any loss of information.
The above definition appeared already in \cite{moretti:2006ap,moretti:2008ay}, where the equivalence with the definition proposed by Friedrich (see \cite{friedrich:1986uq} and references therein) is also pointed out.
Notice that, in comparison with these last cited papers, we dropped the requirement of strong causality for $M$ and $\widetilde M$ since it is a property enjoyed by all globally hyperbolic spacetimes.
Moreover, although one could equivalently define an asymptotically flat spacetime with past timelike infinity $i^-$, we will not mention this possibility again since all our results can be easily translated to that case.

\subsection{Geometric Properties of Null Infinity}\label{sub:geometry}

For our purposes the most notable property of an asymptotically flat spacetime is the existence of future null infinity which is a three dimensional submanifold of $\widetilde M$ generated by the null geodesics emanating from $i^+$, {\it i.e.}, the integral curve of $n$.
For this reason $\mathscr{I}^+$ is diffeomorphic to $\mathbb{R} \times \mathbb{S}^2$ although the possible metric structures are affected by the existence of a gauge freedom which corresponds to the rescaling of $\Xi$ to $\xi \Xi$, where $\xi$ is a smooth function which is strictly positive in $\psi(M)$ and a neighbourhood of $\mathscr{I}^+$.
Furthermore, if we introduce for any fixed asymptotically flat spacetime $(M, g)$ the set $C$ composed of equivalence classes of triples $(\mathscr{I}^+, h, n)$, where $h \doteq \tilde{g} \restriction_{\mathscr{I}^+}$ and $(\mathscr{I}^+, h, n) \sim (\mathscr{I}^+, \xi^2 h, \xi^{-1} n)$ for any choice of $\xi$ satisfying c), there is no physical mean to select a preferred element in $C$.
This is called the \emph{intrinsicness} of $\mathscr{I}^+$ and it emphasises the relevance of the boundary structure together with the property of \emph{universality}.
Namely, if we select any pair of asymptotically flat spacetimes, $(M_1, g_1)$ and $(M_2, g_2)$, together with the corresponding triples, say $(\mathscr{I}^+_1, h_1, n_1)$ and $(\mathscr{I}^+_2, h_2, n_2)$, there always exists a diffeomorphism $\gamma : \mathscr{I}^+_1 \to \mathscr{I}^+_2$ such that $h_1 = \gamma^* h_2$ and $n_2 = \gamma_* n_1$.
Although we leave a reader interested in the proof of this last statement to \cite[Chap. 11]{wald:1984gp}, it is noteworthy that it relies on a very important additional property of an asymptotically flat spacetime:
In each equivalence class there exists a choice of conformal gauge $\xi_B$ yielding a coordinate system $(u, \Xi, \theta, \varphi)$ in a neighbourhood of $\mathscr{I}^+$, called a \emph{Bondi frame}. The (rescaled) unphysical metric tensor becomes
\begin{equation*}
  \tilde g \restriction_{\mathscr{I}^+} = -2\, du\, d\Xi + d\theta^2 + \sin^2 \! \theta\, d\varphi^2.
\end{equation*}
That is, $\Xi$ is promoted from conformal factor to a coordinate, thus indicating that future null infinity is the locus $\Xi = 0$, while $u$ is the affine parameter of the null geodesics generating $\mathscr{I}^+$.
Thus at each point on $\mathscr{I}^+$ the vector field $n$ coincides with $\partial / \partial u$.

A very important role is played by the subgroup of diffeomorphisms of $\mathscr{I}^+$ which maps any triple $(\mathscr{I}^+, h, n)$ into a gauge equivalent one.
This is the so-called \emph{Bondi-Metzner-Sachs} (BMS) group of transformations whose structure is that of semidirect product, \textit{i.e.}, $BMS = SO_0(3,1) \rtimes C^\infty(\mathbb{S}^2)$.
The action of each element $\gamma \in SO_0(3,1) \rtimes C^\infty(\mathbb{S}^2)$ is best represented in a Bondi frame, that is, going over to the complex stereographic coordinate $z = e^{i \varphi} \cot \theta / 2$, we have the mapping
\begin{equation}\label{BMS}
  \begin{split}
    u & \mapsto u' = K_\gamma(z, \bar{z})(u + \alpha_\gamma(z, \bar{z})),\\
    z & \mapsto z' = \frac{az+b}{cz+d} \quad \text{and c.c.,}\\
  \end{split}
\end{equation}
where
\begin{equation*}
  K_\gamma(z, \bar{z}) = \frac{1 + |z|^2}{|az+b|^2 + |cz+d|^2},
  \quad
  \alpha_\gamma \in C^\infty(\mathbb{S}^2),
\end{equation*}
and $a,b,c,d \in \mathbb{C}$ with $ad - bc = 1$, \textit{i.e.}, the $SO_0(3,1) \cong PSL(2, \mathbb{C}) \cong \mathrm{Aut}(\mathbb{C} \cup \{\infty\})$ part is represented via M\"obius transformations.
The elements of the form $(I,\alpha)$ where $I$ is the identity element of $SO_0(3,1)$ are called {\em supertranslations} and they form a proper Abelian normal subgroup of the BMS group, homeomorphic to $C^\infty(\mathbb{S}^2)$, seen as an Abelian group under addition.

On account of the representation of the BMS group in stereographic coordinates, we can now consider a family of representations of the BMS group on sections of a vector bundle $E \to \mathscr{I}^+$.
Namely, for each $w \in \mathbb{R}$ we have the representation $\Pi^w : \Gamma(E) \rightarrow \Gamma(E)$ defined by
\begin{equation}\label{BMSrep}
  (\Pi^w_\gamma s)(u', z', \bar{z}')
  \doteq K_\gamma(z, \bar{z})^w s(u + \alpha_\gamma(z, \bar{z}), z, \bar{z}),
\end{equation}
where $K_\gamma, \alpha_\gamma$ as above, for each $\gamma \in BMS$ and all $s \in \Gamma(E)$.

From the geometrical point of view the BMS group can be interpreted as the group of asymptotic symmetries for all asymptotically flat spacetimes as shown in \cite[Chap. 11]{wald:1984gp}.
This entails that each one-parameter subgroup of diffeomorphisms of $\mathscr{I}^+$ generated by a smooth vector field $X'$ is a subgroup of the BMS group if and only if $X'$ can be extended smoothly but not necessarily in a unique way to a vector field $X$ of the bulk spacetime $(M, g)$ in such a way that $\Xi^2 \mathcal{L}_X g$ admits a smooth and vanishing extension to $\mathscr{I}^+$.
In particular, as proven by Geroch in \cite{geroch:1977fv}, if $X$ is a Killing vector field in the physical spacetime, this guarantees the existence of a unique extension to future null infinity.
Thus it is commonly said that the BMS group encodes the information of all possible bulk symmetries.

\section{The Vector Potential on Curved Spacetimes}\label{sec:vecpot}

In this section we will introduce the main object of our studies, the vector potential.
We will discuss its classical kinematical and dynamical configurations and outline its quantization in the algebraic approach.
We will emphasise in particular how this physical system can be described in terms of a locally covariant conformal quantum field theory.
Part of the material we shall present has already appeared in the literature and, in particular, we will benefit of earlier analyses \cite{dappiaggi:2011rt,dappiaggi:2011yq,dimock:1992ea,fewster:2003or,pfenning:2009ti,siemssen:2011fk}.

\subsection{Classical Dynamics}\label{sub:dynamics}

Classical electromagnetism on a globally hyperbolic spacetime $(M, g)$ is assumed to be described by the natural extension to curved backgrounds of the manifestly covariant version of Maxwell's equations on Minkowski spacetime.
In other words, one should consider the field strength tensor $F \in \Omega^2(M)$, which satisfies
\begin{equation*}
  dF = 0,
  \quad
  \delta F=-j,
\end{equation*}
where the source $j$ is a coclosed smooth $1$-form.
In the forthcoming discussion we shall always assume that $j = 0$ and hence we only consider the Maxwell equations without sources.

The dynamics provided by this system is best analysed in terms of the vector potential, that is, $A \in \Omega^1(M)$ such that $F = dA$.
If $H^2(M)$, the second de Rham cohomology group of $M$, is trivial, the global existence of this $1$-form is per definition guaranteed.
Otherwise there exist field strength tensors which solve Maxwell's equations and do not descend from a vector potential.
Since the goal of this paper is to focus on the construction of Hadamard states for $A$, we shall not dwell on this problem.
Nevertheless, the reader is warned of this obstruction and, if interested in a discussion of this issue, should refer to \cite{dappiaggi:2011yq}.
Hence, in terms of the vector potential, the dynamics is ruled by
\begin{equation}\label{Proca}
  \delta d A = 0.
\end{equation}
In an arbitrary coordinate system of $(M, g)$ this reads
\begin{equation}\label{Procalocal}
  - \nabla_\nu \nabla^\nu A_\mu + \nabla_\mu \nabla^\nu A_\nu + R_\mu^\nu A_\nu = 0,
\end{equation}
where $R_{\mu\nu}$ is the Ricci tensor built out of $g$.
Inspecting the principal symbol of \eqref{Procalocal}, one can realise that the differential operator is not of hyperbolic type and thus one cannot straightforwardly set up an initial value problem.
This difficulty can be solved if we recall that two vector potentials can yield the same field strength if they are \emph{gauge equivalent}.
The ensuing equivalence classes are
\begin{equation*}
  [A] = \big\{ A' \in \Omega^1(M) \mid A' \sim A, \text{ iff } \exists \Lambda \in \Omega^1_d(M) \text{ with } A - A' = \Lambda \big\}.
\end{equation*}
That is, two vector potentials are gauge equivalent and thus in the same equivalence class if they differ by a closed $1$-form.
From a physical point view this choice might appear as rather unorthodox since it is customary to identify two vector potentials differing by the differential of a smooth function on $M$.
We stress that one should keep in mind that the physical content of Maxwell's equations is encoded in the Faraday tensor $F$ and in its equations of motion.
This entails that one should regard as equivalent two vector potentials yielding the same field strength tensor.
Our choice for the equivalence classes $[A]$ reflects this fact from a mathematical point of view.
Whenever $H^1(M)=\{0\}$ no difference arises with the procedure usually employed in Minkowski spacetime.
Notice also that, in this analysis, no role is played by the fact that the first Maxwell equation, $dF=0$, is equivalent to the existence of a smooth $1$-form $A$, such that $F=dA$, only if $H^2(M)=\{0\}$.

As proven for example in \cite{dappiaggi:2011rt}, the following lemma holds true:

\begin{lemma}\label{gaugeequiv}
  Each $1$-form $A'$ satisfying $\delta d A' = 0$ is gauge equivalent to a vector potential $A$ such that
  \begin{equation}\label{dynamics}
    \square A = 0,
    \quad
    \delta A = 0.
  \end{equation}
\end{lemma}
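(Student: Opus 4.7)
The plan is to exploit the gauge freedom to bring any solution of $\delta d A' = 0$ into the Lorenz gauge $\delta A = 0$, after which the field equation will automatically reduce to the wave equation $\Box A = 0$. Since the paper defines gauge equivalence via closed $1$-forms and every exact $1$-form is closed, it suffices to look for a gauge transformation of the special form $\Lambda = d\chi$ with $\chi \in C^\infty(M)$.

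First I would make the ansatz $A = A' - d\chi$ and translate the Lorenz gauge condition on $A$ into a scalar equation for $\chi$. Using $\delta$ on functions being trivial, one has $\Box \chi = \delta d \chi$, so $\delta A = 0$ becomes the inhomogeneous wave equation
\begin{equation*}
  \Box \chi = \delta A'.
\end{equation*}
The source $\delta A'$ is a smooth function on $M$, but in general it is not compactly supported, so the causal propagators $G^\pm$ cannot be applied directly to it. This is the one genuinely nontrivial step: I would establish the solvability of this equation on the globally hyperbolic spacetime $(M,g)$ by decomposing $\delta A' = f_+ + f_-$ using a partition of unity subordinate to a Cauchy temporal function, so that $f_+$ has past compact and $f_-$ has future compact support. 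Each $G^\pm$ extends to distributions with appropriate causal support, and setting $\chi \doteq G^+ f_+ + G^- f_-$ produces a smooth solution of $\Box \chi = \delta A'$. This surjectivity of $\Box$ on $C^\infty(M)$ is standard (see, e.g., the cited \cite{bar:2007fi}), so the obstacle is essentially just bookkeeping.

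With such a $\chi$ in hand, define $A \doteq A' - d\chi$. Then $A - A' = -d\chi \in \Omega^1_d(M)$, so $A \sim A'$ in the sense of the paper. The Lorenz condition follows by construction:
\begin{equation*}
  \delta A = \delta A' - \delta d\chi = \delta A' - \Box \chi = 0.
\end{equation*}
Finally, since $d^2 = 0$ we have $dA = dA'$, whence $\delta d A = \delta d A' = 0$ by hypothesis, and combining with $d\delta A = 0$ gives
\begin{equation*}
  \Box A = \delta d A + d\delta A = 0.
\end{equation*}
This yields both parts of \eqref{dynamics} and completes the argument. The main obstacle, as noted, is verifying solvability of the scalar wave equation with a non-compactly-supported source, for which one must invoke the causal decomposition afforded by global hyperbolicity rather than apply $G^\pm$ blindly.
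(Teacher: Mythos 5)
Your proposal is correct and follows the standard route: the paper itself gives no proof of this lemma but defers to \cite{dappiaggi:2011rt}, where precisely your argument appears --- set $A = A' - d\chi$ with $\Box\chi = \delta A'$, solvable on a globally hyperbolic spacetime by splitting the source into past- and future-compact pieces (or, equivalently, by well-posedness of the Cauchy problem with non-compactly supported data), and then conclude $\delta A = 0$ and $\Box A = \delta dA' = 0$. You also correctly flag the only genuinely nontrivial point, namely that $G^\pm$ cannot be applied blindly to the non-compactly supported source $\delta A'$.
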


In an arbitrary coordinate system \eqref{dynamics} reads $- \nabla_\nu \nabla^\nu A_\mu + R_\mu^\nu A_\nu = 0$ together with $\nabla^\mu A_\mu = 0$.
In other words, the dynamics of $A$ is ruled by a second order hyperbolic partial differential operator with principal symbol of metric type and the vector potential is further constrained to satisfy the \emph{Lorenz gauge}.
Notice that a residual gauge freedom is present, since it is still possible to add to any Lorenz solution $A$ a smooth $1$-form $\Lambda$ which is at the same time closed and coclosed.
Thanks to this freedom we can consider solutions of \eqref{dynamics} with smooth and compactly supported initial data.
These are called \emph{Lorenz solution} and they can be written as $A = G f$ where $f \in \Omega^1_{0,\delta}(M)$ and $G$ is the causal propagator associated to the Laplace-de Rham operator $\Box$.
Notice that the coclosedness of $f$ suffices to guarantee that $A$ fulfils the Lorenz gauge because, as already remarked in \cite{dimock:1992ea,fewster:2003or,pfenning:2009ti}, $G \circ \delta = \delta \circ G$, where, on the left hand side, $G$ is meant as the causal propagator for $\Box$ acting on $p$-forms whereas, on the right hand side, it is the one acting on $(p-1)$-forms.

Notice that the choice of coclosed $1$-forms as initial data does not account for the whole residual gauge freedom.
The following proposition clarifies under which conditions on the initial data this can happen.

\begin{proposition}\label{equivalence}
  Let $f, f' \in \Omega^1_{0,\delta}(M)$.
  Then $G f \sim G f'$ if and only if there exists $\lambda \in \Omega^2_{0,d}(M)$ such that $f - f' = \delta \lambda$.
\end{proposition}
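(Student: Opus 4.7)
\emph{Proof proposal.}
Setting $h\doteq f-f'\in\Omega^1_{0,\delta}(M)$, the assertion $Gf\sim Gf'$ translates to the condition that $Gh$ be closed, i.e.\ $dGh=0$. The argument rests on two standard facts. First, since $\Box$ commutes with both $d$ and $\delta$ (thanks to $d^2=0$, $\delta^2=0$ and $\Box = d\delta+\delta d$), uniqueness of the advanced and retarded fundamental solutions on a globally hyperbolic spacetime implies $G\circ d = d\circ G$ and $G\circ \delta = \delta\circ G$ on the relevant spaces of forms. Second, for the normally hyperbolic operator $\Box$ one has the kernel characterization $\ker G\cap\Omega^p_0(M) = \Box\,\Omega^p_0(M)$: if $G\omega=0$ for $\omega\in\Omega^p_0(M)$, then $\eta\doteq G^+\omega = G^-\omega$ has support contained in the compact set $J^+(\supp\omega)\cap J^-(\supp\omega)$, lies in $\Omega^p_0(M)$ and satisfies $\Box\eta=\omega$; the converse is immediate from $G\Box=0$ on $\Omega^p_0(M)$.

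For the ``if'' direction, assume $h=\delta\lambda$ with $\lambda\in\Omega^2_{0,d}(M)$. Combining the commutation rules, $\Box G\lambda=0$, and $d\lambda=0$ gives
\begin{equation*}
  dGh \;=\; d\delta G\lambda \;=\; -\,\delta d G\lambda \;=\; -\,\delta G\,d\lambda \;=\; 0,
\end{equation*}
so $Gh$ is closed.

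For the ``only if'' direction, assume $dGh=0$, equivalently $G(dh)=0$. The kernel characterization applied to $dh\in\Omega^2_0(M)$ yields $\mu\in\Omega^2_0(M)$ with $dh=\Box\mu$. Applying $d$ to both sides and using $d\Box = \Box d$ gives $\Box(d\mu)=0$ with $d\mu\in\Omega^3_0(M)$; hitting this with $G^+$ and invoking $G^+\Box=\mathrm{id}$ on $\Omega^3_0(M)$ forces $d\mu=0$. Substituting back one obtains $dh = \Box\mu = d\delta\mu$, so $\nu\doteq h-\delta\mu$ is closed. Since $\delta h=0$ and $\delta^2=0$, $\nu$ is also coclosed, hence $\Box\nu=0$; being compactly supported, the same kernel argument forces $\nu=0$. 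Therefore $h=\delta\mu$ with $\mu\in\Omega^2_{0,d}(M)$, as claimed.

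The step I expect to require the most care is ensuring that the primitive $\mu$ can be taken \emph{closed}, as demanded by the statement, rather than an arbitrary element of $\Omega^2_0(M)$. This is resolved by a two-fold use of the kernel characterization: the first extracts $\mu$ from $G(dh)=0$, and the second, applied to the compactly supported form $d\mu$ which satisfies $\Box d\mu = 0$, upgrades $\mu$ to a closed representative. Everything else is routine manipulation of the Hodge decomposition $\Box = d\delta+\delta d$, provided one tracks that all intermediate forms stay compactly supported so that $G$ and $G^\pm$ remain well defined on them.
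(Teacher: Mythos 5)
Your proof is correct and follows essentially the same route as the paper's: both directions rest on the commutation of $G$ with $d$ and $\delta$, the characterization $\ker G\cap\Omega^p_0(M)=\Box\,\Omega^p_0(M)$, and the injectivity of $\Box$ on compactly supported forms. The only cosmetic difference is that the paper concludes the ``only if'' direction by computing $\Box\delta\lambda=\Box(f-f')$ directly, whereas you introduce the auxiliary closed and coclosed form $\nu=h-\delta\mu$ and show it vanishes; these are the same injectivity argument in different clothing.
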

\begin{proof}
  Suppose $f - f' = \delta \lambda$ for $f, f'\in\Omega^1_{0,\delta}(M)$.
  Then $\Lambda \doteq G f - G f' = G(\delta \lambda)$ is such that
  \begin{equation*}
    d \Lambda = d G(\delta \lambda) = G(d \delta \lambda) = G(\square \lambda) = 0,
  \end{equation*}
  where we employed the definition of the Laplace-de Rham operator and closedness of $\lambda$ in the second equality.

  Conversely, suppose that two initial data $f, f' \in \Omega^1_{0,\delta}(M)$ are such that the associated Lorenz solutions are gauge equivalent.
  Therefore $G(f - f')$ is closed, \textit{i.e.}, $G(d(f-f')) = 0$, which in turn guarantees the existence of $\lambda \in \Omega^2_0(M)$ such that $d (f - f') = \Box \lambda$.
  If one applies to both sides the external derivative, the ensuing identity, $d \Box \lambda = \Box d \lambda = 0$, entails $d \lambda = 0$ as $\lambda$ is compactly supported.
  Furthermore,
  \begin{equation*}
    \Box \delta \lambda = \delta \Box \lambda = \delta d (f - f') = \Box(f - f'),
  \end{equation*}
  where, in the last equality, we exploited the coclosedness of both $f$ and $f'$.
  Since all forms involved are compactly supported, the above chain of identities yields that $\delta \lambda = f - f'$.
\end{proof}

We denote the collection of these equivalence classes of test forms associated to Lorenz solutions
\begin{equation*}
  \mathcal{S}(M) \doteq \frac{\Omega^1_{0,\delta}(M)}{\delta\Omega^2_{0,d}(M)}.
\end{equation*}
Notice then, that the bottom line of proposition \ref{equivalence} is that $\mathcal{S}(M)$ is in one-to-one correspondence with the gauge equivalence classes of solutions of \eqref{Proca}.

Although proposition \ref{equivalence} applies to all globally hyperbolic spacetimes, it is possible to derive a more stringent result if some additional assumptions on the topology of $(M, g)$ are made.
As we shall see, this will play an important role in the discussion of the quantization of the theory.

\begin{corollary}\label{cohomology}
  Let $(M, g)$ be a globally hyperbolic spacetime such that either $H^1(M)$ or $H^2(M)$ is trivial.
  Then two Lorenz solutions $G f$ and $G f'$ are equivalent if and only if there exist $\alpha \in \Omega^1_0(M)$ and $\chi \in C^\infty(M)$ such that
  \begin{equation*}
    f - f' = \delta d \alpha,
    \quad
    G(f - f') = d \chi.
  \end{equation*}
\end{corollary}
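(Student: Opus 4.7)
The plan is to handle the two directions separately and, for the reverse implication, to split into subcases corresponding to the two cohomological hypotheses. The forward direction is immediate: if $G(f - f') = d\chi$ then $Gf - Gf' = d\chi$ is closed, so $Gf \sim Gf'$ by definition. For the converse, proposition \ref{equivalence} already furnishes $\lambda \in \Omega^2_{0,d}(M)$ with $f - f' = \delta\lambda$, and the task is to upgrade $\lambda$ to an exact form with compactly supported primitive and, correspondingly, to produce the scalar potential $\chi$.

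If $H^2(M) = 0$, then since $M$ is an oriented non-compact $4$-manifold, Poincar\'e duality yields $H^2_c(M) \cong H^2(M) = 0$, so $\lambda = d\alpha$ for some $\alpha \in \Omega^1_0(M)$ and therefore $f - f' = \delta d\alpha$. For the second equation one commutes $G$ past $\delta$ and $d$ to write $G(f - f') = \delta d G\alpha$; using $\square G\alpha = 0$ this reduces to $-d\delta G\alpha$, and setting $\chi \doteq -\delta G\alpha \in C^\infty(M)$ finishes the case.

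If instead $H^1(M) = 0$, closedness of $Gf - Gf'$ immediately produces $\chi \in C^\infty(M)$ with $Gf - Gf' = d\chi$, and the coclosedness of $f - f'$ together with $\delta G = G\delta$ yields $\square\chi = \delta d\chi = 0$. Since $d\chi$ has spacelike compact support, $\chi$ is locally constant outside a spacelike compact set and, after subtracting a suitable constant, can be taken to be spacelike compact itself; the standard isomorphism between spacelike compact solutions of $\square$ and $C^\infty_0(M)/\square C^\infty_0(M)$ on globally hyperbolic spacetimes then provides $\sigma \in C^\infty_0(M)$ with $\chi = G\sigma$. Substituting into $G(f - f') = d\chi = G d\sigma$ gives $f - f' - d\sigma = \square\tau$ for some $\tau \in \Omega^1_0(M)$. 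Applying $\delta$ and exploiting the coclosedness of $f$ and $f'$ yields $\square(\sigma + \delta\tau) = 0$ in $C^\infty_0(M)$, and since the only compactly supported scalar solution of $\square$ on a globally hyperbolic spacetime vanishes, $\sigma = -\delta\tau$; substituting back produces $f - f' = -d\delta\tau + (\delta d\tau + d\delta\tau) = \delta d\tau$, so $\alpha \doteq \tau$ works.

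The main obstacle is this second case, and specifically the representation $\chi = G\sigma$: modifying $\chi$ by a constant to obtain a spacelike compact representative is straightforward when the Cauchy surface is non-compact and connected but otherwise requires careful bookkeeping of the connected components of the complement of $\supp(d\chi)$. Once this step is granted, everything else reduces to formal manipulations using $dG = Gd$, $\delta G = G\delta$, $\square G = 0$, and uniqueness for the scalar Cauchy problem.
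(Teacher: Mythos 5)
Your proof is correct and follows essentially the same route as the paper's: the same split into the two cohomological cases, Poincar\'e duality to write $\lambda = d\alpha$ when $H^2(M)=\{0\}$, and the representation $\chi = G\sigma$ followed by the $\square(\sigma+\delta\tau)=0$ argument when $H^1(M)=\{0\}$. If anything you are more careful than the paper at the step $\chi = G\sigma$, where the paper silently assumes the harmonic function $\chi$ is spacelike-compactly supported while you explicitly flag the normalisation issue on the components of the complement of $\supp(d\chi)$.
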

\begin{proof}
  Let us start supposing $H^1(M)=\{0\}$.
  Then two gauge equivalent Lorenz solutions $G f $ and $G f'$ differ by a closed one form which suffices to guarantee the existence of $\chi \in C^\infty(M)$ such that $G (f - f') = d \chi$.
  Since $\delta G(f - f') = 0$, it holds that $\Box \chi = 0$ which implies that there exists $\eta \in C^\infty_0(M)$ such that $G \eta = \chi$.
  Hence $G (f - f' - d \eta) = 0$ and one can find $\alpha \in \Omega^1_0(M)$ such that $f - f' = d \eta + \Box \alpha$.
  Since both $f$ and $f'$ are coclosed, also $\Box (\eta + \delta \alpha)$ vanishes, which is tantamount to $\eta = - \delta \alpha$ and, thus, to $f - f' = \delta d \alpha$.

  Conversely, suppose now that $H^2(M)$ is trivial.
  On account of proposition \ref{equivalence} we know that two Lorenz solutions $G f$ and $G f'$ are gauge equivalent if $f - f' = \delta \lambda$ with $\lambda \in \Omega^2_{0,d}(M)$.
  Hence, by Poincar\'e duality, there exists $\alpha \in \Omega^1_0(M)$ such that $\lambda = d \alpha$ and we obtain $f - f' = \delta d \alpha$.
  If we apply to both sides the causal propagator, we end up with $G(f - f') = G(\delta d \alpha) = - G(d \delta \alpha) = - d G(\delta \alpha)$ which is the sought result if we set $\chi = - G(\delta \alpha)$.
\end{proof}

\noindent According to corollary \ref{cohomology}, we obtain
\begin{equation*}
  \mathcal{S}(M) \cong \frac{\Omega^1_{0,\delta}(M)}{\delta d \Omega^1_0(M)}
\end{equation*}
whenever either $H^1(M)$ or $H^2(M)$ is trivial.
Notice that, on account of \eqref{delta}, the elements of $\mathcal{S}(M)$ have conformal weight $-2$ and \eqref{confpullback} entails that the space of compactly supported coclosed smooth $1$-forms of conformal weight $-2$ is conformally covariant.

To conclude our analysis of the classical theory of the vector potential, we need a last result which gives $\mathcal{S}(M)$ the structure of a phase space and, at the same time, creates a bridge towards the quantization of the theory.

\begin{proposition}\label{symplform}
  The set $\mathcal{S}(M)$ is a weakly non-degenerate symplectic space if either $H^1(M)$ or $H^2(M)$ is trivial and if endowed with the antisymmetric bilinear form $\sigma : \mathcal{S}(M) \times \mathcal{S}(M) \to \mathbb{R}$,
  \begin{equation}\label{symplectic}
    \sigma([f],[h]) \doteq G(f \otimes h) = \int\limits_M G f \wedge *\, h.
  \end{equation}
  Here, $[f], [h] \in \mathcal{S}(M)$ with $f, h \in \Omega^1_{0,\delta}(M)$ being two arbitrary representatives of these equivalence classes.
\end{proposition}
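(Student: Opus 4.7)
My plan is to verify in turn bilinearity, antisymmetry, well-definedness on the quotient, and weak non-degeneracy. Bilinearity is immediate from the linearity of $G$ and of the integral. For antisymmetry of $\sigma$ on representatives I would invoke the standard property of the causal propagator of a formally self-adjoint normally hyperbolic operator, namely $\int_M Gf \wedge *\, h = -\int_M f \wedge *\, Gh$, which follows from $G^\pm$ being mutual formal adjoints with respect to the $L^2$-pairing $\int (\cdot \wedge *\, \cdot)$. Combined with the pointwise symmetry $\omega \wedge *\,\eta = \eta \wedge *\,\omega$ on same-degree forms, this yields $\sigma([f],[h]) = -\sigma([h],[f])$.

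The delicate step is well-definedness on the quotient. A direct attack on $\int_M G(\delta\lambda) \wedge *\, h$ for $\lambda \in \Omega^2_{0,d}(M)$ appears to close on itself: rewriting $G\delta = \delta G$, integrating by parts, and then invoking antisymmetry of $G$ simply returns the original expression, essentially because $h$ is only coclosed and not closed. My idea is to exploit the cohomology hypothesis through corollary \ref{cohomology} and replace the naive quotient $\Omega^1_{0,\delta}(M)/\delta\Omega^2_{0,d}(M)$ by $\Omega^1_{0,\delta}(M)/\delta d\, \Omega^1_0(M)$. It then suffices to check that $\sigma([\delta d\alpha],[h]) = 0$ for $\alpha \in \Omega^1_0(M)$. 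Writing $\delta d = \square - d\delta$ and using $G\square = 0$ on compactly supported forms one obtains $G(\delta d\alpha) = -d\, G(\delta\alpha)$, and a single integration by parts reduces the integral to $-\int_M G(\delta\alpha) \wedge *\,\delta h$, which vanishes because $\delta h = 0$. Antisymmetry transports the conclusion to the second slot.

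For weak non-degeneracy, suppose $\sigma([f],[h]) = 0$ for every $[h] \in \mathcal{S}(M)$. The key observation is that $h = \delta\omega$ with arbitrary $\omega \in \Omega^2_0(M)$ is automatically coclosed; substituting this choice and integrating by parts reduces the hypothesis to $\int_M dGf \wedge *\,\omega = 0$ for all such $\omega$, hence $dGf = 0$. Using the commutation $dG = Gd$ (a consequence of $d\square = \square d$), this is equivalent to $G\, df = 0$, and the standard exact sequence $\ker G \cap \Omega^2_0(M) = \square\, \Omega^2_0(M)$ for the normally hyperbolic $\square$ produces a $\mu \in \Omega^2_0(M)$ with $df = \square\mu$. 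A further application of $d$, together with the vanishing of compactly supported solutions of $\square$, gives $d\mu = 0$; consequently $\square\mu = d\delta\mu$ and $f - \delta\mu$ is closed, coclosed (since $\delta f = 0$) and compactly supported, so $\square(f - \delta\mu) = 0$ forces $f = \delta\mu$ with $\mu \in \Omega^2_{0,d}(M)$, i.e.\ $[f] = 0$ in $\mathcal{S}(M)$.

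I expect the principal obstacle to be precisely the circularity of the naive attempt at well-definedness: integrating by parts symmetrically moves $d$ and $\delta$ between the two slots without ever using the gauge condition. Bypassing this through the $\delta d$-representation of the gauge furnished by corollary \ref{cohomology} is the essential trick; once it is in place, the remaining steps reduce to routine integration by parts and the basic kernel characterisation of the causal propagator.
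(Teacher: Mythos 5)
Your proposal is correct and follows essentially the same route as the paper: well-definedness is obtained by passing to the $\delta d\,\Omega^1_0(M)$ description of the gauge ambiguity via corollary \ref{cohomology} and killing the resulting term with $G\circ\Box=0$ together with coclosedness of the other entry, while weak non-degeneracy is obtained by testing against $h=\delta\beta$ to conclude $dGf=0$. The only (harmless) divergence is that at the last step you re-derive $dGf=0\Rightarrow[f]=[0]$ from the kernel characterisation of $G$, whereas the paper simply cites proposition \ref{equivalence} resp.\ corollary \ref{cohomology}, whose proof is precisely the argument you reproduce.
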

\begin{proof}
  As a first step we prove that \eqref{symplectic} is well-defined, \textit{i.e.}, it is independent from the chosen representative.
  Let $h, h' \in [h]$.
  Then the following chain of identities holds true:
  \begin{equation*}
    \sigma([f],[0])
    = \int\limits_M G f \wedge * \, (h - h')
    = \int\limits_M G f \wedge * \, \delta d \alpha
    = \int\limits_M \delta d G f\wedge * \, \alpha
    = 0,
  \end{equation*}
  where in the second equality we exploited corollary \ref{cohomology}, while, in the last, we used the properties of $G$ and the coclosedness of $f$.
  Since $\sigma$ is bilinear and antisymmetric, this suffices to prove the independence of both entries from the chosen representative.

  To prove weak non-degenerateness, consider $h = \delta \beta$ with $\beta \in \Omega^2_0(M)$ arbitrary.
  Then
  \begin{equation*}
    \sigma([f],[h])
    = \int\limits_M G f \wedge * \, h
    = \int\limits_M G f \wedge * \, \delta \beta
    = \int\limits_M d G f \wedge * \, \beta
    = 0
  \end{equation*}
  implies that $d G f = 0$, \textit{i.e.}, $G f$ is a pure gauge solution, since the metric pairing is non-degenerate.
  Thus $f \in [0]$ by corollary \ref{cohomology}.
\end{proof}

It is worthwhile to remark that this last proposition extends the results of \cite{dappiaggi:2011rt}, where the same statement was proven under the hypothesis that $H^1(M)$ is trivial.
If the underlying manifold does not fulfil the assumptions on the topology, it turns out that it is not clear whether \eqref{symplectic} is independent from the choice of the representative of the equivalence classes involved.
Notice that it is easy to construct globally hyperbolic spacetimes which do not fulfil the hypotheses of proposition \ref{symplform}.
For example, any ultrastatic manifold $(M, g)$ isometric to $\mathbb{R} \times \mathbb{S}^1 \times \mathbb{S}^2$ with line element $ds^2 = -dt^2 + h$, where $t$ is a global time-coordinate and $h$ is a smooth, Riemannian and time-independent metric on the Cauchy surface $\mathbb{S}^1 \times \mathbb{S}^2$, has non-trivial first and second de Rham cohomology group.

It is important to mention that the origin of this problem could be traced to the set of equivalence classes we introduced.
In particular we identify two vector potentials differing by a closed $1$-form.
If we would have followed a more conservative point of view replacing closed forms with exact ones, then \eqref{symplectic} would turn out to be a well-defined weakly non-degenerate symplectic form, but we lose the connection to the field strength tensor.

Let us conclude this subsection by investigating some of the properties of the vector potential under conformal transformations.
As a first result we have the following corollary which is a direct consequence of \eqref{deltad}.

\begin{corollary}\label{confvecpot}
  Maxwell's equation $\eqref{Proca}$ is conformally invariant on $1$-forms of conformal weight $0$.
  That is, given two conformally isometric spacetimes $(M, g)$, $(\widetilde M, \tilde g)$ such that $\psi : M \rightarrow \widetilde M$ with $\psi^* \tilde g = \Xi^2 g$, then $A \in \Omega^1(M)$ is a solution of $\eqref{Proca}$ in $(M, g)$ if and only if $\tilde A \in \Omega^1(\widetilde M)$ solves $\eqref{Proca}$ in $(\widetilde M, \tilde g)$ and $A = \psi^* \tilde A$.
\end{corollary}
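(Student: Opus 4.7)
The statement is essentially a direct corollary of equation \eqref{deltad} in Proposition \ref{conformaltransf}, so my plan is to unwind the definitions and then invoke that identity. The key observation is that a form of conformal weight $0$ satisfies $A = \psi^*_{(0)} \tilde A = \Xi^{0} \psi^* \tilde A = \psi^* \tilde A$ by definition \eqref{confpullback}, so the conformally weighted pull-back collapses to the ordinary pull-back. This matches the hypothesis $A = \psi^* \tilde A$ in the statement.

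First I would assume $\tilde A \in \Omega^1(\widetilde M)$ solves $\tilde\delta d \tilde A = 0$ and set $A \doteq \psi^* \tilde A$. Applying identity \eqref{deltad} from Proposition \ref{conformaltransf} with $\tilde\phi = \tilde A$ gives
\begin{equation*}
0 = \psi^* (\tilde\delta d \tilde A) = \Xi^{-2}\, \delta d A.
\end{equation*}
Since $\Xi$ is strictly positive on $M$ by the definition of a conformal embedding, this forces $\delta d A = 0$, so $A$ solves \eqref{Proca} on $(M,g)$.

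For the converse direction I would run the same computation backwards: assuming $\delta d A = 0$ and again invoking \eqref{deltad}, we obtain $\psi^*(\tilde\delta d \tilde A) = \Xi^{-2} \delta d A = 0$. Since $\psi : M \to \widetilde M$ is a conformal isometry, it is a diffeomorphism onto all of $\widetilde M$, hence $\psi^*$ is injective on sections, and we conclude $\tilde \delta d \tilde A = 0$ on $\widetilde M$.

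There is no real obstacle here: the whole content of the corollary is the identification of the correct conformal weight, which is singled out precisely by the factor $\Xi^{-2}$ in \eqref{deltad}. In particular, the weight $0$ appearing in the statement is consistent with the remark immediately following Lemma \ref{new-lemma}, where the authors observe that $\delta d$ on $1$-forms is conformally invariant under the weighted pull-back of weight $0$. Thus the proof really only requires citing \eqref{deltad} and using the positivity of $\Xi$ together with the bijectivity of $\psi$.
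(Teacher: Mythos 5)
Your proof is correct and follows exactly the route the paper intends: the paper states this corollary without a written proof, remarking only that it ``is a direct consequence of \eqref{deltad}'', and your argument simply unwinds that identity in both directions, using the strict positivity of $\Xi$ and the bijectivity of $\psi$. Nothing is missing.
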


Notice that the properties of Lorenz solutions under conformal transformations on a special class of curved background were also emphasised in \cite{queva:2009pd}.

In this subsection we studied the equation $\eqref{Proca}$ by considering the equivalence classes of Lorenz solutions, \textit{i.e.}, solutions of the constrained hyperbolic system $\eqref{dynamics}$.
The Lorenz gauge condition, however, is not conformally invariant on $1$-forms under pull-backs resp. push-forwards as can be readily seen from \eqref{delta}.
Nevertheless we can obtain the following important result.

\begin{proposition}\label{confvecpot2}
  Let $(M, g)$, $(\widetilde M, \tilde g)$ be globally hyperbolic spacetimes such that $\psi : M \rightarrow \widetilde M$ is a conformal isometry with $\psi^* \tilde g = \Xi^2 g$.
  Further, let $\tilde A = \tilde G \tilde f$ with $\tilde f \in \Omega^1_{0,\delta}(\widetilde M)$ be a solution of $\eqref{dynamics}$ in $(\widetilde M, \tilde g)$ (\textit{i.e.} $\tilde G$ is the causal propagator associated to the Laplace-de Rham operator $\tilde \Box$ in $(\widetilde M, \tilde g)$).
  Then,
  \begin{equation*}
    A = G f = G \psi^*_{(-2)} \tilde f
    \quad \text{with} \quad
    f = \psi^*_{(-2)} \tilde f \in \Omega^1_{0,\delta}(M)
  \end{equation*}
  solves $\eqref{dynamics}$ in $(M, g)$ and $\psi^* \tilde A = A + d\lambda$ for some $\lambda \in C^\infty(M)$.
\end{proposition}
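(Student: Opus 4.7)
The proof decomposes into three sub-claims, treated in order of increasing difficulty.

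First, for $f=\psi^*_{(-2)}\tilde f=\Xi^{2}\psi^*\tilde f$ to lie in $\Omega^1_{0,\delta}(M)$: compact support is immediate since $\psi$ is a diffeomorphism onto $\widetilde M$ and $\Xi$ is smooth, and coclosedness is a direct application of Lemma \ref{new-lemma}, $\delta f=\delta\psi^*_{(-2)}\tilde f=\psi^*_{(-4)}\tilde\delta\tilde f=0$. Second, $A=Gf$ satisfies \eqref{dynamics} because $\Box A=\Box Gf=0$ by definition of the causal propagator, and $\delta A=\delta Gf=G\delta f=0$ by the commutation of $\delta$ with $\Box$ on 1-forms and hence with the fundamental solutions $G^{\pm}$.

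The third step is the construction of $\lambda\in C^\infty(M)$ with $\psi^*\tilde A=A+d\lambda$. Setting $A_1:=\psi^*\tilde A$, applying \eqref{deltad} to $\tilde\delta d\tilde A=0$ yields $\delta dA_1=0$, while \eqref{delta} applied to $\tilde\delta\tilde A=0$ yields $\delta A_1=2g^{-1}(\Upsilon,A_1)$; together these give $\Box A_1=\delta dA_1+d\delta A_1=d\delta A_1$. In the spirit of the proof of Lemma \ref{gaugeequiv}, I would pick a smooth solution $\mu$ of the scalar wave equation $\Box\mu=\delta A_1=2g^{-1}(\Upsilon,A_1)$, whose existence is guaranteed by global hyperbolicity of $(M,g)$ and smoothness of the right-hand side. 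A direct calculation then shows that $A':=A_1-d\mu$ is a Lorenz solution: $\delta A'=\delta A_1-\Box\mu=0$ and $\Box A'=\Box A_1-d\Box\mu=d\delta A_1-d\delta A_1=0$.

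The main obstacle is to identify $A'$ with the specific Lorenz solution $A=Gf$, so that the gauge equivalence of $\psi^*\tilde A$ and $A$ is realised by an exact 1-form rather than merely a closed one. My plan is to exploit the residual freedom $\mu\to\mu+\chi$ with $\Box\chi=0$, which shifts $A'$ by the exact 1-form $-d\chi$. Choosing $\mu$ via the retarded branch ensures $A'$ is spacelike-compact (matching $A$, whose support is spacelike-compact by compactness of $\supp f$), so that $D:=A'-A$ is a spacelike-compact Lorenz solution. The task reduces to exhibiting a smooth $\chi$ with $\Box\chi=0$ and $d\chi=-D$. To this end I would first establish the equality of the Faraday tensors $dA_1=dA$ through a Cauchy-data comparison: both are spacelike-compact $\Box$-harmonic 2-forms in $M$ built from the same data $\tilde f$, one through the pull-back of $\tilde G d\tilde f$ and the other through $G\,df$, and the conformal transformation identities of Proposition \ref{conformaltransf}, combined with the commutation of $d$ with the causal propagators $\tilde G$ and $G$, should force their initial data on a Cauchy surface of $M$ to coincide. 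Once $D$ is closed, it is also coclosed and spacelike-compact, and $\chi$ is obtained by solving the Cauchy problem for $\Box\chi=0$ with initial data arranged so that $d\chi=-D$; setting $\lambda=\mu-\chi$ then yields $\psi^*\tilde A=A+d\lambda$. This Faraday-tensor comparison is the core technical hurdle and is precisely the step in which the specific choice $f=\psi^*_{(-2)}\tilde f$ enters essentially, distinguishing $A$ from an arbitrary Lorenz solution merely gauge-equivalent to $A_1$.
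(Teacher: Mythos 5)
Your first two steps (that $f=\psi^*_{(-2)}\tilde f$ lies in $\Omega^1_{0,\delta}(M)$ by lemma \ref{new-lemma}, and that $A=Gf$ then solves \eqref{dynamics}) coincide with the paper's. The third step is where the actual content of the proposition sits, and there your plan has a genuine gap. The equality $d\psi^*\tilde A=dA$, which you correctly flag as the core hurdle, is only asserted to follow from a Cauchy-data comparison that ``should'' work. Both $\psi^*\tilde G\,d\tilde f$ and $G\,df$ are indeed spacelike-compact, closed and coclosed $2$-forms, but neither vanishes near any Cauchy surface: in the far past each reduces to an \emph{advanced} solution of a different equation with a different source (note $df=d(\Xi^2\psi^*\tilde f)\neq\Xi^2\psi^*d\tilde f$), so there is no surface on which their data visibly agree. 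Matching the data amounts to comparing $\psi^*\tilde G^{\pm}\tilde f$ with $G^{\pm}f$ separately, and that is precisely the step you have not supplied. The paper does exactly this from the outset: setting $\tilde A^{\pm}=\tilde G^{\pm}\tilde f$, it uses \eqref{deltad} and \eqref{delta} to get $\Box\,\psi^*\tilde A^{\pm}=f+2\,d\,g^{-1}(\Upsilon,\psi^*\tilde A^{\pm})$, then exploits the past- resp.\ future-compact support of $\psi^*\tilde A^{\pm}$ and the uniqueness of $G^{\pm}$ on such sections to conclude $\psi^*\tilde A^{\pm}=G^{\pm}f+2\,d\,G^{\pm}g^{-1}(\Upsilon,\psi^*\tilde A^{\pm})$; subtracting gives $\psi^*\tilde A=A+d\lambda$ with an explicit, globally defined $\lambda$. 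Your version cannot easily borrow this, because your source $\delta A_1=2\,g^{-1}(\Upsilon,A_1)$ is only spacelike-compact, so ``choosing $\mu$ via the retarded branch'' is not well defined: $G^{+}$ extends to past-compact sources, not to spacelike-compact ones.

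There is a second, independent gap. Even granting $dA_1=dA$, your construction only makes $D=A'-A$ \emph{closed}; to produce $\chi$ with $d\chi=-D$ you need $D$ to be \emph{exact}, and arranging Cauchy data for $\Box\chi=0$ so that $d\chi=-D$ already requires the tangential part of $D$ on a Cauchy surface $\Sigma$ to be exact there, i.e.\ essentially $H^1(M)=\{0\}$. Proposition \ref{confvecpot2} assumes nothing about cohomology, and the whole point of the statement is that the gauge transformation relating $\psi^*\tilde A$ to $A$ is exact rather than merely closed. The paper's explicit formula for $\lambda$ sidesteps cohomology entirely. The fix is to abandon the detour through $\mu$, $A'$ and the Faraday tensors and to split into retarded and advanced branches at the start.
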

\begin{proof}
Notice that, on account of lemma \ref{new-lemma}, $f=\psi^*_{(-2)}\tilde f$ is a coclosed compactly supported $1$-form on $M$ such that $A$ is a solution of \eqref{Proca}.
We obtain via \eqref{deltad}
  \begin{equation*}
    \psi^*_{(-2)} \tilde f
    = \psi^*_{(-2)} \tilde \delta d \tilde A^\pm
    = \delta d \psi^* \tilde A^\pm
    = f,
  \end{equation*}
  where $\tilde A^\pm = \tilde G^\pm \tilde f$ and $f \in \Omega^1_{0, \delta}(M)$.
  Moreover, according to \eqref{delta} we have
  \begin{equation*}
    0 = \psi^*_{{(-2)}} \tilde \delta \tilde A^\pm = \delta \psi^* \tilde A^\pm - 2 g^{-1}(\Upsilon, \psi^* \tilde A^\pm).
  \end{equation*}
  Taking the exterior derivative, this yields $d \delta \psi^* \tilde A^\pm = 2 d g^{-1}(\Upsilon, \psi^* \tilde A^\pm)$ and thus $\Box \psi^* \tilde A^\pm = f + 2 d g^{-1}(\Upsilon, \psi^* \tilde A^\pm)$.
  Applying the advanced resp. the retarded Green's operator and subtracting both results, we finally obtain
  \begin{equation*}
    \psi^* \tilde A = A + 2 d \big( G^- g^{-1}(\Upsilon, \psi^* \tilde A^-) - G^+ g^{-1}(\Upsilon, \psi^* \tilde A^+) \big)
  \end{equation*}
  and, therefore, $\psi^* \tilde A = A + d\lambda$.
\end{proof}

That is, if we apply the unweighted push-forward of the conformal isometry to the vector potential (so that the equations of motions are conformally invariant), we have to apply the push-forward of weight $-2$ to the corresponding test form.

\begin{corollary}\label{confvecpot3}
  Let $(M, g)$, $(\widetilde M, \tilde g)$ and $\psi$ be as before.
  For any $\tilde f, \tilde h \in \Omega^1_{0,\delta}(\widetilde M)$ we have
  \begin{equation*}
    G(f \otimes h) = \tilde G(\tilde f \otimes \tilde h),
  \end{equation*}
  where $f = \psi^*_{(-2)} \tilde f$, $h = \psi^*_{(-2)} \tilde h$ and $G, \tilde G$ are the causal propagator on $(M, g)$, $(\widetilde M, \tilde g)$ respectively.
  That is, the symplectic form defined in proposition \ref{symplform} is invariant under conformal transformations.
\end{corollary}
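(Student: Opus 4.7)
The plan is to reduce the right-hand side $\tilde G(\tilde f \otimes \tilde h) = \int_{\widetilde M} \tilde G \tilde f \wedge \tilde*\, \tilde h$ to the left-hand side by pulling the integrand back along $\psi$, invoking Proposition \ref{confvecpot2} together with the conformal transformation law of the Hodge dual \eqref{confhodge}, and finally disposing of a residual gauge term via a Stokes argument.

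First, since $\psi:M\to\psi(M)$ is a diffeomorphism and, as is implicit in the requirement $f,h\in\Omega^1_{0,\delta}(M)$, the supports of $\tilde f$ and $\tilde h$ lie in $\psi(M)$, a change of variables gives
\begin{equation*}
  \tilde G(\tilde f \otimes \tilde h) = \int_M \psi^*\bigl(\tilde G \tilde f \wedge \tilde*\, \tilde h\bigr).
\end{equation*}
For the test form, \eqref{confhodge} (with $p=1$, hence conformal factor $\Xi^{4-2}=\Xi^2$) combined with $\psi^*\tilde h = \Xi^{-2} h$, which follows from $h = \psi^*_{(-2)}\tilde h$, yields $\psi^*\tilde*\,\tilde h = *\,h$. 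For the solution, Proposition \ref{confvecpot2} provides $\lambda\in C^\infty(M)$ with $\psi^*\tilde G \tilde f = Gf + d\lambda$. Putting the two ingredients together,
\begin{equation*}
  \psi^*\bigl(\tilde G \tilde f \wedge \tilde*\, \tilde h\bigr) = Gf \wedge *\,h + d\lambda \wedge *\,h.
\end{equation*}

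Second, I would show that the gauge term vanishes. Since $h$ is coclosed, the definition $\delta = *\,d\,*$ and the fact that the Hodge star is an isomorphism imply $d*h = 0$; hence $d(\lambda \wedge *\,h) = d\lambda \wedge *\,h$. Although $\lambda$ itself is not compactly supported, the wedge $\lambda \wedge *\,h$ inherits compact support from $h$, so Stokes' theorem gives $\int_M d\lambda \wedge *\,h = 0$. The claimed identity $G(f\otimes h) = \tilde G(\tilde f \otimes \tilde h)$ then follows immediately and, together with the well-definedness already established in Proposition \ref{symplform}, also proves conformal invariance of the symplectic form.

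The only subtle point I anticipate is precisely the observation that $\lambda\wedge *\,h$ is compactly supported: the function $\lambda$, built in Proposition \ref{confvecpot2} out of the advanced and retarded Green's operators applied to $g^{-1}(\Upsilon,\psi^*\tilde A^\pm)$, is not itself compactly supported, so the Stokes step relies entirely on the cut-off provided by the test form $h$. Once this is checked, the remainder is mechanical and closely parallels the standard gauge-invariance arguments for analogous symplectic identities (e.g., the proof of well-definedness in Proposition \ref{symplform}).
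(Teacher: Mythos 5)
Your proof is correct and follows essentially the same route as the paper: pull back the integrand, use \eqref{confhodge} to get $\psi^*\tilde *\,\tilde h = *\,h$, invoke proposition \ref{confvecpot2} for $\psi^*(\tilde G\tilde f) = Gf + d\lambda$, and kill the gauge term using coclosedness of $h$. Your explicit Stokes argument (noting that $\lambda\wedge *\,h$ is compactly supported even though $\lambda$ is not) simply spells out the step the paper leaves implicit.
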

\begin{proof}
  Denoting by $\tilde *$ the Hodge operator associated to $\tilde g$, we obtain
  \begin{equation*}
    \tilde G(\tilde f \otimes \tilde h)
    = \int_{M} \psi^* (\tilde G \tilde f \wedge \tilde *\, \tilde h)
    = \int_{M} \psi^* (\tilde G \tilde f ) \wedge *\, h
    = \int_{M} (G f + d\lambda) \wedge *\, h
  \end{equation*}
  since $\psi^* (\tilde G\tilde f) = G f + d \lambda$ for some $\lambda \in C^\infty(M)$ by proposition \ref{confvecpot2}.
  The second term, $\int_{M} d\lambda \wedge *\, h$, vanishes because $h$ is coclosed.
\end{proof}

\subsection{Quantization of the Vector Potential}\label{sub:quant}

The study of the classical dynamics of \eqref{Proca} via \eqref{dynamics} allows us as a by-product to tackle the problem of quantizing the underlying theory.
We shall approach this issue in the algebraic formalism via a two-step procedure.
First, we associate to the field under investigation a suitable $*$-algebra of observables which is compatible with the dynamics.
Then, in the second step, we endow the resulting algebra with a suitable (algebraic) quantum state.
We will leave this problem to the next section, while, in the present, we identify the so-called field algebra and we show that its structure allows us to describe the vector potential as a locally covariant conformal quantum field theory along the lines of \cite{brunetti:2003vo} and \cite{pinamonti:2009zn}.

In order to study the vector potential as a locally covariant conformal quantum field theory we need to introduce some additional mathematical tools.
We introduce the following categories:
\begin{itemize}
  \item
    $\mathfrak{CMan}$: the category whose objects are oriented and time-oriented globally hyperbolic spacetimes $(M, g)$.
    The arrows are conformal embeddings $\psi: M\hookrightarrow \widetilde M$ which preserve orientation and time-orientation and are such that $\psi(M)$ is a causally convex, hence globally hyperbolic, subset of $\widetilde M$.
    The composition of morphisms is that of smooth maps and the unit element is the identity map.
  \item
    $\mathfrak{CMan}'$: the subcategory of $\mathfrak{CMan}$ whose objects are oriented and time-oriented globally hyperbolic spacetimes $(M, g)$ with either $H^1(M)=\{0\}$ or $H^2(M)=\{0\}$.
    The arrows are the same as those of $\mathfrak{CMan}$.
  \item
    $*\text{-}\mathfrak{Alg}$: the category whose objects are topological unital $*$-algebras over $\mathbb{C}$ and whose morphisms are continuous, injective $*$-homomorphisms.
    Again the composition of morphisms is that of smooth maps and the unit element is the identity map.
\end{itemize}

\begin{definition}
  A \textbf{locally covariant conformal quantum field theory} is a covariant functor $\mathscr{A} : \mathfrak{CMan} \to *\text{-}\mathfrak{Alg}$.
  The theory is called \textbf{causal} if for any two morphisms $\psi_1 : M_1 \to M$ and $\psi_2: M_2 \to M$ of $\mathfrak{CMan}$ such that $\psi_1(M_1)$ and $\psi_2(M_2)$ are causally separated
  \begin{equation*}
    \big[
      \mathscr{A}(\psi_1)(\mathscr{A}(M_1, g_1)),
      \mathscr{A}(\psi_2)(\mathscr{A}(M_2, g_2))
    \big] = 0.
  \end{equation*}
  Furthermore, the theory is said to satisfy the \textbf{time-slice axiom} if $\mathscr{A}(\psi)(\mathscr{A}(M, g)) = \mathscr{A}(\widetilde M, \tilde g)$ for all morphisms $\psi : M \to \widetilde M$ of $\mathfrak{CMan}$ such that $\psi(M)$ contains a Cauchy surface of $\widetilde{M}$.
\end{definition}

In order to verify whether the vector potential fits in the scheme depicted by the last definition, we have to make sure that it is possible to associate to the vector potential a suitable $*$-algebra of observables.
In the algebraic approach to quantum field theory on curved spacetimes, the standard procedure calls for identifying the so-called \emph{field algebra} as a quotient of the Borchers-Uhlmann algebra with respect to the ideal generated by the equations of motion and the canonical commutation relations.

We recall that the solutions of \eqref{Proca} have conformal weight $0$ and that proposition \ref{symplform} guarantees the existence of a weakly non-degenerate symplectic form for the space of classical solutions of \eqref{dynamics} if $(M, g)$ has either trivial first or second de Rham cohomology group.
Similarly to \cite{pinamonti:2009zn} we can write:

\begin{definition}\label{algebra}
  We call {\bf field algebra} of the vector potential on a globally hyperbolic spacetime $(M, g)$ with $H^1(M) = \{0\}$ or $H^2(M) = \{0\}$ the $*$-algebra defined as the quotient $\mathcal{A}(M) \doteq \mathcal{A}^0(M) / \mathcal{I}(M)$.
  Here, $\mathcal{A}^0(M)$ is the free unital $*$-algebra generated by $\Omega_{0,\delta}^{1}(M, \mathbb{C})$, \textit{i.e.},
  \begin{equation*}
    \mathcal{A}^0(M) \doteq \mathbb{C} \oplus \bigoplus_{n=1}^\infty \Omega_{0,\delta}^{1}(M, \mathbb{C})^{\otimes n}
  \end{equation*}
  and $\mathcal{I}(M)$ is the closed $*$-ideal of $\mathcal{A}(M)$ generated by the elements
  \begin{equation*}
    \delta d \Omega^{1}_0(M, \mathbb{C})
    \quad \text{and} \quad
    -i G(f \otimes h) \oplus 0 \oplus (f \otimes h - h \otimes f),
  \end{equation*}
  for all $f, h \in \Omega^{1}_{0,\delta}(M, \mathbb{C})$, where $G$ is the causal propagator of $\Box$ on $(M, g)$.
  The $*$-operation is defined by complex conjugation.
\end{definition}

Recalling proposition \ref{confvecpot2}, we obtain the sought result namely that the field algebra yields a locally covariant conformal quantum field.

\begin{proposition}\label{lccqft}
  The vector potential can be described as a locally covariant conformal quantum field theory $\mathscr{A} : \mathfrak{CMan}' \to *\text{-}\mathfrak{Alg}$ which assigns to each object $(M, g)$ in $\mathfrak{CMan}'$ the $*$-algebra $\mathcal{A}(M)$ and to each conformal embedding $\psi : M \to \widetilde M$ with $\psi^* \tilde g = \Xi^2 g$ the $*$-homomorphism defined by $\alpha_\psi (q_M(a)) = q_{\widetilde M}(\psi_{(-2)\;*} a)$, where $a \in \mathcal{A}^0(M)$ and $q_M: \mathcal{A}^0(M) \to \mathcal{A}(M)$ is the quotient map.
  $\mathscr{A}$ is causal and satisfies the time-slice axiom.
\end{proposition}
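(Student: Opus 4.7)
The plan is to verify in order: (a) that $\alpha_\psi(q_M(a)) = q_{\widetilde M}(\psi_{(-2)\;*} a)$ descends to a continuous $*$-homomorphism $\mathcal{A}(M) \to \mathcal{A}(\widetilde M)$; (b) that $\alpha$ respects compositions and units; (c) that each $\alpha_\psi$ is injective; (d) causality; and (e) the time-slice axiom. Throughout, $\psi$ is a diffeomorphism onto an open, causally convex, globally hyperbolic subset $\psi(M) \subset \widetilde M$.

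For (a), at the level of the free algebras, the form $\psi_{(-2)\;*} f$ for $f \in \Omega^1_{0,\delta}(M, \mathbb{C})$ is compactly supported in $\psi(M)$ and extends by zero to $\widetilde M$; its coclosedness is an immediate consequence of Lemma \ref{new-lemma}, which gives $\tilde\delta\, \psi_{(-2)\;*} f = \psi_{(-4)\;*}\, \delta f = 0$. Multiplicative and linear extension thus yields a $*$-homomorphism $\mathcal{A}^0(M) \to \mathcal{A}^0(\widetilde M)$. To descend to the quotients I need each generator of $\mathcal{I}(M)$ to land in $\mathcal{I}(\widetilde M)$. For the equation-of-motion generators, a rearrangement of \eqref{deltad} yields $\psi_{(-2)\;*}(\delta d \alpha) = \tilde\delta d(\psi_* \alpha)$ with $\psi_* \alpha \in \Omega^1_0(\widetilde M, \mathbb{C})$; for the commutator generators, Corollary \ref{confvecpot3} delivers $G(f \otimes h) = \tilde G(\psi_{(-2)\;*} f \otimes \psi_{(-2)\;*} h)$, where the causal propagator $G$ on $M$ is identified with the restriction of $\tilde G$ to sections supported in $\psi(M)$ --- an identification legitimated by causal convexity. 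Step (b) is standard: the chain rule for conformal factors together with naturality of push-forward gives $(\psi_2 \circ \psi_1)_{(-2)\;*} = \psi_{2,(-2)\;*} \circ \psi_{1,(-2)\;*}$, with $\mathrm{id}_{(-2)\;*} = \mathrm{id}$ trivially.

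Step (c) is the main obstacle. On the one-particle quotient $\mathcal{S}(M)$, suppose $[\psi_{(-2)\;*} f] = 0$ in $\mathcal{S}(\widetilde M)$. For every $h \in \Omega^1_{0,\delta}(M, \mathbb{C})$, Corollary \ref{confvecpot3} and the hypothesis give $\sigma([f],[h]) = \tilde G(\psi_{(-2)\;*} f \otimes \psi_{(-2)\;*} h) = 0$, and weak non-degeneracy of $\sigma$ (Proposition \ref{symplform}, whose hypotheses hold because we restrict to $\mathfrak{CMan}'$) forces $[f] = 0$. Lifting this to the full polynomial quotient is standard for CCR-type algebras, e.g.\ via a faithful Fock representation built on $\mathcal{S}(\widetilde M)$. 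Claim (d) is immediate from $\supp G f \subseteq J^+(\supp f) \cup J^-(\supp f)$: for $\psi_1(M_1)$ and $\psi_2(M_2)$ causally separated in $M$, the commutator evaluates to $-i G(\psi_{1,(-2)\;*} f_1 \otimes \psi_{2,(-2)\;*} f_2)$ and vanishes. For (e), given $\tilde f \in \Omega^1_{0,\delta}(\widetilde M, \mathbb{C})$ and a pair of Cauchy surfaces $\Sigma^\pm$ of $\widetilde M$ inside $\psi(M)$, the standard partition-of-unity cutoff argument applied to the Lorenz solution $\tilde G \tilde f$ produces $\tilde f' \in \Omega^1_{0,\delta}(\widetilde M, \mathbb{C})$ supported in $\psi(M)$ with $\tilde f - \tilde f' \in \tilde\delta d\, \Omega^1_0(\widetilde M, \mathbb{C})$; then $\tilde f' = \psi_{(-2)\;*} f$ for a unique $f \in \Omega^1_{0,\delta}(M, \mathbb{C})$, establishing $\alpha_\psi(\mathcal{A}(M)) = \mathcal{A}(\widetilde M)$.
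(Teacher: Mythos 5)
Your proposal is correct and follows essentially the same route as the paper's proof: coclosedness of the pushed-forward test forms via Lemma \ref{new-lemma}, preservation of the ideal via \eqref{deltad} and Corollary \ref{confvecpot3}, the extension-by-zero from $\psi(M)$ to $\widetilde M$, causality from the support properties of $G$, and the time-slice axiom from the standard cutoff of the Lorenz solution. The only real deviation is the injectivity step, which you route through weak non-degeneracy of $\sigma$ (Proposition \ref{symplform}) whereas the paper argues directly that $\psi_{(-2)\;*} f \in \tilde\delta d\, \Omega^1_0(\widetilde M)$ would force $\psi_* \, d G f = 0$ while $d G f \neq 0$; these are equivalent in content (non-degeneracy is itself established via $d G f = 0 \Leftrightarrow [f]=0$), and both your write-up and the paper's leave the lift from the one-particle level to the full tensor algebra at the same level of detail.
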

\begin{proof}
  According to definition \ref{algebra} we can associate to $(M, g)$ the conformal field algebra $\mathcal{A}(M)$.
  Picking any arrow $\psi$ between $(M, g)$ and another object $(\widetilde M, \tilde g) \in \mathfrak{CMan}'$, we see that it is a conformal isometry from $(M, g)$ into $(\psi(M), \tilde g\!\restriction_{\psi(M)})$ which thus is also a proper map.
  In other words, $\Omega^{1}_{0}(M)$ is isomorphic to $\Omega^{1}_{0}(\psi(M))$.
  Since the codifferential acting on $1$-forms is invariant under the push-forward of conformal weight $-2$ -- see lemma \ref{new-lemma}, we also have an isomorphism $\Omega^{1}_{0,\delta}(M) \cong \Omega^{1}_{0,\delta}(\psi(M))$ via $\psi_{(-2)\;*}$.
  This entails that $\psi_{(-2)\;*} \mathcal{A}^0(M) = \mathcal{A}^0(\psi(M))$, \textit{i.e.}, given for example $a = a_0 \oplus a_1 \oplus (a_{2,1} \otimes a_{2,2}) \oplus \dotsb \in \mathcal{A}^0(M)$, we obtain $\psi_{(-2)\;*} a = a_0 \oplus \psi_{(-2)\;*} a_1 \oplus (\psi_{(-2)\;*} a_{2,1} \otimes \psi_{(-2)\;*} a_{2,2}) \oplus \dotsb \in \mathcal{A}^0(\psi(M))$.
  Furthermore, since $\psi_{(-2)\;*} \delta d f = \tilde \delta d \psi_* f$ for all $f \in \Omega^1_0(M)$ by \eqref{deltad} and since the symplectic form is invariant under conformal transformations by corollary \ref{confvecpot3}, one sees that $\psi_{(-2)\;*} \mathcal{I}(M) = \mathcal{I}(\psi(M))$.

  In order to understand that the map $\psi_{(-2)\;*}$ is also a $*$-homomorphism, notice that all operations involve only real structures and thus all complex conjugations are left untouched by $\psi_{(-2)\;*}$.
  To extend the result to $\widetilde M$ let us remember that $\psi(M) \subset \widetilde M$ and, thus, since all sections involved are compactly supported, we can extend each element of $\Omega^1_0(\psi(M))$ to $0$ on all points of $\widetilde M$ not lying in $\psi(M)$.
  Consequently $\mathcal{A}^0(\psi(M)) \subset \mathcal{A}^0(\widetilde M)$ and $\mathcal{I}(\psi(M)) \subset \mathcal{I}(\widetilde M)$, \textit{i.e.}, $\alpha_\psi \mathcal{A}(M) = \mathcal{A}^0(\psi(M)) / \mathcal{I}(\widetilde M) \subset \mathcal{A}(\widetilde M)$.
  Moreover, there does not exist $f \in \Omega^1_{0,\delta}(M)$ such that $f \notin \delta d \Omega^1_0(M)$ and $\psi_{(-2)\;*} f \in \tilde \delta d \Omega^1_0(\widetilde M)$ as this would imply that $d \tilde G \psi_{(-2)\;*} f = \psi_* d G f = 0$ while $d G f \neq 0$.
  To summarize, we have shown that $\alpha_\psi$ is a continuous (projections and push-forwards are continuous), injective $*$-homomorphism.

  Causality is a direct by-product of the definition of the ideal $\mathcal{I}(M)$ via the causal propagator $G$.
  As a matter of fact, if we consider two arrows $\psi_i: M_i \to M$ with causally separated images and two elements $f \equiv 0 \oplus f \oplus 0 \oplus \dotsb \in \mathcal{A}(\psi(M_1)) \subset \mathcal{A}(M)$ and $h \equiv 0 \oplus h \oplus 0 \oplus \dotsb \in \mathcal{A}(\psi(M_2)) \subset \mathcal{A}(M)$, then $[f, h] = f \otimes h - h \otimes f = i G(f \otimes h) = 0$ on account of the support properties of the causal propagator.
  Since the chosen elements are generating $\mathcal{A}(\psi(M_1))$ and $\mathcal{A}(\psi(M_2))$ respectively, this suffices to draw the sought conclusion.

  The time-slice axiom can be proven using a procedure similar to the one employed in \cite{dappiaggi:2011rt}:
  Let us pick any arrow $\psi: M\hookrightarrow \widetilde M$ between two objects such that $\psi(M)$ contains a Cauchy surface of $\widetilde M$.
  Consider now any $f \in \Omega^{1}_{0,\delta}(\widetilde M)$ and choose $\chi \in C^\infty(\widetilde M)$ such that $\chi = 1$ in $J^+(\psi(M); \widetilde M) \setminus \psi(M)$ and $\chi = 0$ in $J^-(\psi(M); \widetilde M) \setminus \psi(M)$.
  If we define $h \doteq f - \delta d(G^- f + \chi G f) = - \delta d (\chi G f)$, we have identified a compactly supported $1$-form $h$ in $\psi(M)$ which differs from $f$ by an element in $\mathcal{I}(\widetilde M)$.
  Since $\Omega^{1}_{0,\delta}(\widetilde M)$ generates $\mathcal{A}^0(\widetilde M)$, this suffices to conclude the proof.
\end{proof}

Note that the restrictions on the topology are not used anywhere in the last proof.
Indeed, had we chosen to define the field algebra for the vector potential in exactly the same way as we did in definition \ref{algebra} also for non-trivial $H^1(M)$ and $H^2(M)$, we would have also arrived at a locally conformally covariant formulation.
Such a choice, however, leads to a quantum theory which does not agree with the classical theory based on the field strength tensor as defined at the beginning of section \ref{sub:dynamics}.
As a matter of fact, as we commented already, the source of the problem lies in our choice according to which two vector potentials are equivalent if they differ by a closed $1$-form.
If we would have stuck with the traditional approach to identify two configurations differing by exact $1$-forms, the symplectic form would have been well-defined and weakly non-degenerate.
From a physical point of view the difference between these two perspectives is related to the Aharonov-Bohm observables, an effect which admittedly we are discarding with our choices (for a recent analysis of this issue, refer to \cite{Sanders:2012sf}).

If one wants to keep working with the former approach, a potential way out to construct a well-defined field algebra would be to mimic the approach of \cite{dappiaggi:2011yq,lang:2010hy} for the field strength tensor according to which the topological obstructions could be circumvented by associating to the underlying physical system the so-called universal algebra.
We will not give further details here, but we want to stress that, although this approach is feasible, there is a heavy price to pay: As shown in \cite{dappiaggi:2011yq,lang:2010hy} for the case of the field strength, the theory cannot be described within the framework of general local covariance.

On the contrary, if one would work with the more traditional approach, a very elegant way to quantize Maxwell's equations consists of following the BRST scheme which can be translated to the algebraic language as shown even in a broader scenario in \cite{hollands:2008zr} and more recently in \cite{fredenhagen:2011fk}.

\section{Hadamard States}\label{sec:hadamard}

This section will be fully devoted to the analysis of the second step in the algebraic quantization scheme, namely the choice of a suitable (algebraic) state.
Let us recall that this is a continuous linear functional $\omega : \mathcal{A}(M) \to \mathbb{C}$ such that $\omega(e) = 1$ and $\omega(a^*a) \geq 0$, where $e$ is the algebra unit while $a$ is an arbitrary element of $\mathcal{A}(M)$.
Notice that here $\mathcal{A}(M)$ could stand for an arbitrary unital topological $*$-algebra and not necessary for the one from definition \ref{algebra}.
The GNS theorem entails that the choice of $\omega$ is tantamount to the identification of a triple $(\mathcal{D}_\omega, \pi_\omega, \Omega_\omega)$ where $\mathcal{D}_\omega$ is a dense subset of an Hilbert space $\mathcal{H}_\omega$ and $\pi_\omega : \mathcal{A}(M) \to \mathcal{L}(\mathcal{D}_\omega)$ is a representation of the algebra in terms of linear operators acting on $\mathcal{D}_\omega$.
Moreover, $\Omega_\omega$ is a norm $1$ vector in $\mathcal{D}_\omega$ such that $\mathcal{D}_\omega = \{ \pi_\omega(a) \Omega,\; \forall a \in \mathcal{A}(M) \}$.
It also holds that, for all $a\in\mathcal{A}(M)$, the operators $\pi_\omega(a)$ are closable and $\omega(a)=\left(\Omega_\omega,\pi_\omega(a)\Omega_\omega\right)$.
From now on we will further restrict our attention to the subclass of possible states which are quasi-free, \textit{i.e.}, all $n$-point correlation functions $\omega_n$ can be constructed out of the two-point one, $\omega_2$, that is $\omega_n = 0$ for $n$ odd, whereas if $n$ is even
\begin{equation}\label{n-pt}
  \omega_n(f_1 \otimes \dotsm \otimes f_n)
  = \sum\limits_{\pi_n \in S_n} \prod\limits_{i=1}^{n/2} \omega_2(f_{\pi_n(2i-1)} \otimes f_{\pi_n(2i)}),
\end{equation}
where $f_i \in \mathcal{A}(M)$, $i = 1, 2, \dotsc, n$ and where $S_n$ denotes the ordered permutations of $n$ elements, namely $\pi_n$ fulfils that $\pi_n(2i-1) < \pi_n(2i)$ and $\pi_n(2i-1) < \pi_n(2i+1)$ for all $1 \leq i \leq n/2$.

The main goal of our investigation is to develop a procedure to identify in between the plethora of possible quasi-free states those which are physically meaningful.
On curved backgrounds it is nowadays widely accepted that such assertion coincides with asking that $\omega$ must be of Hadamard form, \textit{i.e.}, it must satisfy a condition on the singular structure\footnote{The identification of the Hadamard form with a specific condition on the structure of the two-point function is valid regardless of the assumption that $\omega$ is quasi-free as first proven in \cite{sanders:2010gf}.} of $\omega_2$.
Although from a mathematical point of view \emph{Hadamard states} can be characterised via the sophisticated tools of microlocal analysis as we shall discuss later, they are also noteworthy from a physical point of view.
As a matter of fact their ultraviolet behaviour mimics that of the Minkowski vacuum and the quantum fluctuations of all observables, such as, for example, the smeared components of the stress-energy tensor are bounded. These two conditions which are necessary for any state to be reasonably called physically sensible.
In order to treat interactions perturbatively, the algebra of fields has to be enlarged to include also the so-called Wick polynomials.
It was shown in \cite{hollands:2001ce} that a locally covariant notion of Wick products of fields can be defined for scalar fields by exploiting Hadamard states.
The reason is two-fold: On the one hand, in every normal neighbourhood, the singular structure of the two-point function of an Hadamard state depends only on the geometry of the underlying background and, hence it is universal.
On the other hand, the Hadamard condition is a statement concerning the singular structure of the bidistribution $\lambda_2$ naturally associated to $\omega_2$.
This allows us ultimately to use the H\"ormander criteria for the product of distributions.
Although the results of Hollands and Wald have been stated for scalar fields and later extended to Dirac fields, we expect that the very same procedure holds true also for the vector potential.

From a more practical perspective the study of Hadamard states is commonly divided in two distinct problems: existence and explicit construction.
If we focus on the specific case of the vector potential and its field algebra, as per definition \ref{algebra}, the first of these two problems has been already partly tackled: As first proven in \cite{fewster:2003or} employing a deformation technique first introduced in \cite{fulling:1981fm} and under the additional hypothesis that the underlying background $(M, g)$ is a globally hyperbolic spacetime with compact Cauchy surface $\Sigma$, Hadamard states do exist.
Yet, even though it is easily conceivable that such a result could be extended by removing the topological assumption on $\Sigma$, the deformation procedure cannot yield a mean to concretely construct a Hadamard state on non-trivial backgrounds.
This is instead the aim of this section and, as anticipated in the introduction, we shall achieve our goal on those asymptotically flat spacetimes $(M, g)$ which are genuine objects of $\mathfrak{CMan}'$.
The tool we shall use is the bulk-to-boundary reconstruction technique which we will now develop in detail for the vector potential and which has already been successfully applied to scalar and Dirac fields on asymptotically flat and cosmological spacetimes \cite{dappiaggi:2011ya,dappiaggi:2006cg,dappiaggi:2009th,dappiaggi:2009fk} as well as on Schwarzschild spacetime to construct the Unruh state \cite{dappiaggi:2011kx}.

\subsection{The Projection to the Boundary}\label{sub:bulk2boundary}

On account of the analysis of subsection \ref{sub:geometry}, every asymptotically flat spacetime $(M, g)$ comes endowed with a natural, codimension $1$, null, differentiable boundary $\mathscr{I}^+$ which has the property of being intrinsic and universal.
This suggests that we should use null infinity as the screen on which to encode the information of a bulk field theory and that a boundary theory has to be defined in such a way that it does not depend on the chosen bulk $(M, g)$.
Henceforth, let $(M, g)$ be an asymptotically flat spacetime as per subsection \ref{sub:asymp_flat}.
We shall further assume that either the first or the second de Rham cohomology group of $M$ is trivial so that the hypotheses of proposition \ref{symplform} are met and we can associate to the vector potential on $M$ a well-defined field algebra.
It is worth mentioning that the topological constraints imposed on $M$ might be automatically satisfied by all asymptotically flat spacetimes\footnote{We are grateful to the referee for pointing out this possibility.}.
This conjecture would require a detailed analysis which is beyond the scope of this paper.
Let us call $\psi$ the embedding into the unphysical spacetime $(\widetilde M, \tilde g)$ and let $\iota : \mathscr{I}^+ \hookrightarrow \widetilde M$ be the smooth embedding of null infinity into the unphysical spacetime such that $g_B = \xi_B^2 \iota^* \tilde g$ is of Bondi form and $\xi_B^{-1} \iota^* n$ is complete.
Here $\xi_B$ is the very same function introduced in section \ref{sub:geometry}.
In the following we shall only work in this particular frame which induces on null infinity a natural measure $\mu_\mathscr{I} \doteq \sin^2\!\theta\, du\, d\theta\, d\varphi$.

The fibers of the pull-back bundle $\iota^* T \widetilde M$ can be decomposed at every point $p \in \mathscr{I}^+$ into the subspace $N_p$ spanned by the vectors $\partial_u$ and $\partial_\Xi$, and the subspace $S_p$ spanned by $\partial_\theta$ and $\partial_\varphi$.
$g_B$ can then be split as well, to induce on $S_p$ the metric tensor of the unit $2$-sphere which we will denote by $g_S$.
Thus we obtain the trivial vector bundle $S \to \mathscr{I}^+$ with metric tensor $g_S$ and it's dual $S^*$ equipped with the inverse metric tensor.
These bundles are easily defined intrinsically on $\mathscr{I}^+$ as long as we remember that under a BMS transformation $\gamma \in BMS$ we have the scaling behaviour $g_S \mapsto K_\gamma^2 g_S$ induced by the metric tensor on $\mathscr{I}^+$ itself.
Moreover we can define a projection $\pi: \iota^* T^* \widetilde M \hookrightarrow S^*$ from $\iota^* T^* \widetilde M$ into the subbundle $S^*$.

Together with the measure $\mu_\mathscr{I}$, $g_S$ allows the definition of a (non-negative) pairing
\begin{equation*}
  \langle f, h \rangle \doteq \int\limits_{\mathscr{I}^+} g_S^{-1}(f, h)\, \mu_{\mathscr{I}}
\end{equation*}
for all sections $f, h' \in \Gamma(S^*)$ for which the integral converges.
Using this pairing we can now construct a symplectic space intrinsically on $\mathscr{I}^+$:

\begin{proposition}
  The set
  \begin{equation}\label{boundspace}
    \mathcal{S}(\mathscr{I}) \doteq \{ f \in \Gamma(S^*) \mid \langle f, f \rangle < \infty \textrm{ and } \langle \partial_u f, \partial_u f \rangle < \infty \}
  \end{equation}
  is a symplectic space invariant under the representation $\Pi^0$ of the BMS group (cf. \eqref{BMSrep}) if endowed with the following weakly non-degenerate symplectic form $\varsigma : \mathcal{S}(\mathscr{I}) \times \mathcal{S}(\mathscr{I}) \to \mathbb{R}$ such that
  \begin{equation}\label{boundsympl}
    \varsigma(f, h) \doteq \langle f, \partial_u h \rangle - \langle \partial_u f, h \rangle = 2\, \langle f, \partial_u h \rangle
  \end{equation}
  for any $f, h \in \mathcal{S}(\mathscr{I})$.
\end{proposition}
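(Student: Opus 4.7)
The plan is to establish four things in turn: well-definedness and bilinearity of $\varsigma$, the identity $\varsigma(f,h) = 2\,\langle f, \partial_u h\rangle$ (which encodes antisymmetry), weak non-degeneracy, and invariance of $(\mathcal{S}(\mathscr{I}),\varsigma)$ under $\Pi^0$.

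First, well-definedness of $\varsigma$ is immediate from Cauchy--Schwarz for the non-negative pairing $\langle\cdot,\cdot\rangle$: for $f, h \in \mathcal{S}(\mathscr{I})$,
\begin{equation*}
|\langle f, \partial_u h\rangle| \leq \langle f, f\rangle^{1/2}\,\langle \partial_u h, \partial_u h\rangle^{1/2} < \infty,
\end{equation*}
and symmetrically for $\langle \partial_u f, h\rangle$. Antisymmetry of $\langle f, \partial_u h\rangle - \langle \partial_u f, h\rangle$ in $f$ and $h$ is manifest from the symmetry of $\langle\cdot,\cdot\rangle$, and bilinearity is inherited from that of $g_S^{-1}(\cdot,\cdot)$.

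Second, the identity $\langle f, \partial_u h\rangle - \langle \partial_u f, h\rangle = 2\,\langle f, \partial_u h\rangle$ (equivalently, $\langle f, \partial_u h\rangle + \langle \partial_u f, h\rangle = 0$) would be obtained by integration by parts in $u$, after the Fubini splitting $\mu_\mathscr{I} = du \wedge \mu_{\mathbb{S}^2}$. The boundary terms at $u = \pm\infty$ vanish because the two defining bounds in \eqref{boundspace} place the map $u \mapsto f(u,\cdot)$ into $H^1(\mathbb{R}; L^2(\mathbb{S}^2, S^*))$, which is continuously embedded in $C_0(\mathbb{R}; L^2(\mathbb{S}^2, S^*))$; hence $\int_{\mathbb{S}^2} g_S^{-1}(f, h)(u,\cdot)\,\mu_{\mathbb{S}^2}$ tends to zero as $u \to \pm\infty$, and likewise for the symmetric bracket with $h$.

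Third, for weak non-degeneracy I would proceed by duality. Assume $\varsigma(f, h) = 0$ for every $h \in \mathcal{S}(\mathscr{I})$, and test against smooth compactly supported sections $h$ of $S^*$ over $\mathscr{I}^+$, which certainly belong to $\mathcal{S}(\mathscr{I})$. One then obtains $\langle f, \psi\rangle = 0$ for every $\psi = \partial_u h$, i.e.\ for every smooth compactly supported section $\psi$ subject to the sole constraint $\int_{-\infty}^{\infty}\psi(u, z, \bar{z})\, du = 0$ pointwise on $\mathbb{S}^2$. A standard duality argument then forces $f$ to be $u$-independent in the sense of distributions; since a non-zero $u$-constant section of $S^*$ fails $\langle f, f\rangle < \infty$ on $\mathbb{R}\times\mathbb{S}^2$, the finiteness condition on $f$ yields $f = 0$.

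Fourth, invariance under $\Pi^0$ is the step I expect to be the main obstacle, because it demands a careful bookkeeping of conformal factors under a generic $\gamma \in BMS$ that combines a M\"obius rotation on the celestial sphere (conformal factor $K_\gamma$) with a supertranslation. The ingredients $g_S^{-1}$, $\mu_{\mathbb{S}^2}$, the measure $du$ and the operator $\partial_u$ all scale under BMS by suitable powers of $K_\gamma$. Performing the change of variables dictated by \eqref{BMS} together with the representation rule \eqref{BMSrep} for $w = 0$, I expect all powers of $K_\gamma$ to cancel in both $\langle \Pi^0_\gamma f, \Pi^0_\gamma h\rangle$ and $\langle \Pi^0_\gamma f, \partial_{u'}\Pi^0_\gamma h\rangle$; this yields simultaneously that $\Pi^0_\gamma$ preserves $\mathcal{S}(\mathscr{I})$ (finite norms remain finite) and that $\varsigma$ is $\Pi^0$-invariant. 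The choice $w = 0$ here is the correct one for compatibility with the conformal weight of the vector potential identified in corollary \ref{confvecpot}.
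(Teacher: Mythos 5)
Your proposal is correct and follows essentially the same route as the paper: Cauchy--Schwarz for well-definedness, decay of $f$ as $u \to \pm\infty$ to kill the boundary terms in the integration by parts (the paper cites a footnote of Moretti for what your Sobolev-embedding argument reproves), weak non-degeneracy from the non-degeneracy of $g_S$ together with the triviality of the kernel of $\partial_u$ on $\mathcal{S}(\mathscr{I})$, and the $K_\gamma$ bookkeeping for BMS invariance. The cancellation you leave as an expectation does hold and is the paper's one-line computation: under \eqref{BMS} one picks up $K_\gamma^{-2}$ from $g_S^{-1}$, $K_\gamma^{-1}$ from $\partial_u$, and $K_\gamma^{3}$ from the measure, whose product is $1$.
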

\begin{proof}
  As a first step we check that the right-hand side of \eqref{boundsympl} is well-defined.
  Using the Cauchy-Schwarz inequality, we see that the right-hand side of \eqref{boundsympl} is well-defined by definition of $\mathcal{S}(\mathscr{I})$.
  Since $(f, f)_{\mathscr{I}}$ with $f \in \mathcal{S}(\mathscr{I})$ is integrable, it turns out that $\lim_{u \to \pm \infty} f = 0$ as one can establish with a minor readaptation of the proof in \cite[Footnote 7]{moretti:2008ay}.
  Therefore we can employ integration by parts to obtain the equality
  \begin{equation*}
    \langle f, \partial_u h \rangle - \langle \partial_u f, h \rangle = 2\, \langle f, \partial_u h \rangle.
  \end{equation*}
  Moreover we see by direct inspection that the symplectic form is both bilinear and antisymmetric.
  The non-degenerateness of $\varsigma(\cdot\,, \cdot)$ follows from the non-degenerateness of $g_S$ and the trivial kernel of $\partial_u$ acting on $\mathcal{S}(\mathscr{I})$ (which is a consequence of the condition $\langle f, f \rangle < \infty$).

  Concerning BMS invariance, let us consider a generic element $\gamma \in BMS$.
  According to the translation invariance of the measure, it holds
  \begin{equation*}
    \varsigma(f, h)
    \mapsto 2 \int\limits_{\mathscr{I}^+} K_\gamma^{-2}\, g_S^{-1} (f, K_\gamma^{-1} \partial_u h) \, K_\gamma^{3} \, \mu_{\mathscr{I}}
    = \varsigma(f, h),
  \end{equation*}
  which is indeed the sought property.
\end{proof}

We can employ this symplectic space to associate a $*$-algebra to null infinity:

\begin{definition}
  We call \textbf{boundary algebra} $\mathcal{A}(\mathscr{I})$ the $*$-algebra realised as the quotient between the free unital $*$-algebra $\mathcal{A}(\mathscr{I})^0 \doteq \mathbb{C} \oplus \bigoplus_{n=1}^\infty \mathcal{S}(\mathscr{I})^{\otimes n} \otimes \mathbb{C}$ and the $*$-ideal generated by the elements of the form $-i \varsigma(f,h) \oplus 0 \oplus (f \otimes h - h \otimes f)$ for all $f, h \in \mathcal{S}(\mathscr{I})$.
  As usual, the $*$-operation is defined in terms of complex conjugation.
\end{definition}

Having established an abstract boundary theory, we will now bring to attention a striking relationship between the Maxwell theory in the bulk and the theory on null infinity.
To this avail, let us first individuate a special gauge on null infinity:

\begin{proposition}\label{gaugefix}
  For each $[f] \in \mathcal{S}(M)$ there exists a unique representative $f \in [f]$ such that $A = \tilde G \psi_{(-2)\;*} f$ satisfies $(\iota^* A)_u = 0$.
  We say that $f$ is in the $\mathscr{I}$-gauge.
\end{proposition}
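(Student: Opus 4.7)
The plan is to reduce the $\mathscr{I}$-gauge condition to a scalar boundary equation on $\mathscr{I}^+$ and solve it by bulk-to-boundary techniques for the massless scalar wave equation. Pick any $f_0 \in [f]$ and set $\tilde A_0 \doteq \tilde G\psi_{(-2)\;*}f_0$. The conformal invariance of \eqref{Proca} on $1$-forms of weight $0$ (corollary \ref{confvecpot}), combined with the fact that $\psi_{(-2)\;*}f_0$ is compactly supported strictly inside $\psi(M)$, ensures that $\tilde A_0$ extends smoothly up to $\mathscr{I}^+$; thus $\phi \doteq (\iota^* \tilde A_0)_u \in C^\infty(\mathscr{I}^+)$ is well defined and vanishes for $u$ sufficiently negative (using the support inclusion $\supp\tilde A_0 \subset J(\supp\psi_{(-2)\;*}f_0)$). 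A shift of representative $f_0 \mapsto f_0 + \delta d\alpha$ with $\alpha \in \Omega^1_0(M)$ yields, after combining \eqref{deltad} (in the form $\psi_{(-2)\;*}(\delta d\alpha) = \tilde\delta d(\psi_*\alpha)$) with $\tilde G\tilde\Box = 0$ and $\tilde G d = d\tilde G$, the explicit modification $\tilde A = \tilde A_0 + d\tilde\chi$ with scalar gauge function $\tilde\chi \doteq -\tilde G\tilde\delta(\psi_*\alpha)$. Consequently the $\mathscr{I}$-gauge requirement $(\iota^*\tilde A)_u = 0$ translates into the scalar boundary equation
\begin{equation*}
  \partial_u(\iota^*\tilde\chi) = -\phi \qquad \text{on } \mathscr{I}^+.
\end{equation*}

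For existence, the primitive $F \doteq -\int_{-\infty}^u \phi\, du'$ is a smooth function on $\mathscr{I}^+$ that vanishes for $u \ll 0$. The problem is then to produce a scalar source $\eta \in C^\infty_0(\psi(M))$ with $\iota^*\tilde G\eta = F$ and an $\tilde\alpha \in \Omega^1_0(\psi(M))$ with $-\tilde\delta\tilde\alpha = \eta$, after which $\alpha \doteq \psi^*\tilde\alpha$ supplies the required shift. The existence of $\eta$ is an application of the surjectivity of the scalar bulk-to-boundary projection on asymptotically flat spacetimes, in direct analogy with the massless scalar construction of \cite{dappiaggi:2006cg}; the existence of $\tilde\alpha$ follows from the surjectivity of $\tilde\delta$ onto the zero-mean compactly supported scalars, where the zero-mean constraint is arrangeable within the affine preimage of $F$ by exploiting elements of the kernel of the projection.

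For uniqueness, if $f_1, f_2 \in [f]$ are both in $\mathscr{I}$-gauge with $f_2 - f_1 = \delta d\alpha$, the associated $\tilde\chi$ satisfies $\partial_u(\iota^*\tilde\chi) = 0$ and, by the same support argument, $\iota^*\tilde\chi \to 0$ as $u \to -\infty$, whence $\iota^*\tilde\chi \equiv 0$. Injectivity of the bulk-to-boundary projection then forces $\tilde\chi = 0$, hence $d\tilde\chi = 0$; the kernel description of $\tilde G$ on coclosed compactly supported $1$-forms (analogous to the injectivity argument in the proof of proposition \ref{symplform}) yields $\delta d\alpha = 0$, i.e.\ $f_1 = f_2$. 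The hard part will be justifying the surjectivity and injectivity of the scalar bulk-to-boundary projection in this geometric setting together with the zero-mean compatibility, which demand a careful asymptotic analysis of the causal propagator near $\mathscr{I}^+$, in the spirit of \cite{dappiaggi:2006cg,moretti:2006ap,moretti:2008ay}.
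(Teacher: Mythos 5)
Your reduction of the $\mathscr{I}$-gauge condition to the scalar boundary equation $\partial_u(\iota^*\tilde\chi) = -\phi$, together with the identification $\tilde\chi = -\tilde G\tilde\delta(\psi_*\alpha)$ for a shift of representative by $\delta d\alpha$, is correct and consistent with corollary \ref{cohomology}, and your uniqueness argument rests on the same tool the paper uses. The genuine gap is in the existence step. You must realize the primitive $F$ as $\iota^*\tilde G\eta$ for some $\eta \in C^\infty_0(\psi(M))$, and you appeal to ``surjectivity of the scalar bulk-to-boundary projection.'' No such surjectivity is available: what \cite{dappiaggi:2006cg,moretti:2006ap,moretti:2008ay} establish is that the bulk-to-boundary map is an \emph{injective} symplectomorphism onto its image, which is in general a proper subspace of the boundary data, and there is no characterization of that image from which membership of your particular $F$ could be read off. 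Note also that $F(u) \to -\int_{\mathbb{R}}\phi\,du$ as $u \to +\infty$, which is generically nonzero, so $F$ is not even compactly supported along the generators. The further requirement that $\eta$ lie in the range of $-\tilde\delta$ on $\Omega^1_0$ (your zero-mean constraint) compounds the difficulty, since arranging it ``within the affine preimage of $F$'' presupposes that the kernel of $\iota^*\tilde G$ contains compactly supported functions of nonzero integral -- another unproven claim.

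The paper avoids all of this by running the construction in the opposite direction: it integrates $\partial_n\chi = -A'_n$ along the generators of $\mathscr{I}^+$ to obtain boundary data $\chi$ (essentially your $F$), and then solves the \emph{characteristic initial value problem} $\tilde\Box\lambda = 0$ with $\lambda = \chi$ on $\mathscr{I}^+$, propagating the data from null infinity \emph{into} $\overline{\psi(M)}$. Well-posedness of this Goursat-type problem on the null boundary is the known input (it is the same tool underlying the cited bulk-to-boundary papers), and it yields the gauge function $\lambda$ directly, with uniqueness of the $\mathscr{I}$-gauge following from uniqueness of the characteristic problem; corollary \ref{cohomology} then converts $A' + d\lambda$ back into a statement about representatives in $[f]$. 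If you want to salvage your route, the natural way to show that $F$ lies in the image of $\iota^*\tilde G$ is precisely to solve this characteristic problem and then analyse the support of the resulting solution -- at which point you have reproduced the paper's argument with extra, and in part unnecessary, machinery.
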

\begin{proof}
  Let $f' \in [f]$ be arbitrary and set $A' = \tilde G \psi_{(-2)\;*} f'$.
  Integrating $\partial_n \chi = - A'_n \restriction_{\mathscr{I}^+ \cup i^+}$ and choosing the constant of integration to be zero, we obtain a smooth function $\chi \in C^\infty(\mathscr{I}^+ \cup i^+)$ such that $\iota^* \chi(u, \theta, \varphi) = 0$ for all $u < u'$ fixed.
  We may then use $\chi$ as initial data for the characteristic initial value problem $\tilde \Box \lambda = \tilde \delta d \lambda = 0$, $\lambda \restriction_{\mathscr{I}^+} = \chi$ to find a gauge function $\lambda \in C^\infty(\overline{\psi(M)})$ such that $(\iota^* A' + \iota^* d\lambda)_u = 0$.
  Hence, by the uniqueness of the solution to both the characteristic initial value problem and the Cauchy problem, and corollary \ref{cohomology}, there exists $f \in [f]$ as specified in the thesis.
\end{proof}

Note again that this residual gauge transformation fixes the gauge within $\mathcal{S}(M)$ completely because any further gauge transformation would necessarily be constant along the $u$-direction, compactly supported at $\mathscr{I}^+$ and thus identically zero by the uniqueness of the solution of the characteristic initial value problem.

Let us now define a map $b : \mathcal{S}(M) \to \mathcal{S}(\mathscr{I})$ which we will show to be a symplectomorphism from the bulk theory $(\mathcal{S}(M), \sigma)$ to the boundary theory $(\mathcal{S}(\mathscr{I}), \varsigma)$ in the next theorem.

\begin{theorem}\label{main}
  Let $b : \mathcal{S}(M) \to \mathcal{S}(\mathscr{I})$ be the map defined by
  \begin{equation*}
    b([f]) = \pi (\iota^* (\tilde G^- \psi_{(-2)\;*} f)),
  \end{equation*}
  where $f \in [f]$ is in the $\mathscr{I}$-gauge.
  Then $b$ is a symplectomorphism, \textit{i.e.}, for all $[f],[h] \in \mathcal{S}(M)$ it holds that
  \begin{equation*}
    \sigma([f],[h]) = \varsigma(b([f]), b([h])),
  \end{equation*}
  where $\sigma$ is given in \eqref{symplectic} and $\varsigma$ in \eqref{boundsympl}.
  $b$ is called the \textbf{bulk-to-boundary projection}.
\end{theorem}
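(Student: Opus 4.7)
The plan is to pass to the unphysical spacetime, use the conformal invariance of the symplectic form to rewrite $\sigma([f],[h])$ as an integral on $\widetilde M$, and then apply a Green-type identity to convert it into a boundary integral on $\mathscr{I}^+$. Once there, the $\mathscr{I}$-gauge fixing from proposition \ref{gaugefix} together with the Bondi form of the rescaled metric should bring it into the form \eqref{boundsympl}.

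First I would verify that $b$ lands in $\mathcal{S}(\mathscr{I})$. For the $\mathscr{I}$-gauge representative $f$, the retarded solution $\tilde A^- = \tilde G^- \psi_{(-2)\;*} f$ is supported in $J^+(\supp \tilde f)$ and, by the asymptotically flat structure together with proposition \ref{confvecpot2}, extends smoothly to $\mathscr{I}^+ \cup \{i^+\}$. The finiteness of $\langle b([f]), b([f])\rangle$ and $\langle\partial_u b([f]), \partial_u b([f])\rangle$ then follows from the standard peeling/decay of massless Lorenz solutions along $\mathscr{I}^+$ and the vanishing at $i^+$ enforced by condition b) in section \ref{sub:asymp_flat}.

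Next, I would use corollary \ref{confvecpot3} to write $\sigma([f],[h]) = \tilde G(\tilde f \otimes \tilde h) = \int_{\widetilde M} \tilde G \tilde f \wedge \tilde * \tilde h$ with $\tilde f = \psi_{(-2)\;*}f$ and $\tilde h = \psi_{(-2)\;*}h$. Setting $\tilde A = \tilde G\tilde f$ and $\tilde B^- = \tilde G^-\tilde h$, one has $\tilde \Box \tilde A = 0$ and $\tilde \Box \tilde B^- = \tilde h$, together with Lorenz gauge $\tilde\delta \tilde A = \tilde\delta \tilde B^- = 0$ (from $\tilde G^\pm$ commuting with $\tilde\delta$ and coclosedness of $\tilde f,\tilde h$). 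Green's identity for $\tilde\Box = d\tilde\delta + \tilde\delta d$ on $1$-forms, applied on a region $U \subset \widetilde M$ bounded in the past by an early Cauchy surface of $\widetilde M$ and in the future by $\mathscr{I}^+ \cup \{i^+\}$, reduces the bulk integral to a pure boundary expression, since $\tilde\Box\tilde A = 0$ while $\tilde B^-$ vanishes on the past Cauchy surface by its support.

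The final step is to evaluate this boundary expression. Because $\tilde G^+\tilde f$ does not reach $\mathscr{I}^+$, on the boundary $\iota^*\tilde A$ coincides (up to sign) with $\iota^*\tilde A^-$, whose projection is $b([f])$; analogously $\pi\iota^*\tilde B^-$ gives $b([h])$. The $\mathscr{I}$-gauge kills the $u$-components, and the Lorenz conditions eliminate the $\tilde\delta$-terms in Green's identity. In the Bondi frame, where $n = \partial_u$ generates $\mathscr{I}^+$ and the induced transverse metric is $g_S$, the action of $\tilde *\, d$ on the pulled-back form contracts the null normal and produces precisely a $\partial_u$-derivative on the angular components, so that the boundary integral collapses to $2\langle b([f]), \partial_u b([h])\rangle = \varsigma(b([f]), b([h]))$.

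The main obstacle is this last step: isolating, among all the boundary pieces generated by Green's identity, the combination that matches $\varsigma$ after the null-normal contraction, and rigorously ensuring that the contribution at $i^+$ vanishes. This rests on the sharper decay at $i^+$ granted by condition b) on $\Xi$ together with the peeling of Lorenz solutions. A secondary technical point is that $\tilde B^-$ is not a priori in the $\mathscr{I}$-gauge for $[h]$, so one must verify that the boundary formula is insensitive to the residual gauge freedom on $\tilde B^-$; this follows because a shift $\tilde B^- \to \tilde B^- + d\mu$ leaves all terms surviving in the $\mathscr{I}$-gauge unchanged after integration against $\tilde h \in \Omega^1_{0,\delta}$, in the same spirit as proposition \ref{equivalence}.
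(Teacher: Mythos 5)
Your proposal follows essentially the same route as the paper's proof: conformal invariance of the symplectic form (corollary \ref{confvecpot3}) to pass to the unphysical spacetime, Stokes/Green applied to $\tilde h = \tilde\delta d\,\tilde G^-\psi_{(-2)\;*}h$ to push the bulk integral onto $\mathscr{I}^+$, and the $\mathscr{I}$-gauge together with the Bondi frame to identify the resulting boundary expression with $\varsigma$. The only point where the paper is sharper is the verification that $b([f])\in\mathcal{S}(\mathscr{I})$: rather than invoking peeling/decay along $\mathscr{I}^+$, it uses that the retarded solution is compactly supported on $\mathscr{I}^+\cup\{i^+\}$ (plus the fall-off of $\xi_B^{-1}$ at $i^+$ for the $\partial_u$ term), which also settles your concern about the contribution at $i^+$, since that point has zero measure.
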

\begin{proof}
  First, we show that $b([f]), \partial_u b([f]) \in \mathcal{S}(\mathscr{I})$.
  $A = \tilde G^- \psi_{(-2)\;*} f$ is compactly supported on $\mathscr{I}^+ \cup i^+$.
  Therefore the surface integral of $\tilde g^{-1}(A, A)$ on $\mathscr{I}^+$ converges and we have
  \begin{equation*}
    \infty
    > \int\limits_{\mathscr{I}^+} \tilde *\, \tilde g^{-1}(A, A)\, d\Xi
    = \int\limits_{\mathscr{I}^+} \iota^* \big( \tilde *\, \tilde g^{-1}(A, A)\, d\Xi \big)
    = \int\limits_{\mathscr{I}^+} g_B^{-1}(\iota^* A, \iota^* A)\, \mu_{\mathscr{I}}
    = \langle \pi (\iota^* A), \pi (\iota^* A) \rangle.
  \end{equation*}
  Here we used in first step that $i^+$ has zero measure and that the integral is invariant under the pull-back via $\iota$ ($\iota$ being an embedding of $\mathscr{I}^+$ into $\widetilde M$ is also a diffeomorphism of $\mathscr{I}^+$ into itself).
  The second equality holds because the conformal weights of the measure and the inverse metric cancel and in the last step we used the vanishing of the $u$-component $(\iota^* A)_u = 0$.\footnote{Remember that $g_B^{-1}$ contracts the $u$- and the $\Xi$-component of $\iota^* A$.}
  Since $\xi_B^{-1}$ falls off as we approach $i^+$ \cite[Lemma 4.4]{moretti:2006ap}, we obtain analogously
  \begin{equation*}
    \infty
    > \int\limits_{\mathscr{I}^+} \tilde *\, (\xi_B^{-1} \partial_n A \wedge \tilde *\, \xi_B^{-1} \partial_n A)\, d\Xi
    = \int\limits_{\mathscr{I}^+} g_B^{-1}(\partial_u \iota^* A, \partial_u \iota^* A)\, \mu_{\mathscr{I}}
    = \langle \partial_u \pi (\iota^* A), \partial_u \pi (\iota^* A) \rangle,
  \end{equation*}
  where the $\xi_B^{-1}$ factors cancel with the factor from $\iota^* \partial_n = \xi_B \partial_u$.
  Hence $b([f])$ and $\partial_u b([f])$ satisfy the conditions specified for $\mathcal{S}(\mathscr{I})$ in \eqref{boundspace}.

  Next, we show that $\sigma([f],[h]) = \varsigma(b([f]), b([h]))$.
  By \ref{confvecpot3} we have
  \begin{equation*}
    \sigma([f],[h])
    = \int\limits_{\tilde M} \tilde G \psi_{(-2)\;*} f \wedge \tilde *\, \psi_{(-2)\;*} h
    = \int\limits_{\tilde M} \tilde G \psi_{(-2)\;*} f \wedge \tilde *\, \tilde \delta d \tilde G^- \psi_{(-2)\;*} h,
  \end{equation*}
  where $f, h \in \Omega^1_{0, \delta}(M)$ are chosen to be in the $\mathscr{I}$-gauge.
  Applying Stokes' theorem, we then obtain
  \begin{align*}
    \sigma([f],[h])
    & = \int\limits_{\mathscr{I}^+} \iota^* \big( \tilde G^- \psi_{(-2)\;*} h \wedge \tilde * \, d \tilde G^- \psi_{(-2)\;*} f - \tilde G^- \psi_{(-2)\;*} f \wedge \tilde * \, d \tilde G^- \psi_{(-2)\;*} h \big)
    \\
    & = \int\limits_{\mathscr{I}^+} \big( g_B^{-1}(\iota^* \tilde G^- \psi_{(-2)\;*} f, \partial_u \iota^* \tilde G^- \psi_{(-2)\;*} h) - g_B^{-1}( \partial_u \iota^* \tilde G^- \psi_{(-2)\;*} f, \iota^* \tilde G^- \psi_{(-2)\;*} h) \big) \mu_{\mathscr{I}}
    \\
    & = \langle b([f]), \partial_u b([h]) \rangle - \langle \partial_u b([f]), b([h]) \rangle,
  \end{align*}
  where we have used again that $i^+$ has zero measure, that the conformal weights of the measure and the inverse metric cancel, and that $(\iota^* \tilde G^- \psi_{(-2)\;*} f)_u = (\iota^* \tilde G^- \psi_{(-2)\;*} h)_u = 0$.
\end{proof}

We have thus proven that $b$ is a symplectomorphism between $(\mathcal{S}(M), \sigma)$ and $(\mathcal{S}(\mathscr{I}), \varsigma)$ which, analogously to what happened in \cite{dappiaggi:2006cg} for the Weyl algebra, induces a natural injective $*$-homomorphism between the bulk and the boundary algebra.
Denoting this $*$-homomorphism by the same symbol, we thus obtain:

\begin{lemma}\label{b2b}
  The symplectomorphism $b$ induces an injective $*$-algebra homomorphism $b : \mathcal{A}(M) \to \mathcal{A}(\mathscr{I}^+)$ such that
  \begin{equation*}
    b(A_M(f)) \doteq A_M(b([f])),
  \end{equation*}
  for all $[f] \in \mathcal{S}(M)$.
  Accordingly, every algebraic state $\omega^{\mathscr{I}}: \mathcal{A}(\mathscr{I}^+) \to \mathbb{C}$ on the boundary algebra induces a state $\omega^M : \mathcal{A}(M) \to \mathbb{C}$ on the bulk field algebra unambiguously defined by $\omega^M \doteq \omega^{\mathscr{I}} \circ b$.
\end{lemma}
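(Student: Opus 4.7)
The plan is to lift the symplectomorphism $b : \mathcal{S}(M) \to \mathcal{S}(\mathscr{I})$ to the level of $*$-algebras by invoking the universal property of the free unital $*$-algebra and then descending to the quotients by their respective CCR ideals.

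First, I would extend $b$ $\mathbb{C}$-linearly to a map from $\Omega^{1}_{0,\delta}(M,\mathbb{C})$ into the complexification of $\mathcal{S}(\mathscr{I})$ by setting $b(f) \doteq b([f])$ on representatives. By the universal property of $\mathcal{A}^{0}(M)$, this extends uniquely to a unital continuous $*$-homomorphism $\hat b : \mathcal{A}^{0}(M) \to \mathcal{A}(\mathscr{I})^{0}$ acting multilinearly on tensor powers and as the identity on the scalar summand.

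The next step is to verify the inclusion $\hat b(\mathcal{I}(M)) \subset \mathcal{I}(\mathscr{I}^{+})$, where $\mathcal{I}(\mathscr{I}^{+})$ denotes the defining $*$-ideal of $\mathcal{A}(\mathscr{I}^{+})$. For the equation-of-motion generators $\delta d \alpha$ with $\alpha \in \Omega^{1}_{0}(M,\mathbb{C})$, corollary \ref{cohomology} yields $[\delta d \alpha] = [0]$ in $\mathcal{S}(M)$, whence $\hat b(\delta d \alpha) = b([0]) = 0$. For the canonical commutation generators $-i\, G(f\otimes h) \oplus 0 \oplus (f \otimes h - h \otimes f)$, theorem \ref{main} gives $G(f \otimes h) = \sigma([f],[h]) = \varsigma(b(f), b(h))$, so the image equals $-i\,\varsigma(b(f), b(h)) \oplus 0 \oplus (b(f) \otimes b(h) - b(h) \otimes b(f))$, which is itself a generator of $\mathcal{I}(\mathscr{I}^{+})$. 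By the ideal property, $\hat b$ descends to a well-defined $*$-homomorphism $b : \mathcal{A}(M) \to \mathcal{A}(\mathscr{I}^{+})$ which on generators agrees with the formula in the statement.

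For injectivity, I would use that, modulo the CCR relations, every element of $\mathcal{A}(M)$ admits a symmetric normal form as a polynomial in the generators coming from the complexification of $\mathcal{S}(M)$, and analogously for $\mathcal{A}(\mathscr{I}^{+})$. Since $b : \mathcal{S}(M) \to \mathcal{S}(\mathscr{I})$ is itself injective -- $b([f]) = 0$ would force $\sigma([f],[h]) = \varsigma(0, b([h])) = 0$ for every $[h]$, and hence $[f] = 0$ by the weak non-degeneracy of proposition \ref{symplform} -- the induced map preserves linear independence at each tensorial level and is therefore injective. The statement on states is then immediate: for any state $\omega^{\mathscr{I}}$ on $\mathcal{A}(\mathscr{I}^{+})$, the functional $\omega^{M} \doteq \omega^{\mathscr{I}} \circ b$ is linear, satisfies $\omega^{M}(e) = \omega^{\mathscr{I}}(e) = 1$ as $b$ is unital, and $\omega^{M}(a^{*} a) = \omega^{\mathscr{I}}(b(a)^{*} b(a)) \geq 0$ by the $*$-homomorphism property of $b$, with continuity following from that of both factors. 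The main bookkeeping subtlety will be the ideal inclusion above, where the $\delta d \alpha$ generators can only be sent to zero thanks to the cohomological hypothesis on $M$ via corollary \ref{cohomology}.
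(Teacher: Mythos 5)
Your proposal is correct and follows essentially the same route the paper intends: the paper gives no explicit proof of this lemma, merely asserting that the symplectomorphism of theorem \ref{main} induces the injective $*$-homomorphism ``analogously to what happened in \cite{dappiaggi:2006cg} for the Weyl algebra,'' and your argument is precisely the standard construction being invoked there (lift by the universal property of $\mathcal{A}^0(M)$, check that $\mathcal{I}(M)$ lands in the boundary ideal via corollary \ref{cohomology} and theorem \ref{main}, descend to the quotient, and deduce injectivity from that of $b$ on $\mathcal{S}(M)$ together with weak non-degeneracy). No gaps; the positivity and normalization of $\omega^{\mathscr{I}}\circ b$ are handled exactly as one should.
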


\subsection{Hadamard States Induced from Null Infinity}\label{sub:state}

As we have already remarked at the beginning of this section, our goal is to provide a scheme to construct Hadamard states on a non-trivial background $(M, g)$.
The traditional techniques rely on $g$ possessing a complete timelike Killing field, that is $M$ is stationary, so that it is possible to make sense of the notion of positive energy out of a Fourier transform.
Clearly, this procedure fails in the most general scenario since the group of isometries of $(M, g)$ can even be trivial.
The last lemma of the previous subsection provides us with a potential way to circumvent such an obstruction by inducing a state for the bulk field theory out of one for the boundary counterpart.
This point of view has an {\it a priori} advantage due to the geometry of $\mathscr{I}^+$ which possesses a natural direction, identified by the $u$-coordinate in a Bondi frame, along which one can perform a Fourier transform.
This direction plays for the boundary the same role as the time direction in a stationary or static spacetime, thus allowing us to single out a natural and distinguished algebraic state for the boundary theory \cite{dappiaggi:2006cg,moretti:2006ap,moretti:2008ay}.
We will now show that this line of reasoning can also be applied to the case of the vector potential.
Notice that a more detailed account of the analysis presented here is also available in \cite{siemssen:2011fk}.
As a starting point we focus on the boundary $*$-algebra:

\begin{proposition}\label{bulkstate}
  The map $\omega_2^{\mathscr{I}} : \mathcal{S}(\mathscr{I}) \otimes \mathcal{S}(\mathscr{I}) \to \mathbb{R}$ such that
  \begin{equation}\label{2ptScri}
    \omega_2^{\mathscr{I}}(f \otimes h) \doteq \lim\limits_{\epsilon \to 0} - \frac{1}{\pi} \int\limits_{\mathbb{R}^2 \times \mathbb{S}^2} \frac{\big( f(u,\theta,\varphi), h(u',\theta,\varphi) \big)_{\mathscr{I}}}{(u - u' - i \epsilon)^2}\, du\, du'\, d\mathbb{S}^2(\theta,\varphi),
\end{equation}
  where $d\mathbb{S}^2(\theta,\varphi) = \sin^2\! \theta \, d\theta \, d\varphi$ is the standard measure on the unit $2$-sphere, unambiguously identifies a quasi-free state $\omega^{\mathscr{I}} : \mathcal{A}(\mathscr{I}) \to \mathbb{C}$.
  Furthermore,
  \begin{enumerate}
    \item
      $\omega^{\mathscr{I}}$ induces a bulk state $\omega^M \doteq \omega^{\mathscr{I}} \circ b : \mathcal{A}(M) \to \mathbb{C}$ which is quasi-free and
    \item
      $\omega^{\mathscr{I}}$ is invariant under the action of the BMS group $\rho : \mathcal{A}(\mathscr{I}) \to \mathcal{A}(\mathscr{I})$ induced by \eqref{BMS}, \textit{viz.} $\omega^{\mathscr{I}} \circ \rho = \omega^{\mathscr{I}}$.
  \end{enumerate}
\end{proposition}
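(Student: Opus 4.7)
My plan is to split the proposition into three independent subtasks: (i) verifying that \eqref{2ptScri} defines an honest quasi-free state on $\mathcal{A}(\mathscr{I})$, (ii) deducing that the induced bulk state $\omega^M$ is quasi-free, and (iii) checking BMS invariance of $\omega^{\mathscr{I}}$. The unifying tool for (i) and (iii) is a Fourier transform along the distinguished affine parameter $u$ of $\mathscr{I}^+$, whose existence is one of the chief geometric advantages of working on null infinity.

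For (i), I would first rewrite the distributional kernel using the identity
\begin{equation*}
    \lim_{\epsilon \to 0^+}\frac{-1}{\pi(u - u' - i\epsilon)^2} = \frac{1}{\pi}\int_0^\infty 2k\, e^{-ik(u-u')}\,dk
\end{equation*}
(valid with a suitable sign convention for $\mathcal{F}$), which turns \eqref{2ptScri} into
\begin{equation*}
    \omega_2^{\mathscr{I}}(\bar f \otimes f) = \frac{1}{2\pi}\int_0^\infty 2k \int_{\mathbb{S}^2} g_S^{-1}\bigl(\overline{\tilde f(k,\theta,\varphi)}, \tilde f(k,\theta,\varphi)\bigr)\,d\mathbb{S}^2\,dk,
\end{equation*}
where $\tilde f(k,\cdot)$ denotes the Fourier transform in $u$. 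Positivity is manifest because $g_S^{-1}$ is Riemannian on the spacelike subbundle $S^*$, and the measure $2k\,dk$ is non-negative on the support. Convergence for $f \in \mathcal{S}(\mathscr{I})$ is controlled by Plancherel's theorem, which translates the finiteness of $\langle f, f\rangle$ and $\langle \partial_u f, \partial_u f\rangle$ built into \eqref{boundspace} into $(1+k^2)^{1/2}\tilde f(k,\cdot) \in L^2$; a Cauchy-Schwarz estimate then bounds $|\omega_2^{\mathscr{I}}(f \otimes h)|$ in terms of these $\mathcal{S}(\mathscr{I})$-seminorms. To confirm that \eqref{2ptScri} reproduces the commutation relations defining the ideal of $\mathcal{A}(\mathscr{I})$, I would exploit the symmetry of $g_S^{-1}$ together with the Sokhotski--Plemelj identity $(x-i\epsilon)^{-2} - (x+i\epsilon)^{-2} = -2\pi i\,\delta'(x)$ to compute the antisymmetric part of $\omega_2^{\mathscr{I}}$, obtaining $\omega_2^{\mathscr{I}}(f \otimes h) - \omega_2^{\mathscr{I}}(h \otimes f) = 2i\langle f, \partial_u h\rangle = i\varsigma(f,h)$ by \eqref{boundsympl}. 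Extending quasi-freely via \eqref{n-pt} then defines $\omega^{\mathscr{I}}$.

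For (ii), the conclusion is immediate from Lemma \ref{b2b}: since $b$ is a unital $*$-homomorphism, the quasi-free factorisation \eqref{n-pt} for $\omega^{\mathscr{I}}$ pulls back verbatim to $\omega^M = \omega^{\mathscr{I}} \circ b$. For (iii), quasi-freeness reduces the invariance check to the level of $\omega_2^{\mathscr{I}}$. Under $\gamma \in BMS$ acting by \eqref{BMS}, the rescaling $u \mapsto K_\gamma(u+\alpha_\gamma)$ produces $du\,du' \mapsto K_\gamma^2\,du\,du'$ and $(u - u' - i\epsilon)^2 \mapsto K_\gamma^2(u - u' - i\epsilon)^2$, so the temporal ratio is unchanged. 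Simultaneously $g_S^{-1} \mapsto K_\gamma^{-2}g_S^{-1}$ and $d\mathbb{S}^2 \mapsto K_\gamma^2\,d\mathbb{S}^2$, whose conformal weights cancel on the pairing; since the representation $\Pi^0$ carries weight zero on $f,h$, the integrand is rendered pointwise BMS invariant.

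The step I expect to require the most care is the rigorous manipulation of the Fourier representation on the space $\mathcal{S}(\mathscr{I})$, which is merely $L^2$-type rather than Schwartz. In particular, one must justify the interchange of the $\epsilon \to 0$ limit with the double integral, and exploit the fall-off $f(u,\theta,\varphi) \to 0$ as $u \to \pm\infty$ implied by $\langle f, f\rangle < \infty$ (by the argument sketched in \cite[Footnote 7]{moretti:2008ay}). Fortunately, these are precisely the analytic issues already handled in the scalar bulk-to-boundary constructions of \cite{dappiaggi:2006cg,moretti:2006ap,moretti:2008ay}, so the plan is to transcribe the established techniques with only minor cosmetic modifications that account for the $g_S^{-1}$-pairing in place of scalar multiplication.
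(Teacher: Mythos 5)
Your proposal follows essentially the same route as the paper's proof: a Fourier transform in the retarded time $u$ yielding a positive-frequency mode representation (the paper's equation \eqref{2ptmodes}) from which positivity and well-definedness follow via the $L^2$-type seminorms defining $\mathcal{S}(\mathscr{I})$, a direct computation of the antisymmetric part to recover $i\varsigma(f,h)$, quasi-freeness of $\omega^M$ from the $*$-homomorphism $b$, and BMS invariance by counting the conformal weights $K_\gamma^{-2}\cdot K_\gamma^{0}\cdot K_\gamma^{4}\cdot K_\gamma^{-2}$ in the transformed integral (with the paper's additional remark that the unscaled $\epsilon$ in the denominator is harmless since $K_\gamma$ is bounded). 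The argument is correct up to normalization conventions for the Fourier transform.
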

\begin{proof}
  Since the elements $f$ of $\mathcal{S}(\mathscr{I})$ are such that $\langle f, f \rangle_{\mathscr{I}} < \infty$ and $\langle \partial_u f, \partial_u f \rangle_{\mathscr{I}} < \infty$, barring a minor readjustment, we can employ the analysis of \cite{dappiaggi:2009th,dappiaggi:2009fk} to conclude that $\omega_2^{\mathscr{I}}$ identifies a well-defined two-point distribution on null infinity.
  Because we know that both $f$ and $h$ tend to $0$ as $u$ diverges, we can apply integration by parts, Parseval's identity and the convolution theorem\footnote{The conventions for the Fourier transform are as in \cite{hormander:1990bl}.} to obtain
  \begin{equation}\label{2ptmodes}
    \omega^{\mathscr{I}}_2(f \otimes h)
    = \frac{1}{\pi} \int\limits_{\mathbb{R} \times \mathbb{S}^2} k \, \Theta(k) \big(\widehat{f}(k,\theta,\varphi), \widehat{h}(-k,\theta,\varphi) \big)_{\mathscr{I}}\, dk \, d\mathbb{S}^2(\theta,\varphi),
  \end{equation}
  where $\Theta(k)$ is the Heaviside step function.
  Hence $\omega^{\mathscr{I}}_2(f \otimes \overline{f}) \geq 0$ and the corresponding state $\omega^{\mathscr{I}}$ satisfies positivity.
  A direct calculation shows that $\omega^{\mathscr{I}}_2(f \otimes h) - \omega^{\mathscr{I}}_2(h \otimes f) = i \varsigma(f, h)$ and thus $\omega^{\mathscr{I}}_2$ unambiguously identifies an algebraic quasi-free state for $\mathcal{A}(\mathscr{I})$.

  Concerning BMS invariance, it is sufficient to prove that it holds true for the two-point function as the state is quasi-free.
  Therefore the natural representation of a BMS transformation $\gamma \in BMS$ on $\mathcal{S}(\mathscr{I})$ via $\Pi^{0}_\gamma$ yields
  \begin{equation*}
    (\omega^{\mathscr{I}}_2 \circ \rho)(f \otimes h)
    = \lim\limits_{\epsilon \to 0} - \frac{1}{\pi} \int\limits_{\mathbb{R}^2 \times \mathbb{S}^2} \frac{K_\gamma^{-2} \big( K_\gamma^{0} f(u,\theta,\varphi), K_\gamma^{0} h(u',\theta,\varphi) \big)_{\mathscr{I}}}{(K_\gamma u - K_\gamma u' - i \epsilon)^2} \, K_\gamma^4 \, du \, du' \, d\mathbb{S}^2(\theta,\varphi),
  \end{equation*}
  where we suppressed the explicit dependence of the conformal factor $K_\gamma(\theta,\varphi)$ on the angular variables.
  Furthermore, since this factor is bounded, the right hand side coincides with $\omega^{\mathscr{I}}_2(f \otimes h)$ up to an irrelevant redefinition of the $\epsilon$-parameter.
\end{proof}

Although we have identified a BMS-invariant state on $\mathscr{I}^+$, we are still far from claiming that its bulk counterpart is physically meaningful.
In order to answer this question, first of all we need to make precise the concept of Hadamard states.
Suppose for now that we are equipped with an arbitrary but quasi-free algebraic state $\lambda : \mathcal{A}(M) \to \mathbb{C}$ with associated two-point function $\lambda_2(f \otimes h)$ for all $f, h \in \Omega^1_{0,\delta}(M)$.
To summarize, it turns out that the structure of $\lambda_2$ is best studied by means of the techniques proper of microlocal analysis.
These are presented in the monograph \cite{hormander:1990bl} whose notations, definitions and conventions we will adopt here.
Since in our scenario we cope with vector bundles instead of scalar functions, we need to slightly extend the ordinary notion of the wavefront set.
As noted after \cite[Th. 8.2.4]{hormander:1990bl}, the wavefront set $\mathrm{WF}(u)$ of a distribution $u \in \mathcal{D}'(E)$ on a vector bundle $E$ is defined locally as $\cup_i \mathrm{WF}(u_i)$, where $(u_1, \cdots, u_N)$ are the components of $u$ in a local trivialisation of $E$.
This definition is indeed invariant under a change of the local trivialisation and thus yields a sensible extension of the wavefront set to distributional sections.
Hence we can follow \cite{radzikowski:1996ug,radzikowski:1996ul,sahlmann:2001sr}:

\begin{definition}\label{WF}
  We say that a two-point distribution $\lambda_2$ is of \textbf{Hadamard form} if it satisfies
  \begin{equation*}
    \mathrm{WF}(\lambda_2) = \{ (x, x', k, -k') \in T^*M^2 \setminus 0 \mid (x,k) \sim (x,k'),\, k \triangleright 0 \},
  \end{equation*}
  where $0$ is the zero section of $T^*M^2$.
  Here $(x,k) \sim (x',k')$ means that the point $x$ and $x'$ are connected by a null geodesic $\gamma$ so that $k$ is cotangent to it in $x$ and $k'$ is the parallel transport along $\gamma$ of $k$ from $x$ to $x'$.
  Furthermore, $k \triangleright 0$ means that the covector $k$ is future-directed.
\end{definition}

As noted in \cite{fewster:2003or}, a generic state $\lambda$ on $\mathcal{A}(M)$ does not yield a two-point function $\lambda_2$ that is a distribution because it acts only on coclosed test forms.
Instead one says that $\lambda$ is a Hadamard state if there exists a $\Box$-bisolution $\Lambda_2$ which is of Hadamard form such that $\lambda_2(f \otimes h) = \Lambda(f \otimes h)$ for all $f, h \in \Omega^{1}_{0,\delta}(M)$.
To this end, let us notice that $\omega^M_2$ is a $\Box$-bisolution
\begin{equation*}
  \omega^M_2(\Box f \otimes h) = 0 = \omega^M_2(f \otimes \Box h)
\end{equation*}
for all $f, h \in \Omega^{1}_{0}(M)$.
This equality holds in particular for coclosed $f, h$ so that $\omega^M_2$ is also bisolution of the equations of motion \eqref{dynamics}.
Thanks to this, we can prove the main theorem of this section:

\begin{theorem}
  The state $\omega^M \doteq \omega^{\mathscr{I}} \circ b : \mathcal{A}(M) \to \mathbb{C}$ where $\omega^{\mathscr{I}} : \mathcal{A}(\mathscr{I}) \to \mathbb{C}$ is the unique quasi-free state built out of \eqref{2ptScri}
  \begin{enumerate}
    \item
      is of Hadamard form,
    \item
      is invariant under the action of all isometries of $(M, g)$, that is $\omega^M \circ \alpha_\phi = \omega^M$ for all possible isometries $\phi$ of $(M, g)$.
      Here $\alpha$ is the $*$-isomorphism representation of the isometry group on $\mathcal{A}(M)$ defined by the action on the algebra generators $f \in \Omega^{1}_{0,\delta}$, \textit{i.e.}, $\alpha_\phi(f) \doteq \phi_* f$.
    \item
      coincides with the Poincar\'e vacuum if the bulk spacetime is Minkowski.
  \end{enumerate}
\end{theorem}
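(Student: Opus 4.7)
The plan is to prove the three assertions in turn, borrowing the general philosophy from the scalar constructions of \cite{dappiaggi:2006cg,moretti:2006ap} and adapting it to the bundle-valued situation.

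For the Hadamard property, the argument is microlocal. First I would observe that, although $\omega^M_2$ is \emph{a priori} only defined on coclosed test $1$-forms, the formula $\omega^M_2(f\otimes h) \doteq \omega^{\mathscr{I}}_2(b(f)\otimes b(h))$ makes sense for all $f,h\in\Omega^1_0(M,\mathbb{C})$, since the bulk-to-boundary map $b=\pi\,\iota^*\,\tilde G^-\,\psi_{(-2)\,*}$ extends naturally to compactly supported $1$-forms once the $\mathscr{I}$-gauge of Proposition~\ref{gaugefix} is used; moreover this extension is a $\Box$-bisolution, as noted in the text just before Definition~\ref{WF}. Next I would read off the wavefront set of $\omega^{\mathscr{I}}_2$ from the explicit kernel in \eqref{2ptScri}: the $(u-u'-i\epsilon)^{-2}$ factor is precisely the positive-frequency distribution on the $u$-line, so $\mathrm{WF}(\omega^{\mathscr{I}}_2)$ consists only of pairs $(p,p';\kappa,-\kappa)$ with $p=p'$, $\kappa$ proportional to $du$ and $\kappa(\partial_u)>0$. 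The decisive step is then to transport this set into $T^*(M\times M)$ through $b\otimes b$: since $\tilde G^-$ is a Fourier integral operator whose canonical relation is the graph of the null-geodesic flow of $\tilde g$, and since $\iota$ embeds the characteristic hypersurface $\mathscr{I}^+$ transversally to all bulk null covectors that are not themselves tangent to $\mathscr{I}^+$, an application of H\"ormander's pull-back and composition theorems yields that every singular direction of $\omega^M_2$ lies on a null bicharacteristic reaching $\mathscr{I}^+$ with a positive-frequency covector. Combining this with H\"ormander's propagation of singularities theorem applied to the $\Box$-bisolution $\omega^M_2$ gives exactly the Hadamard wavefront set of Definition~\ref{WF}.

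Part (2) is now essentially free. By Geroch's theorem recalled in Section~\ref{sub:geometry}, every Killing field of $(M,g)$ extends uniquely to $\mathscr{I}^+$ as a generator of a one-parameter subgroup of $BMS$. Hence the $*$-isomorphism $\alpha_\phi$ induced on $\mathcal{A}(M)$ by an isometry $\phi$ is intertwined by $b$ with a $BMS$-induced $*$-isomorphism $\rho_\gamma$ on $\mathcal{A}(\mathscr{I})$, so the $BMS$-invariance of $\omega^{\mathscr{I}}$ established in Proposition~\ref{bulkstate} gives
\[
\omega^M\circ\alpha_\phi \;=\; \omega^{\mathscr{I}}\circ b\circ\alpha_\phi \;=\; \omega^{\mathscr{I}}\circ\rho_\gamma\circ b \;=\; \omega^{\mathscr{I}}\circ b \;=\; \omega^M.
\]
For (3), I would compute: on Minkowski spacetime $\psi$ is the standard conformal embedding into the Einstein static universe, and the Poincar\'e vacuum two-point function is the familiar positive-frequency distribution. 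Using Corollary~\ref{confvecpot3} to pass the two-point function through $b$ and then taking the limit to $\mathscr{I}^+$ in a Bondi frame, one recovers precisely the kernel \eqref{2ptScri}. Since both states are quasi-free and thus determined by their two-point functions, they coincide.

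The main obstacle is the Hadamard part. Compared to the scalar case, one must track the wavefront set componentwise in a local trivialisation of the cotangent bundle, keep the projection $\pi$ onto $S^*$ and the $\mathscr{I}$-gauge choice compatible with the microlocal bookkeeping, and most importantly verify the transversality hypotheses needed to apply H\"ormander's pull-back and composition theorems at the null hypersurface $\mathscr{I}^+$. Ruling out spurious singular directions (either tangential to $\mathscr{I}^+$ or of the wrong frequency sign) produced by the composition with $\tilde G^-$ is where the genuine technical work lies; everything else is essentially a repackaging of the known scalar arguments.
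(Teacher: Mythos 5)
Your overall strategy coincides with the paper's: microlocal control of the two-point function for (a), intertwining of bulk isometries with BMS transformations via $b$ for (b), and a uniqueness/direct-comparison argument in Minkowski spacetime for (c). Parts (b) and (c) are essentially the paper's own arguments (the paper additionally reduces arbitrary isometries to one-parameter groups generated by Killing fields by citing \cite[Th.~3.1]{moretti:2008ay}, a step you should make explicit since Geroch's extension theorem only covers Killing flows; and for (c) the paper invokes uniqueness of the quasi-free, Poincar\'e-invariant Hadamard ground state or \cite[Th.~4.1]{dappiaggi:2006cg} rather than your direct boundary-limit computation, but both routes are viable).

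The genuine divergence, and the genuine gap, is in the decisive step of (a). You propose to transport $\mathrm{WF}(\omega^{\mathscr{I}}_2)$ into the bulk by composing $\tilde G^-$, viewed as a Fourier integral operator, with the pull-back $\iota^*$ to the \emph{characteristic} hypersurface $\mathscr{I}^+$, and you yourself concede that verifying the transversality hypotheses and excluding spurious singular directions is ``where the genuine technical work lies'' --- but that work is never carried out, and it is exactly the hard part: the conormal directions of a null hypersurface are null covectors, so H\"ormander's pull-back and composition theorems are applied at the edge of their domain of validity, and the a priori bound they yield must still be pruned. The paper avoids the FIO composition altogether by introducing the auxiliary bidistribution $\omega^{\widetilde M}_2(f\otimes h)\doteq\omega^{\mathscr{I}}_2(\iota^*\tilde G^- f\otimes\iota^*\tilde G^- h)$ on the \emph{unphysical} spacetime, observing that $\omega^{\widetilde M}_2(\tilde\Box f\otimes\tilde\Box h)=\omega^{\mathscr{I}}_2(\iota^* f\otimes\iota^* h)$ so that the wavefront set of $(\tilde\Box\otimes\tilde\Box)\,\omega^{\widetilde M}_2$ is read off directly from \eqref{2ptScri}, and then recovering $\mathrm{WF}(\omega^{\widetilde M}_2)$ by the propagation of singularities theorem alone; the restriction to test forms supported in $\psi(M)$, together with the conformal invariance of null geodesics and the fact that no null geodesic joins a point of $\psi(M)$ to $i^+$ (\cite[Lemma~4.3]{moretti:2008ay}), then discards the unwanted directions and yields Definition~\ref{WF}. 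To close your argument you should either adopt this device or actually supply the FIO analysis you defer.
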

\begin{proof}\textit{a)\hspace{\labelsep}} Since $\beta_M$ is a quasi-free state, we focus only on $\omega^M_2$.
  In order to prove the Hadamard property, it is convenient to work with the auxiliary object
  \begin{equation*}
    \omega^{\widetilde M}_2(f \otimes h)
    \doteq \omega^{\mathscr{I}}_2( \iota^* \tilde G^- f \otimes \iota^* \tilde G^- h),
  \end{equation*}
  where $f, h \in \Omega^1_{0}(\widetilde M)$.
  Notice that this is a well-defined expression since $J^+(\supp(f + h); \widetilde M) \cap J^-_{\widetilde M}(i^+)$ is compact and that, on account of the support properties of the retarded fundamental solution of the Laplace-de Rham operator, $\supp \omega^{\widetilde M}_2 \subseteq J^-(i^+; \widetilde M) \setminus i^-$.
  Furthermore it holds
  \begin{equation*}
    \omega^{\widetilde M}_2(\tilde \Box f \otimes \tilde \Box h)
    = \omega^{\mathscr{I}}_2(\iota^* f \otimes \iota^* h).
  \end{equation*}
  Hence the wavefront set of $(\tilde \Box \otimes \tilde \Box) \, \omega^{\widetilde M}_2$ coincides with that of $\omega^{\mathscr{I}}_2$ which can be calculated directly from \eqref{2ptScri} and, following \cite{dappiaggi:2009fk,moretti:2008ay},
  \begin{equation*}
    \mathrm{WF}(\omega^{\mathscr{I}}_2) = \big\{ (x,x,k,-k) \in T^*(\mathscr{I}^+)^2 \setminus 0 \mid k_u > 0 \big\}.
  \end{equation*}
  If we apply the theorem of propagation of singularities, we obtain
  \begin{multline*}
    \mathrm{WF}(\omega^{\widetilde M}_2) = \big\{ (x,x',k,-k') \in T^*\widetilde{M}^2 \setminus 0 \mid \exists p \in \mathscr{I}^+,\, q \in T^*_p\widetilde{M} \textrm{ with } q_u > 0 \textrm{ such that }\\
    x, x' \in J^-_{\widetilde M}(i^+) \setminus i^-,\, (x,k) \sim (x,k') \sim (p,q) \big\}.
  \end{multline*}
  To conclude, we need to restrict our test functions to those which are compactly supported in $\psi(M)$, the image of $M$ in $\widetilde{M}$.
  Then, on account of the invariance of null geodesics under conformal transformations and of the fact that there does not exist a null geodesic joining $x \in \psi(M)$ with $i^+$ \cite[Lemma 4.3]{moretti:2008ay}, we can conclude that the wavefront set of $\omega^M_2$ has the same form as that in definition \ref{WF}.
  Thus $\omega_M$ is a Hadamard state.

  \vspace{\itemsep}
  \textit{b)\hspace{\labelsep}}The same argument given in \cite[Th. 3.1]{moretti:2008ay} guarantees that the statement holds true if proven for all one-parameter groups of isometries $\phi^X_t$, $t \in \mathbb{R}$ induced by a Killing vector field $X$.
  On account of proposition \ref{bulkstate}, $\omega^M \circ \alpha_{\phi^X_t} = \omega^{\mathscr{I}} \circ b \circ \alpha_{\phi^X_t}$.
  Furthermore it holds \cite[Prop. 3.4]{moretti:2008ay} that $b \circ \alpha_{\phi^X_t} = \rho_{\tilde{\phi}^{\tilde{X}}_t} \circ b$, where $\rho_{\tilde{\phi}^{\tilde{X}}_t}$ is the action on $\mathcal{A}(\mathscr{I})$ of a one-parameter subgroup of the BMS induced via the exponential map from $\tilde X$, the unique extension of $\tilde X$ to the boundary.
  Since in proposition \ref{2ptScri} we proved the invariance of $\omega^{\mathscr{I}}$ under the action of the BMS group,
  \begin{equation*}
    \omega^M \circ \alpha_{\phi^X_t} = \omega^{\mathscr{I}} \circ \rho_{\tilde{\phi}^{\tilde{X}}_t} \circ b = \omega^{\mathscr{I}} \circ b = \omega^M.
  \end{equation*}

  \vspace{\itemsep}
  \textit{c)\hspace{\labelsep}}Having already established both a) and b), we can conclude that, in Minkowski spacetime, $\omega^M$ is a quasi-free, Hadamard state which is Poincar\'e invariant.
  Either arguing via the uniqueness of the ground state in flat spacetime or repeating the proof of \cite[Th. 4.1]{dappiaggi:2006cg}, we establish c).
\end{proof}

An additional advantage of inducing states for the bulk field algebra from the boundary counterpart is the possibility to define a KMS state for the boundary theory on $\mathscr{I}^+$.
This was first introduced in \cite{dappiaggi:2011kx} to rigorously define the Unruh state on Schwarzschild spacetime and it was later applied to scalar and Dirac fields on a certain class of cosmological spacetimes in \cite{dappiaggi:2011ya}.
The rationale behind the whole procedure is that the boundary state can be constructed starting from a two-point function which admits an explicit mode decomposition and which can be easily modified by adding a Bose factor to construct a KMS state.
We will not dwell on the definition of such a class of states and we refer to \cite[Appendix B]{dappiaggi:2011kx} for an overview.
Hence, in the case at hand, the starting point would be \eqref{2ptmodes} out of which we define
\begin{equation}\label{2ptKMS}
  \omega^{\mathscr{I}}_{2,T}(f \otimes h)
  = \frac{1}{\pi} \int\limits_{\mathbb{R} \times \mathbb{S}^2} \frac{k}{1 - e^{-\frac{k}{T}}} \big( \widehat{f}(k,\theta,\varphi), \widehat{h}(-k,\theta,\varphi) \big)_{\mathscr{I}}\, dk\, d\mathbb{S}^2(\theta,\varphi),
\end{equation}
where $T \geq 0$ and $f, h \in \mathcal{S}(\mathscr{I})$.

\begin{proposition}
  The two-point function \eqref{2ptKMS} induces a state $\omega^{\mathscr{I}}_T : \mathcal{A}(\mathscr{I}) \to \mathbb{C}$ which is quasi-free, enjoys the KMS property with respect to the $u$-translations on $\mathscr{I}^+$ and which converges weakly to the state $\omega^{\mathscr{I}}$ of proposition \ref{bulkstate} as $T \to 0$.
\end{proposition}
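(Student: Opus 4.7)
The plan is to follow the template of \cite[Appendix B]{dappiaggi:2011kx}, and verify in three steps that (i) \eqref{2ptKMS} is the two-point function of a quasi-free state, (ii) this state enjoys the KMS property with respect to the $u$-translations, and (iii) it converges weakly to $\omega^{\mathscr{I}}$ as $T\to 0^+$. For step (i), I would first check that the integrand in \eqref{2ptKMS} is $L^1$ when paired with $f,h\in\mathcal{S}(\mathscr{I})$, using that the weight $\frac{k}{1-e^{-k/T}}$ grows only linearly in $|k|$ at $k\to+\infty$, tends to $T$ near $k=0$, and decays exponentially as $k\to-\infty$; the two Plancherel-type conditions built into \eqref{boundspace} then supply enough decay of $\widehat f,\widehat h$ for absolute convergence. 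The proofs of positivity and of the canonical commutation relation proceed by the same change of variable $k\to-k$ already employed in proposition \ref{bulkstate}: the symmetry of $g_S^{-1}$ together with the identity
\begin{equation*}
\frac{k}{1-e^{-k/T}}-\frac{k}{e^{k/T}-1}=k
\end{equation*}
reduces the antisymmetric part of $\omega^{\mathscr{I}}_{2,T}$ to the ground-state expression $i\varsigma(f,h)$, while the same change of variable recasts $\omega^{\mathscr{I}}_{2,T}(\bar f\otimes f)$ as an integral with a strictly positive weight against $(\overline{\widehat f(-k,\theta,\varphi)},\widehat f(-k,\theta,\varphi))_{\mathscr{I}}$. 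Quasi-freeness of the induced state is then imposed through \eqref{n-pt}.

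For step (ii), quasi-freeness reduces the KMS condition to an identity at the level of the two-point function. Writing $\widehat{\tau_t h}(-k)=e^{ikt}\widehat h(-k)$, the formal substitution $t\mapsto t+i/T$ transforms the weight $\frac{k}{1-e^{-k/T}}$ into $\frac{k}{e^{k/T}-1}$, and a further change of variable $k\to-k$ in $\omega^{\mathscr{I}}_{2,T}(\tau_t h\otimes f)$, together with the symmetry of $g_S^{-1}$, gives the desired boundary identity
\begin{equation*}
\omega^{\mathscr{I}}_{2,T}(f\otimes\tau_{t+i/T}h)=\omega^{\mathscr{I}}_{2,T}(\tau_t h\otimes f).
\end{equation*}
The analytic continuation of $z\mapsto\omega^{\mathscr{I}}_{2,T}(f\otimes\tau_z h)$ to the open strip $0<\Im z<1/T$, continuous up to its closure, then follows from absolute integrability of the integrand uniformly on compact subsets; since the factor $|e^{-ikz}|=e^{k\Im z}$ grows exponentially as $k\to+\infty$, this step forces one to work first on a dense subalgebra of $\mathcal{A}(\mathscr{I})$ generated by elements whose Fourier transforms along $u$ are rapidly decreasing, and to extend by continuity afterwards.

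For step (iii), I would invoke dominated convergence in \eqref{2ptKMS}: the pointwise limit $\frac{k}{1-e^{-k/T}}\to k\,\Theta(k)$ is precisely the weight appearing in \eqref{2ptmodes}, and for $T$ in any interval $(0,T_1]$ the weight is dominated by its value at $T_1$, being monotone non-increasing in $1/T$ for $k>0$ and exponentially suppressed in $T$ for $k<0$. The dominant constructed in step (i) is then integrable against $|(\widehat f(k),\widehat h(-k))_{\mathscr{I}}|$ and yields the claimed limit. The chief obstacle I anticipate is not any of these three steps in isolation but rather the careful justification of the strip analyticity in step (ii): the algebraic boundary identity is immediate, but the uniform Morera-type estimates required throughout the open strip force the density and continuity arguments indicated above, which are standard in KMS theory but technical to render precise in this setting.
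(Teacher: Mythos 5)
Your proposal follows essentially the same route as the paper: positivity from the strict positivity of the thermal weight (the paper phrases this as domination of the $T=0$ weight), the commutation relations from the symmetry in $k$ of the difference between the two weights, the KMS property from the mode identity obtained by shifting $u$ by $i/T$, and weak convergence by dominated convergence (which the paper dismisses as immediate). The only point to correct is your anticipated ``chief obstacle'' in step (ii): with the orientation of the strip fixed by the very identity you derive, the relevant factor is $e^{-k\,\Im z}$, which decays as $k\to+\infty$ and, as $k\to-\infty$, grows no faster than the Bose weight $\sim |k|\,e^{-|k|/T}$ decays, so the integrand is bounded by $|k|\,|(\widehat{f}(k),\widehat{h}(-k))_{\mathscr{I}}|$ uniformly on the closed strip and the analyticity holds for every element of $\mathcal{S}(\mathscr{I})$; the restriction to a dense subalgebra of rapidly decreasing elements is therefore unnecessary, although it is of course also legitimate since the KMS condition need only be verified on a dense set of analytic elements.
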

\begin{proof}
  Since $0 \leq \omega^{\mathscr{I}}_2(f \otimes \overline{f}) < \omega^{\mathscr{I}}_{2,T}(f \otimes \overline{f})$, we can infer that a quasi-free state built out of \eqref{n-pt} is indeed positive.
  It is also rather straightforward to see that such a state satisfies the canonical commutation relations since, as we will explicitly write out in the next lemma, the difference between \eqref{2ptKMS} and the counterpart at $T=0$ is symmetric.
  Furthermore $\omega^{\mathscr{I}}_T$ is invariant under the translations generated by $\partial_u$, {\it i.e.}, $f(u,\theta,\varphi) \mapsto \alpha_{\lambda} f(u,\theta,\varphi) \doteq f(u-\lambda,\theta,\varphi)$.
Therefore $\widehat{f}(k,\theta,\varphi) \mapsto \widehat{f}(k,\theta,\varphi) e^{i k \lambda}$ for some $\lambda \in \mathbb{R}$ and any $f \in \mathcal{S}(\mathscr{I})$, as can be inferred by direct inspection of the explicit form of the two-point function \eqref{2ptKMS}.
  We are thus left with proving the KMS condition.
  Given $f, h \in \mathcal{S}(\mathscr{I})$ and taking the analytic continuation of $f, h$ in the $u$-variable to the complex plane, we have
  \begin{equation*}
    \omega^{\mathscr{I}}_{2,T}(f \otimes \alpha_{i T^{-1}} h) = \omega^{\mathscr{I}}_{2,T}(h \otimes f)
  \end{equation*}
  which guarantees that the KMS property holds true for $\omega^{\mathscr{I}}_T$ \cite[Sect. 5.3]{bratelli:1997nk}.
  The weak convergence of $\beta_T$ to $\beta$ as $T\to 0$ is an immediate consequence of the explicit form of \eqref{2ptKMS} and of the regularity properties of each $f,h\in\mathcal{S}(\mathscr{I})$
\end{proof}

If we follow lemma \ref{b2b}, we know that $\omega^{\mathscr{I}}_T$ induces a bulk state $\omega^M_T$ but, a priori, we cannot hope that it will be thermal unless $(M, g)$ is a static spacetime and hence endowed with a timelike Killing field with respect to which we can define a KMS condition.
In this case we can use the results of proposition 3.1 in \cite{moretti:2008ay} which guarantee that, if $(M,g)$ possesses a timelike Killing field, then it smoothly extends to a vector field of the unphysical spacetime whose restriction on null infinity generates a one-parameter subgroup of the supertranslations.
Furthermore, since the timelike Killing field can be chosen future-directed, a minor adaption of proposition 3.3 and 3.2 in \cite{moretti:2008ay} allows to conclude that there must exist an admissible frame such that the action of the one-parameter subgroup induced by the BMS generator associated to the bulk timelike Killing field reduces to a rigid translation along the null coordinate of $\mathscr{I}$.
Under these hypotheses, the construction of the boundary KMS state and of its bulk counterpart entails that the latter is a KMS state with respect to the timelike Killing field.
Furthermore, as already outlined in \cite{dappiaggi:2011ya}, even if the underlying background is not static, the bulk state has a reminiscence of the exact thermal structure on the boundary and thus one might use it as a natural candidate to discuss physical phenomena for which we would like to give an at least approximate or asymptotic thermal interpretation.
To this avail, we need to make sure that $\omega^M_T$ is physically meaningful and as a last task of this section we prove

\begin{proposition}
  The state $\omega^M_T \doteq \omega^{\mathscr{I}}_T \circ b : \mathcal{A}(M) \to \mathbb{C}$ is a quasi-free, Hadamard state.
\end{proposition}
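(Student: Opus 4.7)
The proof naturally splits into the two assertions: quasi-freeness and the Hadamard property. Quasi-freeness is essentially automatic from the construction: by the previous proposition $\omega^{\mathscr{I}}_T$ is a quasi-free state on $\mathcal{A}(\mathscr{I})$, and by lemma \ref{b2b} the bulk-to-boundary projection $b : \mathcal{A}(M) \to \mathcal{A}(\mathscr{I})$ is a unital injective $*$-algebra homomorphism. Pulling back an algebraic state through a $*$-homomorphism preserves positivity, normalization and the Wick factorization \eqref{n-pt} of the $n$-point functions, so $\omega^M_T$ is quasi-free with two-point function $\omega^M_{2,T}([f] \otimes [h]) = \omega^{\mathscr{I}}_{2,T}(b([f]) \otimes b([h]))$.

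For the Hadamard property the plan is to reduce to the $T=0$ case treated in the preceding theorem by showing that the thermal correction contributes only a smooth kernel and hence does not alter the wavefront set. Comparing \eqref{2ptmodes} with \eqref{2ptKMS}, a short computation splitting the integrand into $k>0$ and $k<0$ parts gives
\begin{equation*}
  (\omega^{\mathscr{I}}_{2,T} - \omega^{\mathscr{I}}_2)(f \otimes h)
  = \frac{1}{\pi} \int\limits_{\mathbb{R} \times \mathbb{S}^2} \frac{|k|}{e^{|k|/T}-1} \bigl(\widehat{f}(k,\theta,\varphi), \widehat{h}(-k,\theta,\varphi)\bigr)_{\mathscr{I}}\, dk\, d\mathbb{S}^2(\theta,\varphi).
\end{equation*}
The weight $|k|/(e^{|k|/T}-1)$ is smooth in $k \in \mathbb{R}$ (the singularity at the origin is removable, with limiting value $T$) and decays exponentially as $|k| \to \infty$. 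The $u,u'$-integration is therefore a convolution against a Schwartz function in the dual variable, and the angular dependence is constant, so this difference is represented by a smooth integral kernel on $\mathscr{I}^+ \times \mathscr{I}^+$ and has empty wavefront set.

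Having established this on the boundary, I would then mirror the strategy used in part \textit{a)} of the preceding theorem. Introducing the auxiliary bidistributions $\omega^{\widetilde M}_{2,T}(f \otimes h) \doteq \omega^{\mathscr{I}}_{2,T}(\iota^* \tilde G^- f \otimes \iota^* \tilde G^- h)$ and its $T=0$ analogue $\omega^{\widetilde M}_2$ on the unphysical spacetime, the relation $(\tilde\Box \otimes \tilde\Box) \omega^{\widetilde M}_{2,T} = \omega^{\mathscr{I}}_{2,T} \circ (\iota^* \otimes \iota^*)$ together with the propagation of singularities theorem yields
\begin{equation*}
  \mathrm{WF}(\omega^{\widetilde M}_{2,T}) \setminus \mathrm{WF}(\omega^{\widetilde M}_2) \subseteq \mathrm{WF}\bigl((\omega^{\mathscr{I}}_{2,T} - \omega^{\mathscr{I}}_2) \circ (\iota^* \tilde G^- \otimes \iota^* \tilde G^-)\bigr) = \emptyset,
\end{equation*}
and the reverse inclusion is analogous. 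Restricting to test one-forms compactly supported in $\psi(M)$, the same argument relying on conformal invariance of null geodesics and the absence of null geodesics from points of $\psi(M)$ to $i^+$ shows that $\mathrm{WF}(\omega^M_{2,T}) = \mathrm{WF}(\omega^M_2)$, which is of the form prescribed in definition \ref{WF}.

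The main obstacle in this program is the smoothness claim for the thermal correction: one must verify that the kernel of $\omega^{\mathscr{I}}_{2,T} - \omega^{\mathscr{I}}_2$ really is regular in all four variables $(u, u', \theta, \varphi)$ of $\mathscr{I}^+ \times \mathscr{I}^+$ and compatible with the pull-back by $\iota^* \tilde G^-$. Once this is secured, everything else is a direct adaptation of the $T=0$ proof, and positivity plus the CCR have already been handled in the previous proposition.
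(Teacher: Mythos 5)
Your handling of quasi-freeness and your mode computation of the thermal correction are correct and match the paper, but there is a genuine gap in the Hadamard part: the claim that $\omega^{\mathscr{I}}_{2,T}-\omega^{\mathscr{I}}_2$ ``is represented by a smooth integral kernel on $\mathscr{I}^+\times\mathscr{I}^+$ and has empty wavefront set'' is false. In \eqref{2ptKMS} the pairing $\big(\widehat{f}(k,\theta,\varphi),\widehat{h}(-k,\theta,\varphi)\big)_{\mathscr{I}}$ is taken at coincident angles and integrated over a single copy of $\mathbb{S}^2$, so the angular part of the kernel is a delta distribution on the diagonal of $\mathbb{S}^2\times\mathbb{S}^2$, not ``constant''. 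The rapid decay of the Bose factor only regularizes the $u$-directions: what one actually gets is that any point $(x,x',k,k')$ in the wavefront set of the difference has $k,k'$ with vanishing components along the causal direction, while purely angular singular directions a priori survive. Your inclusion $\mathrm{WF}(\omega^{\widetilde M}_{2,T})\setminus\mathrm{WF}(\omega^{\widetilde M}_2)\subseteq\emptyset$ therefore does not follow as stated. (A secondary inaccuracy: $|k|/(e^{|k|/T}-1)$ is continuous but not smooth at $k=0$ -- it is the even reflection of the analytic function $k/(e^{k/T}-1)$ and has a corner there -- so its inverse Fourier transform is smooth only thanks to the exponential decay at large $|k|$, and it is not Schwartz.)

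The missing idea, which is precisely how the paper closes the argument, is to exploit that the bulk difference $\Delta_T\doteq\omega^M_{2,T}-\omega^M_2$ is a bisolution of the hyperbolic dynamics \eqref{dynamics}. The propagation of singularities theorem then confines $\mathrm{WF}(\Delta_T)$ to pairs of null covectors, and the only null covectors compatible with the vanishing of the causal components forced by the rapid decay of $|k|/(e^{|k|/T}-1)$ are the zero ones; hence $\mathrm{WF}(\Delta_T)=\emptyset$, $\Delta_T$ is smooth, and $\omega^M_{2,T}$ inherits the wavefront set of $\omega^M_2$, which is Hadamard by the preceding theorem. With this step inserted in place of the boundary smoothness claim, your argument becomes essentially the paper's; without it, the angular singularities are not controlled.
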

\begin{proof}
  Since $\omega^{\mathscr{I}}_T$ is a quasi-free state, $\omega^M_T$ enjoys, per construction, the very same property.
  In order to prove that it is Hadamard, we can focus on the associated two-point function $\omega^M_{2,T}$ and we introduce as an auxiliary tool
  \begin{equation*}
    \Delta_T(f \otimes h) \doteq \omega^M_{2,T}(f \otimes h) - \omega^M_2(f \otimes h),
  \end{equation*}
  for all $f, h \in \Omega_{0,\delta}^1(M)$.
  In term of modes this last expression reads:
  \begin{equation*}
    \Delta_T(f,h) = \frac{1}{\pi} \int\limits_{\mathbb{R} \times \mathbb{S}^2} \frac{|k|}{e^{\frac{|k|}{T}}-1} \big( \widehat{b(f)}(k,\theta,\varphi), \widehat{b(h)}(-k,\theta,\varphi) \big)_{\mathscr{I}}\, dk\, d\mathbb{S}^2(\theta,\varphi),
  \end{equation*}
  where $b$ is the bulk-to-boundary projection introduced in theorem \ref{main}.
  Since the prefactor $|k| / (e^{|k| / T} - 1)$ is bounded and both $b(f)$ and $b(h)$ are square-integrable on $\mathscr{I}^+$ as per theorem \ref{main}, $\Delta_T(f,h)$ is a well-defined bidistribution.
  Moreover the prefactor decays faster than any power of $|k|$ and thus, if $(x, x', k, k')$ is a point in $\mathrm{WF}(\Delta_T)$, $k$ and $k'$ must have vanishing component along causal directions.
  As $\Delta_T$ is also a bisolution of the dynamics \eqref{dynamics}, the propagation of singularities theorem already implies that $\Delta_T$ has vanishing wavefront set, thus it must be smooth.
  Therefore $\omega^M_T$ is a Hadamard state.
\end{proof}

\section{Conclusions}\label{sec:conclusions}

The main goal of this paper is the explicit construction of physically sensible quantum states for the vector potential on a large class of curved backgrounds.
Up to now, this has been a rather elusive task on account of the gauge invariance of the underlying physical system which is known to create problems already in Minkowski spacetime.

As a first step we developed an algebraic quantization scheme for the vector potential on globally hyperbolic spacetimes following \cite{dimock:1992ea,fewster:2003or,pfenning:2009ti}.
The underlying classical dynamics is ruled by a normally hyperbolic operator once we work with gauge equivalence classes.
Thus, in principle, we could follow the standard procedure for bosonic field theories which calls for constructing a unital $*$-algebra, the field algebra, associated to the space of solutions of the equations of motion.
To this avail, one has to prove that such a space can be endowed with a weakly non-degenerate symplectic form.
Extending the result of \cite{dappiaggi:2011rt}, we proved the existence of a well-defined symplectic form for the vector potential if either the first or the second de Rham cohomology group of the underlying background is trivial.
Nevertheless, if these topological obstructions are not present, we have also shown that the vector potential can be interpreted as a locally covariant conformal quantum field theory in the sense of \cite{pinamonti:2009zn}.
Within this respect, one might wonder whether the failure of general local covariance in the general scenario could be ascribed to the fact that electromagnetism should be best understood as a particular case of a Yang-Mills theory.
In this case one would read Maxwell's equations as a theory of connections on a principal $U(1)$-bundle and the definition of a generally locally covariant quantum field theory would have to be modified.
In particular the category of globally hyperbolic spacetimes would have to be replaced by that of principal $U(1)$-bundles.
This line of reasoning will be pursued in the following papers \cite{BDS, Sanders:2012sf}.

In the second step of our analysis we explicitly construct an algebraic state of Hadamard form by means of the bulk-to-boundary reconstruction technique on asymptotically flat spacetimes first introduced in \cite{dappiaggi:2006cg}.
This method calls for identifying the bulk algebra of observables as a $*$-subalgebra of a second algebra, intrinsically built on null infinity.
The advantage of this point of view is that each state on $\mathscr{I}^+$ automatically identifies a bulk counterpart.
Since $\mathscr{I}^+$ is a three-dimensional null differentiable manifold endowed with an infinite dimensional symmetry group, the BMS group, it is possible to identify a distinguished state $\omega^{\mathscr{I}}$ on the boundary algebra.
The bulk counterpart $\omega^M$ turns out to enjoy several interesting properties.
Most notably we proved that it is invariant under the action of all isometries and that it is of Hadamard form.
Furthermore, along the lines of \cite{dappiaggi:2011ya}, we have shown that it is possible to slightly modify the form of $\omega^{\mathscr{I}}$ to construct a family of states $\omega^{\mathscr{I}}_T$ which fulfil an exact KMS condition on $\mathscr{I}^+$ with respect to the translations along the null direction.
Analogously to $\omega^{\mathscr{I}}$, also $\omega^{\mathscr{I}}_T$ induces for all $T \geq 0$ a bulk state $\omega^M_T$.
Although we cannot expect that $\omega^M_T$ is a KMS state unless the underlying background is static and thus possesses a timelike Killing field, it turns out that such a bulk state still enjoys the Hadamard property.
Therefore these states are natural candidates to deal on curved backgrounds with physical phenomena which admit a ``thermodynamical" interpretation.

Recall that we used a complete gauge fixing procedure to construct a positive state. In comparison with the past literature, \cite{strocchi:1967fk,strocchi:1970kx} in particular, this is the key difference.
While in the cited papers the Lorenz gauge condition is imposed at a Hilbert space level, in our approach it is implemented directly in the algebra along the lines first suggested by Dimock \cite{dimock:1992ea}.
Yet, in order to fix a state on the boundary and to ensure its positivity, we had to cancel one additional degree of freedom.
This has been achieved by exploiting in proposition \ref{gaugefix} the residual gauge freedom to impose the $\mathscr{I}$-gauge.
It is noteworthy that our procedure is fully covariant on account of the universal boundary structure and it induces in the bulk a Hadamard state which coincides with the standard one in Minkowski spacetime.
We expect no major obstacles repeating a similar bulk-to-boundary construction utilizing an indefinite metric ansatz similar to the Gupta-Bleuler formalism.
A detailed analysis of the relation with this method is beyond the scopes of this paper, although some preliminary remarks can be found in \cite{siemssen:2011fk}.
A state constructed via the Gupta-Bleuler scheme might be useful in treating interacting quantum field theories perturbatively.
Connecting the Gupta-Bleuler method with the bulk-to-boundary procedure would however require a vector bundle on $\mathscr{I}$ which is not naturally intrinsic to the null boundary, namely the pull-back bundle of the $4$-dimensional cotangent bundle of the unphysical spacetime.

In terms of future perspectives, we envisage that our results could be used at least in two different directions.
First of all, one can extend the whole construction to the same class of cosmological spacetimes studied in \cite{dappiaggi:2011ya} and then discuss from a mathematically rigorous point of view the role of the vector potential in the analysis of semiclassical Einstein's equation.
This potential project is closely related to the second one, namely the construction of the extended algebra of Wick polynomials for the vector potential and the associated computation of the trace anomaly for the associated stress energy tensor along the lines of the same analysis for Dirac fields as in \cite{dappiaggi:2009er,hack:2010bl}.

\paragraph{Acknowledgements.}
C. D. gratefully acknowledges financial support from the University of Pavia.
D. S. wishes to thank the Department of Theoretical and Nuclear Physics of the University of Pavia for the kind hospitality during the drawing up of this paper.
The stay has been supported by a PROMOS scholarship of the DAAD.
The present work is based on the results of the Diploma thesis of D. S. \cite{siemssen:2011fk}.
We are grateful to Klaus Fredenhagen, Thomas-Paul Hack, Valter Moretti, Nicola Pinamonti, Julien Qu\'eva and Jochen Zahn for useful discussions.


\footnotesize


\begin{thebibliography}{10}

\bibitem{bar:2007fi}
B{\"a}r, C., Ginoux, N., Pf{\"a}ffle, F.:
  \href{http://dx.doi.org/10.4171/037}{\emph{Wave Equations on Lorentzian
  Manifolds and Quantization}}.
\newblock ESI Lectures in Mathematical Physics. Z{\"u}rich: European
  Mathematical Society, (2007).
\newblock \href{http://arxiv.org/abs/0806.1036}{{\ttfamily arXiv:0806.1036
  [math.DG]}}

\bibitem{BDS}
  Benini, M., Dappiaggi, C., Schenkel, A.: Quantum Field Theory on Affine Bundles.
\newblock \href{http://arxiv.org/abs/1210.3457}{{\ttfamily  arXiv:1210.3457 [math-ph]}}, to appear in Annales Henri Poincare

\bibitem{bernal:2003jf}
Bernal, A.N., S{\'a}nchez, M.: On Smooth {Cauchy} Hypersurfaces and {Geroch's}
  Splitting Theorem.
\newblock \href{http://dx.doi.org/10.1007/s00220-003-0982-6}{Communications in
  Mathematical Physics \textbf{243}, 461--470 (2003)}.
\newblock \href{http://arxiv.org/abs/gr-qc/0306108}{{\ttfamily
  arXiv:gr-qc/0306108}}

\bibitem{bernal:2006oq}
Bernal, A.N., S{\'a}nchez, M.: Further Results on the Smoothability of Cauchy
  Hypersurfaces and {Cauchy} Time Functions.
\newblock \href{http://dx.doi.org/10.1007/s11005-006-0091-5}{Letters in
  Mathematical Physics \textbf{77}, 183--197 (2006)}.
\newblock \href{http://arxiv.org/abs/gr-qc/0512095}{{\ttfamily
  arXiv:gr-qc/0512095}}

\bibitem{bleuler:1950fk}
Bleuler, K.: {Eine neue Methode zur Behandlung der longitudinalen und skalaren
  Photonen}.
\newblock \href{http://dx.doi.org/10.5169/seals-112124}{Helvetica Physica Acta
  \textbf{23}, 567--586 (1950)}

\bibitem{bott:1982hz}
Bott, R., Tu, L.W.: \emph{Differential Forms in Algebraic Topology}.
\newblock New York: Springer, (1982)

\bibitem{bratelli:1997nk}
Bratelli, O., Robinson, D.W.: \emph{Operator Algebras and Quantum Statistical
  Mechanics: Equilibrium States, Models in Quantum Statistical Mechanics}.
\newblock Berlin, Heidelberg: Springer, 2nd edn., (1997)

\bibitem{brunetti:2003vo}
Brunetti, R., Fredenhagen, K., Verch, R.: The Generally Covariant Locality
  Principle -- A New Paradigm for Local Quantum Field Theory.
\newblock \href{http://dx.doi.org/10.1007/s00220-003-0815-7}{Communications in
  Mathematical Physics \textbf{237}, 31--68 (2003)}.
\newblock \href{http://arxiv.org/abs/math-ph/0112041}{{\ttfamily
  arXiv:math-ph/0112041}}

\bibitem{dappiaggi:2011rt}
Dappiaggi, C.: Remarks on the {Reeh}-{Schlieder} Property for Higher Spin Free
  Fields on Curved Spacetimes.
\newblock \href{http://dx.doi.org/10.1142/S0129055X11004515}{Reviews in
  Mathematical Physics \textbf{23}, 1035--1062 (2011)}.
\newblock \href{http://arxiv.org/abs/1102.5270}{{\ttfamily arXiv:1102.5270
  [math-ph]}}

\bibitem{dappiaggi:2009er}
Dappiaggi, C., Hack, T.P., Pinamonti, N.: The Extended Algebra of Observables
  for {Dirac} Fields and the Trace Anomaly of Their Stress-Energy Tensor.
\newblock \href{http://dx.doi.org/10.1142/S0129055X09003864}{Reviews in
  Mathematical Physics \textbf{21}, 1241--1312 (2009)}.
\newblock \href{http://arxiv.org/abs/0904.0612}{{\ttfamily arXiv:0904.0612
  [math-ph]}}

\bibitem{dappiaggi:2011ya}
Dappiaggi, C., Hack, T.P., Pinamonti, N.: Approximate {KMS} States for Scalar
  and Spinor Fields in {Friedmann}-{Robertson}-{Walker} Spacetimes.
\newblock \href{http://dx.doi.org/10.1007/s00023-011-0111-6}{Annales Henri
  Poincare \textbf{12}, 1449--1489 (2011)}.
\newblock \href{http://arxiv.org/abs/1009.5179}{{\ttfamily arXiv:1009.5179
  [gr-qc]}}

\bibitem{dappiaggi:2011yq}
Dappiaggi, C., Lang, B.: Quantization of {Maxwell's} Equations on Curved
  Backgrounds and General Local Covariance.
\newblock \href{http://dx.doi.org/10.1007/s11005-012-0571-8}{Letters in
  Mathematical Physics \textbf{101}, 265--287 (2012)}.
\newblock \href{http://arxiv.org/abs/1104.1374}{{\ttfamily arXiv:1104.1374
  [gr-qc]}}

\bibitem{dappiaggi:2006cg}
Dappiaggi, C., Moretti, V., Pinamonti, N.: Rigorous Steps Towards Holography in
  Asymptotically Flat Spacetimes.
\newblock \href{http://dx.doi.org/10.1142/S0129055X0600270X}{Reviews in
  Mathematical Physics \textbf{18}, 349--415 (2006)}.
\newblock \href{http://arxiv.org/abs/gr-qc/0506069}{{\ttfamily
  arXiv:gr-qc/0506069}}

\bibitem{dappiaggi:2009th}
Dappiaggi, C., Moretti, V., Pinamonti, N.: Cosmological Horizons and
  Reconstruction of Quantum Field Theories.
\newblock \href{http://dx.doi.org/10.1007/s00220-008-0653-8}{Communications in
  Mathematical Physics \textbf{285}, 1129--1163 (2009)}.
\newblock \href{http://arxiv.org/abs/0712.1770}{{\ttfamily arXiv:0712.1770
  [gr-qc]}}

\bibitem{dappiaggi:2009fk}
Dappiaggi, C., Moretti, V., Pinamonti, N.: Distinguished Quantum States in a
  Class of Cosmological Spacetimes and Their {Hadamard} Property.
\newblock \href{http://dx.doi.org/10.1063/1.3122770}{Journal of Mathematical
  Physics \textbf{50}, 062304 (2009)}.
\newblock \href{http://arxiv.org/abs/0812.4033}{{\ttfamily arXiv:0812.4033
  [gr-qc]}}

\bibitem{dappiaggi:2011kx}
Dappiaggi, C., Moretti, V., Pinamonti, N.: Rigorous Construction and {Hadamard}
  Property of the {Unruh} State in {Schwarzschild} Spacetime.
\newblock Advances in Theoretical and Mathematical Physics \textbf{15}, 1--93
  (2011).
\newblock \href{http://arxiv.org/abs/0907.1034}{{\ttfamily arXiv:0907.1034
  [gr-qc]}}

\bibitem{dimock:1992ea}
Dimock, J.: Quantized Electromagnetic Field on a Manifold.
\newblock \href{http://dx.doi.org/10.1142/S0129055X92000078}{Reviews in
  Mathematical Physics \textbf{4}, 223--233 (1992)}

\bibitem{fewster:2003or}
Fewster, C.J., Pfenning, M.J.: A Quantum Weak Energy Inequality for Spin-One
  Fields in Curved Space--Time.
\newblock \href{http://dx.doi.org/10.1063/1.1602554}{Journal of Mathematical
  Physics \textbf{44}, 4480--4513 (2003)}.
\newblock \href{http://arxiv.org/abs/gr-qc/0303106}{{\ttfamily
  arXiv:gr-qc/0303106}}

\bibitem{fredenhagen:2011fk}
Fredenhagen, K., Rejzner, K.: {Batalin}-{Vilkovisky} Formalism in Perturbative
  Algebraic Quantum Field Theory, (2011).
\newblock \href{http://arxiv.org/abs/1110.5232}{{\ttfamily arXiv:1110.5232
  [math-ph]}}

\bibitem{friedrich:1986uq}
Friedrich, H.: On Purely Radiative Space-Times.
\newblock \href{http://dx.doi.org/10.1007/BF01464281}{Communications in
  Mathematical Physics \textbf{103}, 35--65 (1986)}

\bibitem{fulling:1981fm}
Fulling, S.A., Narcowich, F.J., Wald, R.M.: Singularity Structure of the
  Two-Point Function in Quantum Field Theory in Curved Spacetime, {II}.
\newblock \href{http://dx.doi.org/10.1016/0003-4916(81)90098-1}{Annals of
  Physics \textbf{136}, 243--272 (1981)}

\bibitem{furlani:1995kx}
Furlani, E.P.: Quantization of the Electromagnetic Field on Static Space-Times.
\newblock \href{http://dx.doi.org/10.1063/1.531106}{Journal of Mathematical
  Physics \textbf{36}, 1063--1079 (1995)}

\bibitem{geroch:1977fv}
Geroch, R.: Asymptotic Structure of Space-Time.
\newblock In: F.P. Esposito, L.~Witten (eds.) \emph{Asymptotic Structure of
  Space-Time}, pp. 1--105. New York: Plenum Press, (1977)

\bibitem{gupta:1950fk}
Gupta, S.N.: Theory of Longitudinal Photons in Quantum Electrodynamics.
\newblock \href{http://dx.doi.org/10.1088/0370-1298/63/7/301}{Proceedings of
  the Physical Society. Section A \textbf{63}, 681--691 (1950)}

\bibitem{haag:1996zi}
Haag, R.: \emph{Local Quantum Physics: Fields, Particles, Algebras}.
\newblock Berlin, Heidelberg: Springer, 2nd edn., (1996)

\bibitem{hack:2010bl}
Hack, T.P.: On the Backreaction of Scalar and Spinor Quantum Fields in Curved
  Spacetimes.
\newblock Ph.D. thesis, Universit{\"a}t Hamburg, (2010).
\newblock \href{http://arxiv.org/abs/1008.1776}{{\ttfamily arXiv:1008.1776
  [gr-qc]}}

\bibitem{hollands:2008zr}
Hollands, S.: Renormalized Quantum {Yang}-{Mills} Fields in Curved Spacetime.
\newblock \href{http://dx.doi.org/10.1142/S0129055X08003420}{Reviews in
  Mathematical Physics \textbf{20}, 1033--1172 (2008)}.
\newblock \href{http://arxiv.org/abs/0705.3340}{{\ttfamily arXiv:0705.3340
  [gr-qc]}}

\bibitem{hollands:2001ce}
Hollands, S., Wald, R.M.: Local {Wick} Polynomials and Time Ordered Products of
  Quantum Fields in Curved Spacetime.
\newblock \href{http://dx.doi.org/10.1007/s002200100540}{Communications in
  Mathematical Physics \textbf{223}, 289--326 (2001)}.
\newblock \href{http://arxiv.org/abs/gr-qc/0103074}{{\ttfamily
  arXiv:gr-qc/0103074}}

\bibitem{hormander:1990bl}
H{\"o}rmander, L.: \emph{The Analysis of Linear Partial Differential Operators
  I: Distribution Theory and Fourier Analysis},
  \emph{Grundlehren der mathematischen Wissenschaften}.
\newblock Berlin, Heidelberg: Springer, 2nd edn., (1990)

\bibitem{lang:2010hy}
Lang, B.: {Homologie und die Feldalgebra des quantisierten Maxwellfeldes}.
\newblock {Diploma thesis}, Universit{\"a}t Freiburg, (2010).
\newblock \url{http://www.desy.de/uni-th/theses/Dipl_Lang.pdf}

\bibitem{moretti:2006ap}
Moretti, V.: {Uniqueness Theorem for {BMS}-Invariant States of Scalar {QFT} on
  the Null Boundary of Asymptotically Flat Spacetimes and Bulk-Boundary
  Observable Algebra Correspondence}.
\newblock \href{http://dx.doi.org/10.1007/s00220-006-0107-0}{Communications in
  Mathematical Physics \textbf{268}, 727--756 (2006)}.
\newblock \href{http://arxiv.org/abs/gr-qc/0512049}{{\ttfamily
  arXiv:gr-qc/0512049}}

\bibitem{moretti:2008ay}
Moretti, V.: Quantum Out-States Holographically Induced by Asymptotic Flatness:
  Invariance under Spacetime Symmetries, Energy Positivity and {Hadamard}
  Property.
\newblock \href{http://dx.doi.org/10.1007/s00220-008-0415-7}{Communications in
  Mathematical Physics \textbf{279}, 31--75 (2008)}.
\newblock \href{http://arxiv.org/abs/gr-qc/0610143}{{\ttfamily
  arXiv:gr-qc/0610143}}

\bibitem{pfenning:2009ti}
Pfenning, M.J.: Quantization of the {Maxwell} Field in Curved Spacetimes of
  Arbitrary Dimension.
\newblock \href{http://dx.doi.org/10.1088/0264-9381/26/13/135017}{Classical and
  Quantum Gravity \textbf{26}, 135017 (2009)}.
\newblock \href{http://arxiv.org/abs/0902.4887}{{\ttfamily arXiv:0902.4887
  [math-ph]}}

\bibitem{pinamonti:2009zn}
Pinamonti, N.: Conformal Generally Covariant Quantum Field Theory: The Scalar
  Field and its {Wick} Products.
\newblock \href{http://dx.doi.org/10.1007/s00220-009-0780-x}{Communications in
  Mathematical Physics \textbf{288}, 1117--1135 (2009)}.
\newblock \href{http://arxiv.org/abs/0806.0803}{{\ttfamily arXiv:0806.0803
  [math-ph]}}

\bibitem{queva:2009pd}
Qu{\'e}va, J.: Interaction en espace-temps de de {Sitter}.
\newblock Ph.D. thesis, Universit{\'e} Paris Diderot - Paris 7, (2009).
\newblock \url{http://tel.archives-ouvertes.fr/tel-00503186/fr/}

\bibitem{radzikowski:1996ug}
Radzikowski, M.J.: Micro-Local Approach to the {Hadamard} Condition in Quantum
  Field Theory on Curved Space-Time.
\newblock \href{http://dx.doi.org/10.1007/BF02100096}{Communications in
  Mathematical Physics \textbf{179}, 529--553 (1996)}

\bibitem{radzikowski:1996ul}
Radzikowski, M.J., Verch, R.: A Local-to-Global Singularity Theorem for Quantum
  Field Theory on Curved Space-Time.
\newblock \href{http://dx.doi.org/10.1007/BF02101180}{Communications in
  Mathematical Physics \textbf{180}, 1--22 (1996)}

\bibitem{sahlmann:2000kl}
Sahlmann, H., Verch, R.: Passivity and Microlocal Spectrum Condition.
\newblock \href{http://dx.doi.org/10.1007/s002200000297}{Communications in
  Mathematical Physics \textbf{214}, 705--731 (2000)}.
\newblock \href{http://arxiv.org/abs/math-ph/0002021}{{\ttfamily
  arXiv:math-ph/0002021}}

\bibitem{sahlmann:2001sr}
Sahlmann, H., Verch, R.: Microlocal Spectrum Condition and {Hadamard} Form for
  Vector-Valued Quantum Fields in Curved Spacetime.
\newblock \href{http://dx.doi.org/10.1142/S0129055X01001010}{Reviews in
  Mathematical Physics \textbf{13}, 1203--1246 (2001)}.
\newblock \href{http://arxiv.org/abs/math-ph/0008029}{{\ttfamily
  arXiv:math-ph/0008029}}

\bibitem{sanders:2010gf}
Sanders, K.: Equivalence of the (Generalised) {Hadamard} and Microlocal
  Spectrum Condition for (Generalised) Free Fields in Curved Spacetime.
\newblock \href{http://dx.doi.org/10.1007/s00220-009-0900-7}{Communications in
  Mathematical Physics \textbf{295}, 485--501 (2010)}.
\newblock \href{http://arxiv.org/abs/0903.1021}{{\ttfamily arXiv:0903.1021
  [math-ph]}}

\bibitem{Sanders:2012sf}
  Sanders, K., Dappiaggi, C., Hack, T.-P.:
  Electromagnetism, Local Covariance, the Aharonov-Bohm Effect and Gauss' Law.
\newblock\href{http://arxiv.org/abs/1211.6420}{{\ttfamily arXiv:1211.6420 [math-ph]}}

\bibitem{siemssen:2011fk}
Siemssen, D.: Quantization of the Electromagnetic Potential in Asymptotically
  Flat Spacetimes.
\newblock {Diploma thesis}, Universit{\"a}t Hamburg, (2011).
\newblock \url{http://www.desy.de/uni-th/theses/Dipl_Siemssen.pdf}

\bibitem{strocchi:1967fk}
Strocchi, F.: Gauge Problem in Quantum Field Theory.
\newblock \href{http://dx.doi.org/10.1103/PhysRev.162.1429}{Physical Review
  \textbf{162}, 1429--1438 (1967)}

\bibitem{strocchi:1970kx}
Strocchi, F.: Gauge Problem in Quantum Field Theory. {III}. {Quantization} of
  {Maxwell} Equations and Weak Local Commutativity.
\newblock \href{http://dx.doi.org/10.1103/PhysRevD.2.2334}{Physical Review D
  \textbf{2}, 2334--2340 (1970)}

\bibitem{wald:1984gp}
Wald, R.M.: \emph{General Relativity}.
\newblock Chicago: The University of Chicago Press, (1984)

\end{thebibliography}
\end{document}